\title[Critical temperatures and collapsing of two-dimensional Log gases]{Critical temperatures and collapsing of two-dimensional Log gases}
\author{Rolf Andreasson \& Ludvig Svensson}
\keywords{Coulomb gas, Log gas, One-component plasma, Two-component plasma}
\address{Rolf Andreasson, Department of Mathematical Sciences, University of Gothenburg and 
Chalmers University of Technology, SE-412 96 G\"{o}teborg, Sweden}
\email{rolfan@chalmers.se}
\address{Ludvig Svensson, Department of Mathematical Sciences, University of Gothenburg and 
Chalmers University of Technology, SE-412 96 G\"{o}teborg, Sweden}
\email{ludsven@chalmers.se}
\date{\today}
\crefname{figure}{Figure}{Figures}
\crefname{equation}{}{}
\crefname{table}{Table}{Tables}
\crefname{section}{Section}{Sections}
\crefname{appendix}{Appendix}{Appendices}
\crefname{corollary}{corollary}{corollaries}
\crefname{proposition}{proposition}{propositions}
\crefname{theorem}{theorem}{theorems}
\crefname{lemma}{lemma}{lemmas}
\crefname{definition}{definition}{definition}
\crefname{problem}{problem}{problems}
\crefname{example}{example}{examples}
\newtheorem{proposition}{Proposition}[section]
\newtheorem{theorem}[proposition]{Theorem}
\newtheorem{lemma}[proposition]{Lemma}
\newtheorem{corollary}[proposition]{Corollary}
\theoremstyle{definition}
\newtheorem{example}[proposition]{Example}
\newtheorem{remark}[proposition]{Remark}
\newtheorem{problem}[proposition]{Problem}
\numberwithin{equation}{section}
\newcommand{\C}{\mathbb{C}}
\renewcommand{\d}{\mathrm{d}}
\def\newop#1{\expandafter\def\csname #1\endcsname{\mathop{\rm #1}\nolimits}}
\begin{document}
\nocite{*}
\bibliographystyle{plain}

\begin{abstract}
    We consider the canonical ensemble of a system of point particles on the sphere interacting via a logarithmic pair potential. In this setting, we study the associated Gibbs measure and partition function, and we derive explicit formulas relating the critical temperature, at which the partition function diverges, to a certain discrete optimization problem. We further show that the asymptotic behavior of both the partition function and the Gibbs measure near the critical temperature is governed by the same optimization problem. Our approach relies on the Fulton--MacPherson compactification of configuration spaces and analytic continuation of complex powers. To illustrate the results, we apply them to well-studied systems, including the two-component plasma and the Onsager model of turbulence. In particular, for the two-component plasma with general charges, we describe the formation of dipoles close to the critical temperature, which we determine explicitly. 
\end{abstract}

\maketitle
\thispagestyle{empty}

\section{Introduction}
Consider the following interaction energy for $N(\geq 2)$ point particles at positions $p_1,\dots,p_N$ on the unit sphere $\mathbb{S}^2$:
\begin{equation}
    \label{eq:energy the sphere}
    E(p_1,\dots,p_N)= -\sum_{i<j} c(i,j) \log d(p_i,p_j)^{2},
\end{equation}
where $d(\cdot,\cdot)$ denotes the chordal distance on $\mathbb{S}^2$ and $c(i,j)$, for $1\leq i<j\leq N$, are real numbers specifying the coupling between particles $i$ and $j$. The prototypical example is $c(i,j)=k_ik_j$, where $k_i$ represents the charge of particle~$i$. 

We consider the corresponding canonical ensemble at inverse temperature~$\beta$. The associated Gibbs measure on $(\mathbb{S}^2)^N$, when it exists, is given by
\begin{equation}
    \label{eq: gibbs measure}
    \begin{aligned}
        \mu_{\beta} &= \frac{1}{\mathcal{Z}(\beta)}\exp(-\beta (E(p_1,\dots,p_N))\,\d V^{\otimes N} \\
        &= \frac{1}{\mathcal{Z}(\beta)}\prod_{i< j} d(p_i,p_j)^{2c(i,j)\beta} \d V^{\otimes N}.
    \end{aligned}
\end{equation}
This measure exists for all values of $\beta$ such that the \textit{partition function}
\begin{equation}
    \label{eq: partition function intro}
    \mathcal{Z}(\beta) = \int_{(\mathbb{S}^2)^N}\prod_{i< j} d(p_i,p_j)^{2c(i,j)\beta}\d V^{\otimes N}
\end{equation}
is finite. Here $\mathrm{d}V$ is the symmetric, normalized volume form on the sphere. In general, there are \textit{critical values} of $\beta$ at which $\mathcal{Z}(\beta)$ becomes infinite, causing the Gibbs measure to cease to exist. This phenomenon is highly dependent on the logarithmic singularity of the pair-interaction. In contrast, for Coulomb gases in other dimensions, with singular pair-interaction governed by the corresponding Green's function to the Laplace equation, no similar behavior occurs; the Gibbs measure either exists for all temperatures or for none. 

This model encompasses several important and well-studied particle systems in physics, including one- and two-component (and, more generally, $n$-component) plasmas on the sphere. Such two-dimensional plasmas, in various geometries, have been proposed as simplified models for dimensionally reduced, translation-invariant three-dimensional plasmas, see \cite{Ki,Ta}. Furthermore, the same framework also describes systems of quasi-particles such as vortices. Vortices described by a Log gas arise in several different contexts, including the Ginzburg--Landau model of superconductivity, see \cite{SaSe}, the analysis of the Berezinskii--Kosterlitz--Thouless transition for the XY model, see \cite{KoTh}, and the Onsager model of turbulence for two-dimensional Eulerian fluids, see, e.g., \cite{O}. In the latter case, as well as in related quasi-particles systems such as the guiding center plasma, see \cite{ET}, both positive and negative values of the inverse temperature are of interest. Hence, we will treat $\beta$ as an arbitrary real parameter. 

\medskip

Before presenting our results, we introduce two discrete optimization problems that are central to all of them. 
\begin{problem}
    \label{prob: max}
    Given an integer $N$ and real numbers $c(i,j), 1\leq i<j\leq N$, consider the minimization problem
    \begin{equation*}
        T^+\coloneq-\min_S \frac{\sum_{i,j \in S:i<j} c(i,j)}{|S|-1},
    \end{equation*}
    where $S$ ranges over all subsets of $\{1,\dots,N\}$ of size at least two. 
\end{problem}
\begin{problem}
    \label{prob: min}
    Given an integer $N$ and real numbers $c(i,j), 1\leq i<j\leq N$, consider the maximization problem
    \begin{equation*}
        T^-\coloneq-\max_S \frac{\sum_{i,j \in S:i<j} c(i,j)}{|S|-1},
    \end{equation*}
    where $S$ ranges over all subsets of $\{1,\dots,N\}$ of size at least two. 
\end{problem}
Variants of these problems have appeared in the literature in other contexts, as discussed in \Cref{sec: other problems}. Our first result is an explicit formula for the critical inverse temperatures, given in terms of the optimization problems above.
\begin{theorem}
    \label{thm:1}
    The partition function $\mathcal{Z}(\beta)$ is finite precisely on the interval $(\beta^-,\beta^+)$ where 
    \begin{equation}
        \label{eq: crit inv temp minus}
        \beta^-=\begin{cases}
            1/T^-\text{ if }T^-<0\\
            -\infty \text{ otherwise}
        \end{cases}
    \end{equation}
    and
    \begin{equation*}
        \label{eq: crit inv temp plus}
        \beta^+=
        \begin{cases}
            1/T^+\text{ if }T^+>0\\
            \infty \text{ otherwise.}
        \end{cases}
    \end{equation*}
\end{theorem}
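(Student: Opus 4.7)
My strategy is to establish two-sided bounds: divergence of $\mathcal{Z}(\beta)$ whenever some subset $S$ of size at least two violates the exponent inequality $2\beta c_S + 2(|S|-1)>0$ (with $c_S := \sum_{i<j,\,i,j\in S} c(i,j)$), and convergence when all such inequalities hold. A short calculation shows that this family of inequalities is equivalent to $\beta\in(\beta^-,\beta^+)$. A standard first reduction uses compactness of $(\mathbb{S}^2)^N$, a finite geodesic atlas, and a subordinate partition of unity to localize the question near the fat diagonal $\Delta$, where $d(p_i,p_j)$ is comparable to the Euclidean distance $|p_i-p_j|$ up to a smooth positive factor.

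For necessity, suppose $T^+>0$ and $\beta\ge\beta^+$. Pick $S^{\ast}$ achieving the minimum in \Cref{prob: max}; then $c_{S^\ast}<0$ and $-(|S^\ast|-1)/c_{S^\ast}=1/T^+$. In a single chart, restrict to the region where the points $p_i$ for $i\in S^\ast$ lie in a disk of radius $\epsilon_0$ about a fixed $p_0$, while the remaining points lie in a fixed open set separated from $p_0$; restrict further to configurations in which, after rescaling $q_i=(p_i-p_0)/\epsilon$, the shape $\hat q$ is confined to a compact set with all $|\hat q_i-\hat q_j|$ bounded below. On this region the integrand is bounded below by $C\epsilon^{2\beta c_{S^\ast}}$, and polar integration of the rescaled positions in $\mathbb{R}^{2(|S^\ast|-1)}$ yields
\begin{equation*}
    \mathcal{Z}(\beta)\ge C'\int_0^{\epsilon_0}\epsilon^{\,2\beta c_{S^\ast}+2(|S^\ast|-1)-1}\,\d\epsilon,
\end{equation*}
which diverges precisely when $\beta\ge\beta^+$. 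The symmetric case $\beta\le\beta^-$ uses \Cref{prob: min} in place of \Cref{prob: max}.

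For sufficiency, assume $\beta\in(\beta^-,\beta^+)$ and pull the integrand back to the Fulton--MacPherson compactification $\overline{X}$ of the open configuration space on $\mathbb{S}^2$. Its boundary strata are indexed by nested families $\mathcal{F}$ of subsets of $\{1,\dots,N\}$ of size $\ge 2$, and near each stratum there are local coordinates including boundary-defining functions $r_S\ge 0$ ($S\in\mathcal{F}$). Two ingredients are needed: (a) in these coordinates, $d(p_i,p_j)=\bigl(\prod_{S\in\mathcal{F}:\,i,j\in S}r_S\bigr)\,h_{ij}$ with $h_{ij}$ smooth and strictly positive, so the total exponent on $r_S$ contributed by $\prod d(p_i,p_j)^{2c(i,j)\beta}$ equals $2\beta c_S$; and (b) the pullback of $\d V^{\otimes N}$ contains the monomial factor $\prod_{S\in\mathcal{F}} r_S^{2(|S|-1)-1}\,\d r_S$, the exponent being the codimension of the partial diagonal $\Delta_S\subset(\mathbb{R}^2)^N$ minus one for the radial coordinate. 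Combining (a) and (b), the pullback near the stratum has density $\prod_{S\in\mathcal{F}} r_S^{2\beta c_S+2(|S|-1)-1}$ times a smooth positive factor, and local integrability reduces to the inequalities $2\beta c_S+2(|S|-1)>0$ for all $S\in\mathcal{F}$. Since every $S$ with $|S|\ge 2$ appears in some nested family, the convergence condition on $\overline{X}$ amounts to this inequality for all such $S$, equivalent to $\beta\in(\beta^-,\beta^+)$.

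The principal technical step is establishing (a) and (b) in Fulton--MacPherson coordinates, which demands careful bookkeeping of how the iterated real blow-ups producing $\overline{X}$ transform both the pair-distances and the product volume form; beyond this, the reduction via partition of unity, the elementary necessity estimate, and the translation between the exponent inequalities and the description of $\beta^\pm$ are routine.
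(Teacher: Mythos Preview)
Your argument is correct and, for the sufficiency direction, follows the same route as the paper: pull back to the Fulton--MacPherson compactification and read off the order of the density along each exceptional divisor, reducing local integrability to the family of inequalities $\beta c_S+|S|-1>0$. Your items (a) and (b) are exactly the content of the paper's two preparatory lemmas computing $\pi^*(D_c)=\sum_S c_S E_S$ and $K_{(\mathbb{CP}^1)^{[N]}/(\mathbb{CP}^1)^N}=\sum_S(|S|-2)E_S$. Two differences are worth flagging. First, you work with the real manifold-with-corners version of the compactification and radial boundary-defining functions $r_S\ge 0$, while the paper uses the complex-algebraic compactification over $(\mathbb{CP}^1)^N$; the integrability criteria agree, but the complex setup is what the paper actually needs later for the meromorphic-continuation arguments behind \Cref{thm:2,thm:3}. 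Second, you supply a separate elementary scaling estimate for necessity, whereas the paper extracts both directions at once from the normal-crossings form of the pullback, since the one-variable conditions $2a_\ell\beta+2b_\ell>-2$ are simultaneously necessary and sufficient for local integrability. Your necessity argument is pleasantly self-contained and avoids any appeal to the compactification, but it becomes redundant once (a) and (b) are in hand.
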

When the couplings $c(i,j)$ are all positive, \cref{eq: crit inv temp minus} is an explicit formula for the \emph{log canonical threshold}, an important singularity invariant in algebraic geometry, of a certain effective divisor defined by $c(i,j)$, see \cref{eq:Dc} below. In this case, \Cref{thm:1} resembles results in \cite{Te} and \cite{Mu}, and our proof roughly follows the proof in \cite{Te}. 

Solving \Cref{prob: max,prob: min} can be quite challenging in practice. Nevertheless, \Cref{thm:1} provides a direct way to find explicit upper bounds for $\beta^+$ and lower bounds for $\beta^-$ by simply evaluating the expression for any chosen subset $S$. By relating these optimization problems to Rayleigh--Ritz quotients and the min-max theorem for eigenvalues of symmetric matrices, we show in \Cref{prop: bounds in terms of eigenvalues} how to obtain bounds in the opposite, more difficult direction. The bounds are in terms of eigenvalues of the symmetric matrix with entries $c(i,j)$. This leads to stochastic bounds in a quenched, random version of the Log gas, considered, for example, in \cite{KiWa}, where the couplings are assumed to be random variables. We consider both the case of independent couplings, see \Cref{prop: independent random coupling}, and that of independent charges, see \Cref{prop: independent random charges}, which exhibit markedly different behavior. 

The eigenvalue bounds also leads to the following explicit bounds in the key case when $c(i,j)=k_ik_j$, $1\leq i<j\leq N$, for real numbers $k_i$. 
\begin{corollary}
    \label{cor: explicit bounds in the charge case}
    Let $c(i,j)=k_ik_j,1\leq i<j\leq N$ for real numbers $k_i,1\leq i\leq N$. If $\beta^+$ is finite, then
    \begin{equation}
        \label{eq: bound on beta plus for charges}
        \beta^+ \geq \frac{1}{\max_{1\leq i \leq N}k_i^2}.
    \end{equation}
    If $\beta^-$ is finite, then
    \begin{equation}
        \label{eq: bound on beta minus for charges}
        \beta^- \leq  -\frac{1}{\big(\sum_{i=1}^N k_i^2\big)-\min_{1\leq i \leq N} k_i^2}.
    \end{equation}
\end{corollary}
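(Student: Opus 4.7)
The plan is to combine \Cref{thm:1} with the eigenvalue bounds of \Cref{prop: bounds in terms of eigenvalues}. By \Cref{thm:1}, the assumption that $\beta^+$ is finite is equivalent to $T^+>0$ with $\beta^+=1/T^+$, so \eqref{eq: bound on beta plus for charges} amounts to $T^+ \le \max_i k_i^2$. Likewise, finiteness of $\beta^-$ gives $\beta^-=1/T^-$ with $T^-<0$, and \eqref{eq: bound on beta minus for charges} is equivalent to $T^- \ge -\bigl(\sum_i k_i^2 - \min_i k_i^2\bigr)$. I would establish these two spectral bounds on $T^\pm$ through \Cref{prop: bounds in terms of eigenvalues}.

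The key computation is locating the extremal eigenvalues of the symmetric coupling matrix $C$ with entries $C_{ij}=k_ik_j$ for $i\neq j$ and $C_{ii}=0$. The idea is the rank-one-plus-diagonal decomposition $C = kk^T - D$ with $k=(k_1,\dots,k_N)^T$ and $D=\mathrm{diag}(k_1^2,\dots,k_N^2)$. Since $kk^T$ is positive semidefinite of rank $1$ with unique nonzero eigenvalue $\sum_i k_i^2$, Weyl's monotonicity inequality immediately yields
\[
\lambda_{\min}(C) \ge \lambda_{\min}(kk^T)+\lambda_{\min}(-D) = -\max_i k_i^2,
\]
\[
\lambda_{\max}(C) \le \lambda_{\max}(kk^T)+\lambda_{\max}(-D) = \sum_i k_i^2 - \min_i k_i^2.
\]
Feeding these eigenvalue estimates into \Cref{prop: bounds in terms of eigenvalues} produces precisely the required bounds on $T^\pm$, and the corollary follows from the initial reduction.

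I do not expect a serious obstacle. The only bookkeeping point is the mismatch between the Rayleigh quotient $v_S^TCv_S/v_S^Tv_S = \tfrac{2}{|S|}\sum_{i<j\in S}c(i,j)$ on indicator vectors and the $(|S|-1)$-normalized quantity in \Cref{prob: max,prob: min}. The two differ by the combinatorial factor $|S|/(2(|S|-1))$, which lies in $[1/2,1]$ for $|S|\ge 2$ and does not worsen the bound in either direction once signs are tracked; absorbing this factor is exactly the content of \Cref{prop: bounds in terms of eigenvalues}. As a sanity check, the same inequalities follow elementarily by expanding $\bigl(\sum_{i\in S}k_i\bigr)^2 = \sum_{i\in S}k_i^2 + 2\sum_{i<j\in S}k_ik_j$, bounding with Cauchy--Schwarz, and using $|S|\le 2(|S|-1)$ for $|S|\ge 2$.
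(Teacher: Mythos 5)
Your proposal is correct and follows essentially the same route as the paper: the decomposition $C = kk^{\mathrm{T}} - \mathrm{diag}(k)^2$, Weyl's inequalities for the extremal eigenvalues, and then \Cref{prop: bounds in terms of eigenvalues} combined with \Cref{thm:1}. The bookkeeping factor $|S|/(2(|S|-1))$ you mention is indeed already absorbed in the proof of that proposition, so nothing further is needed.
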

Our second result concerns the limiting behavior of the Gibbs measure $\mu_\beta$ as $\beta$ approaches the finite endpoints of the interval $(\beta^-,\beta^+)$. To state the result, we first introduce some additional notation. Let $G^+$ and $G^-$ denote the sets of subsets of $\{1,\dots, N\}$ that realize the minimum in \Cref{prob: max} and the maximum in \Cref{prob: min}, respectively. That is, 
\begin{equation}
    G^+= \Big\{S\subseteq \{1,\dots,N\}: |S|\geq 2 \text{ and }-\frac{\sum_{i<j\in S}c(i,j)}{|S|-1} = T^+ \Big\},
\end{equation}
and similarly for $G^-$. Following \cite{FM}, we say that a collection $K =\{S_1,\dots,S_k\}$ of distinct subsets of $\{1,\dots,N\}$ is a \emph{nest} if any two elements $S_i,S_j \in K$ are \emph{nested}, meaning that $S_i\subset S_j$, $S_j\subset S_i$, or $S_i\cap S_j=\emptyset$. Let $\mathcal{N}^{+}$ and $\mathcal{N}^-$ denote the sets of nests of $G^+$ and $G^-$, respectively, having maximal size $\kappa^+$ and $\kappa^-$ among all nests of $G^+$ and $G^-$. 
\begin{theorem}
    \label{thm:2}
    As $\beta$ approaches one of the endpoints of the interval $(\beta^-,\beta^+)$, if the endpoint is finite, we have the following weak limits:
    \begin{equation}
        \mu_\beta \xrightarrow[\beta \to \beta^+]{} \mu^+
    \end{equation}
    and
    \begin{equation}\label{eq: weak limit of gibbs measures}
        \mu_\beta \xrightarrow[\beta \to \beta^-]{} \mu^-.
    \end{equation}
    %
    Here, $\mu^+$ is a probability measure whose support is
    \begin{equation}
        \label{eq: support set +}
        \bigcup_{K\in \mathcal{N}^+} \bigcap_{S\in K}  \big\{(p_1,\dots,p_N)\in (\mathbb{S}^2)^N\colon p_{j_1}=\cdots=p_{j_k},\, S=\{j_1,\dots,j_k\}\big\},
    \end{equation}
    and $\mu^-$ is a probability measure whose support is
    \begin{equation}
        \label{eq: support set -}
        \bigcup_{K\in \mathcal{N}^-} \bigcap_{S\in K}  \big\{(p_1,\dots,p_N)\in (\mathbb{S}^2)^N\colon p_{j_1}=\cdots=p_{j_k},\, S=\{j_1,\dots,j_k\} \big\}.
    \end{equation}
\end{theorem}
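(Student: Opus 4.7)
The strategy is to pull the integrand and volume form back to the Fulton--MacPherson compactification $\overline{M}_N$ of the configuration space of $N$ distinct points on $\mathbb{S}^2$ and then read off both the asymptotics of $\mathcal{Z}(\beta)$ and the concentration of $\mu_\beta$ from the resulting meromorphic-in-$\beta$ integrals; I treat only $\beta\to\beta^+$, the case $\beta\to\beta^-$ being strictly analogous. The boundary of $\overline{M}_N$ is a normal-crossings divisor whose irreducible components $D_S$ are indexed by subsets $S\subseteq\{1,\dots,N\}$ of size at least two, and the intersection $\bigcap_{S\in K}D_S$ is non-empty precisely when $K$ is a nest. Near a generic point of $\bigcap_{S\in K}D_S$ one has local coordinates $(\rho_S,\xi)_{S\in K}$ with $D_S=\{\rho_S=0\}$.

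The key local calculation is that, in these coordinates, $d(p_i,p_j)$ factors as $\big(\prod_{S\in K,\,S\supseteq S(i,j)}\rho_S\big)\cdot u_{ij}(\xi)$ with $u_{ij}$ a smooth positive unit and $S(i,j)$ the smallest $S\in K$ containing both $i$ and $j$. Together with the iterated blow-up Jacobian $\prod_{S\in K}\rho_S^{2(|S|-1)-1}\,d\rho_S$, this puts the integrand times $dV^{\otimes N}$ into the monomial form
\[
\prod_{S\in K}\rho_S^{2\beta c_S+2(|S|-1)-1}\,d\rho_S\,\cdot\,\sigma_K(\xi,\beta),
\]
with $c_S=\sum_{i<j\in S}c(i,j)$ and $\sigma_K(\,\cdot\,,\beta)$ smooth, positive, and holomorphic in $\beta$ near $\beta^+$. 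The exponent of $\rho_S$ vanishes at $\beta=-(|S|-1)/c_S$, which by \Cref{thm:1} equals $\beta^+$ precisely when $S\in G^+$. Therefore the contribution of the stratum $D_K$ has at $\beta=\beta^+$ a pole of order $|K\cap G^+|$, maximised (and equal to $\kappa^+$) exactly over $K\in\mathcal{N}^+$; consequently
\[
\mathcal{Z}(\beta)=\frac{A^+}{(\beta^+-\beta)^{\kappa^+}}\bigl(1+o(1)\bigr),\qquad \beta\uparrow\beta^+,
\]
with $A^+>0$ an explicit sum of residues over $\mathcal{N}^+$.

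For the concentration statement I would fix an open neighbourhood $U$ of the set in \eqref{eq: support set +}. Its lift to $\overline{M}_N$ contains $\bigcup_{K\in\mathcal{N}^+}\bigcap_{S\in K}D_S$, so any nest $K$ meeting $U^c$ fails to contain a maximal optimal nest, whence $|K\cap G^+|\le\kappa^+-1$. Summing the stratified bound over the finitely many such $K$ yields
\[
\int_{U^c}\prod_{i<j}d(p_i,p_j)^{2\beta c(i,j)}\,dV^{\otimes N}=O\!\left((\beta^+-\beta)^{-\kappa^++1}\right),
\]
so $\mu_\beta(U^c)=O(\beta^+-\beta)\to 0$. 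Compactness of $(\mathbb{S}^2)^N$ yields tightness, so every weak subsequential limit is a probability measure supported in \eqref{eq: support set +}. Uniqueness follows because the residue computation attaches to each $K\in\mathcal{N}^+$ an explicit probability measure on $\bigcap_{S\in K}D_S$ (pushed down to $(\mathbb{S}^2)^N$), and their convex combination with the weights from $A^+$ is the unique candidate limit $\mu^+$.

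The main obstacle I expect is the first step: verifying, for general (non-chain) nests $K$, the clean local monomial form and in particular that the cross-terms $d(p_i,p_j)$ for pairs $\{i,j\}$ straddling different branches of the nest contribute exactly through the factors $\rho_S$ with $S\supseteq S(i,j)$, and that the iterated Jacobians combine without cross-correction terms. Once this local model is in place, the remaining asymptotic and residue bookkeeping is a standard application of analytic continuation of complex powers.
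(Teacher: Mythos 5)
Your proposal follows essentially the same route as the paper: pull back to the Fulton--MacPherson compactification, use the normal-crossings structure and the nest description of intersecting boundary divisors, read off the pole order $\kappa^+$ from the maximal nests in $G^+$, and identify the limit measure with the normalized leading Laurent coefficient. The "main obstacle" you flag at the end is not actually an obstacle: the clean monomial form for arbitrary nests is exactly what Fulton--MacPherson's results give (the scheme-theoretic inverse image of a diagonal $W_{\{i,j\}}$ is the sum of the exceptional divisors $E_S$ with $S\ni i,j$, and the relative canonical divisor is $\sum_S(|S|-2)E_S$), which is the content of \Cref{lemma: pullback div,lemma: rel can div}; no cross-correction terms appear precisely because the compactification is an embedded resolution.

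The one step you gloss over that requires a genuine argument is the passage from the support upstairs to the support downstairs. Your residue computation produces, for each maximal nest $K$, a measure of full support on $\bigcap_{S\in K}E_S\subset(\mathbb{CP}^1)^{[N]}$; but to conclude that $\mathrm{supp}\,\mu^+$ is all of $\bigcap_{S\in K}W_S$ (and not a proper closed subset) you need $\pi\big(\bigcap_{S\in K}E_S\big)=\bigcap_{S\in K}\pi(E_S)$, i.e.\ that every point of the iterated diagonal lifts to a point lying on \emph{all} the exceptional divisors indexed by $K$ simultaneously. Only the inclusion $\subseteq$ is automatic; the reverse inclusion uses that $K$ is a nest and is proved in the paper via the description of points of $(\mathbb{CP}^1)^{[N]}$ by screens. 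Without this, your argument establishes containment of the support in \eqref{eq: support set +} together with positivity of the mass attached to each stratum, but not the exact equality of supports claimed in the theorem.
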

The weak convergence in \Cref{thm:2} and the characterization of the support of the limit implies that if $U$ is an open set compactly contained in the complement of \eqref{eq: support set +} in $(\mathbb{S}^2)^N$, then the probability that the system occupies $U$ converges to zero as $\beta\to \beta^+$, and similarly as $\beta$ approaches $\beta^-$. Thus, \Cref{thm:2} describes the collapse (or condensation) of the system as a critical temperature is approached, and \eqref{eq: support set +} and \eqref{eq: support set -} specify the possible clusters formed in the limit. We find examples where pairs are formed (see \cref{eq: support for two-comp plasma} below) and other examples displaying total collapse of all particles into a single cluster (see \Cref{rmk:gibbs} below).

A fundamental quantity in equilibrium statistical mechanics is the \emph{Gibbs free energy} as a function of the inverse temperature $\beta$, given in our setting by $-1/N\log{\mathcal{Z}(\beta)}$. Our third result concerns the asymptotic behavior of the free energy as $\beta$ approaches the critical values $\beta^+$ and $\beta^-$. Recall that $\kappa^+$ and $\kappa^-$ denote the sizes of the maximal nests in $G^+$ and $G^-$, respectively. 
\begin{theorem}
    \label{thm:3}
    The free energy admits the asymptotic behavior
    \begin{equation}
        -\frac{1}{N}\log\mathcal{Z}(\beta) = \frac{\kappa^\pm}{N}\log(\beta-\beta^\pm) + \mathcal{O}(1)\quad \text{as }\beta\to \beta^\pm,
    \end{equation}
    provided that $\beta^\pm$ is finite.
\end{theorem}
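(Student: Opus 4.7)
The plan is to carry out a singularity analysis of $\mathcal{Z}(\beta)$ near $\beta^{\pm}$ by pulling the integral back to the Fulton--MacPherson compactification $X[N]$ of $(\mathbb{S}^2)^N$, the same tool that is natural for \Cref{thm:1,thm:2}. On $X[N]$ the big diagonals have been replaced by a family of smooth exceptional divisors $\{D_S\}_{|S|\ge 2}$ meeting with simple normal crossings, whose intersection pattern is exactly encoded by the nests. In suitable local complex coordinates around a point of $D_{S_1}\cap\cdots\cap D_{S_k}$ for a nest $K=\{S_1,\ldots,S_k\}$, the pullback of $\prod_{i<j}d(p_i,p_j)^{2c(i,j)\beta}\,dV^{\otimes N}$ takes the shape
\begin{equation*}
    \Bigl(\prod_{a=1}^{k}|\sigma_{S_a}|^{2(|S_a|-2)+2\beta A_{S_a}}\Bigr)\,\omega,
\end{equation*}
where $\sigma_{S_a}$ is a local defining function for $D_{S_a}$, $\omega$ is a smooth positive $2N$-density, and $A_S\coloneq\sum_{i<j\in S}c(i,j)$; the shift $|S_a|-2$ comes from the iterated blow-up Jacobian and the $\beta A_{S_a}$ from the integrand itself.

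After covering $X[N]$ by finitely many such charts and inserting a partition of unity, I would apply Fubini together with the radial identity $\int_{|\sigma|<\varepsilon}|\sigma|^{2e}|d\sigma|^2=\pi\varepsilon^{2e+2}/(e+1)$ (valid for $e>-1$) to rewrite each local contribution as
\begin{equation*}
    \Bigl(\prod_{a=1}^{k}\frac{1}{(|S_a|-1)+\beta A_{S_a}}\Bigr)\cdot R_K(\beta),
\end{equation*}
where $R_K(\beta)$ is the remaining integral of $\omega$ over the stratum piece and stays uniformly bounded in a neighbourhood of $\beta^+$---and, for at least one chart, is uniformly bounded away from zero by positivity of $\omega$. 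By \Cref{thm:1}, integrability holds iff $(|S|-1)+\beta A_S>0$ for every $S$, with equality at $\beta=\beta^+$ exactly when $S\in G^+$. For such $S$ the linear expansion $(|S|-1)+\beta A_S=A_S(\beta-\beta^+)$ applies, so the chart indexed by $K$ carries a pole of order precisely $|K\cap G^+|$ in $(\beta^+-\beta)^{-1}$, while the remaining factors stay bounded.

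Summing over charts, the strongest singularity comes from the nests with $K\subseteq G^+$, and among those from the nests of maximal size $\kappa^+$. The leading coefficient is strictly positive because $\omega$ does not vanish along the corresponding maximal stratum, yielding
\begin{equation*}
    \mathcal{Z}(\beta)=\frac{C}{(\beta^+-\beta)^{\kappa^+}}+\mathcal{O}\bigl((\beta^+-\beta)^{-\kappa^++1}\bigr),\qquad C>0,
\end{equation*}
which upon taking $-\tfrac{1}{N}\log$ gives $-\tfrac{1}{N}\log\mathcal{Z}(\beta)=\tfrac{\kappa^+}{N}\log(\beta^+-\beta)+\mathcal{O}(1)$ as claimed; the case $\beta\to\beta^-$ is identical with $G^-,\mathcal{N}^-,\kappa^-$ replacing $G^+,\mathcal{N}^+,\kappa^+$.

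The main technical obstacle is the careful identification of the exponents on $X[N]$ under the iterated blow-up: one must verify that at a point lying simultaneously on several divisors $D_{S_a}$ the exponent of $|\sigma_{S_a}|$ depends only on $S_a$ (through $A_{S_a}$ and $|S_a|$) and not on the other members of the nest, despite the multiplicative nature of the Jacobian. A secondary difficulty is extracting the sharp $\mathcal{O}(1)$ remainder rather than merely $o(\log)$: this requires uniform control of the bounded factors $R_K$, uniform bounds on contributions from non-maximal and mixed nests (whose pole order is strictly less than $\kappa^{\pm}$), and strict positivity of the leading constant $C$, all of which rest on the nonvanishing of $\omega$ along each maximal-nest stratum.
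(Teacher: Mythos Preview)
Your proposal is correct and follows essentially the same route as the paper: pull back to the Fulton--MacPherson compactification, identify the local exponent along each exceptional divisor $E_S$ as $2(|S|-2)+2\beta A_S$ via Lemmas~\ref{lemma: pullback div} and~\ref{lemma: rel can div}, and recognise that the order of the pole of $\mathcal{Z}(\beta)$ at $\beta^{\pm}$ is the maximal number of simultaneously intersecting divisors $E_S$ with $S\in G^{\pm}$, which by Lemma~\ref{lemma: intersection pattern} is $\kappa^{\pm}$.

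The only methodological difference is in how the pole is extracted. The paper does not carry out the radial integration you sketch; instead it reuses the distribution-valued meromorphic continuation of $\mathcal{Z}^{\xi}(\beta)$ established in the proof of Theorem~\ref{thm:2} (via the Bernstein--Sato relations \eqref{eq:bernstein1}--\eqref{eq:bernstein2}), takes the Laurent expansion at $\beta^{\pm}$ with $\xi\equiv 1$, and applies $-\tfrac{1}{N}\log$. The positivity of the leading Laurent coefficient, your $C>0$, is obtained there from the fact that $\nu_{-\kappa}(\beta^{\pm})$ is a nontrivial positive measure. Your radial-integration approach is more elementary and self-contained, but the step ``apply Fubini together with the radial identity'' needs a little more care: since the smooth density $\omega$ is not a product in the $\sigma_{S_a}$-variables, one must either Taylor-expand $\omega$ in these variables or integrate by parts (which is precisely what the Bernstein--Sato identity encodes) to genuinely separate the factors $1/((|S_a|-1)+\beta A_{S_a})$ from a remainder $R_K(\beta)$ that stays bounded \emph{and} strictly positive near $\beta^{\pm}$. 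Once that is done the two arguments coincide.
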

In fact, the proof provides a full asymptotic expansion of the free energy as $\beta$ approaches $\beta^\pm$:
\begin{equation}
    -\frac{1}{N}\log\mathcal{Z}(\beta)=\frac{\kappa^\pm}{N}\log(\beta-\beta^{\pm})+C_0^\pm+C_1^\pm(\beta-\beta^\pm)+C^\pm_2(\beta-\beta^{\pm})^2 + \cdots.
\end{equation}
This expansion follows from the fact that $\mathcal{Z}(\beta)$ admits a Laurent series expansion about $\beta^\pm$, established via analytic continuation. Moreover, the order of the pole of this Laurent series expansion is precisely $\kappa^\pm$. The coefficients $C_0^\pm, C_1^\pm,\dots$ can be expressed in terms of the Laurent series coefficients; for instance, $C_0^\pm=(\log A_0^\pm)/N$, where $A_0^\pm$ is the coefficient of $(\beta-\beta^\pm)^{-\kappa^\pm}$ in the expansion of $\mathcal{Z}(\beta)$. 

In particular, if \Cref{prob: max} and/or \Cref{prob: min} in \Cref{thm:1} has a unique solution, then $\kappa^+$ and/or $\kappa^-$ is equal to $1$, respectively. We also find examples where the solution of \Cref{prob: max} is highly degenerate, and $\kappa^+$ grows linearly with $N$, see \Cref{sec: positive temperature} below.
\subsection{Comment on the proofs}

The proofs of \Cref{thm:1,thm:2,thm:3} rely on the identification of $(\mathbb{S}^2)^N$ with the complex algebraic variety $(\mathbb{CP}^1)^N$. The singularity of the Gibbs measure corresponds to a well-studied singular subvariety of $(\mathbb{CP}^1)^N$. To prove \Cref{thm:1}, we use the explicit embedded resolution of singularities of this subvariety, provided by the Fulton--MacPherson compactification of the ordered configuration space of $\mathbb{CP}^1$ \cite{FM}. The proofs of \Cref{thm:2,thm:3} exploit the distribution-valued meromorphic continuation of complex powers \cite{Atiyah, BG}. 
\subsection{Applications}
\label{sec: applications}

We now highlight \Cref{thm:1,thm:2,thm:3} with some applications to well-studied systems of physical origin. 
\subsubsection{Positive temperature}
\label{sec: positive temperature}
First, consider the general two-component plasma, introduced in \cite{HV}, for which
\begin{equation}
    \label{eq: general planar two-component plasma}
    c(i,j) = \begin{cases}
        Z_1^2, 1\leq i,j \leq N_1\\
        Z_2^2, N_1< i,j \leq N\\
        -Z_1Z_2, 1\leq i \leq N_1, N_1<j\leq N \\
        -Z_1Z_2, N_1< i \leq N, 1\leq j\leq N_1.
    \end{cases}
\end{equation}
In this system there are two types of particles: $N_1$ particles of the first and $N_2=N-N_1$ of the second. The first type of particle has charge $Z_1$ and the second type has charge $-Z_2$, where $Z_1$ and $Z_2$ are positive real numbers. Without loss of generality, we assume $Z_1\geq Z_2$. For simplicity (though this is not crucial), we also assume overall charge neutrality, that is, $Z_1N_1=Z_2N_2$. In this case we are able to solve \Cref{prob: max} explicitly, and as a consequence of Theorems \ref{thm:1}, \ref{thm:2} and \ref{thm:3}, we find the following.

\begin{theorem}
    \label{thm: two-component log gas}
    With $c(i,j)$ as in \cref{eq: general planar two-component plasma}, we have
    \begin{equation}
    \label{eq: crit temp for two-comp plasma}
        \beta^+=1/(Z_1 Z_2),
    \end{equation}
    and $\mu^+$ is supported on
    \begin{equation}
    \label{eq: support for two-comp plasma}
        \bigcup_{\substack{I\subset \{N_1+1,\hdots,N\} \\ |I| = N_1}} \bigcap_{j=1}^{N_1} \{p_j=p_{I_j}\}
    \end{equation}
    where the union is over all ordered tuples $I \subset \{N_1+1,\hdots,N\}$ of size $N_1$. In addition, $\kappa^+=N_1$, so that,
    \begin{equation}
        -1/N\log(\mathcal{Z}(\beta)) = \frac{Z_2}{Z_1+Z_2}\log\bigg(\beta-\frac{1}{Z_1Z_2}\bigg) + \mathcal{O}(1).
    \end{equation}
\end{theorem}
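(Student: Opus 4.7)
The plan is to reduce the theorem to solving \Cref{prob: max} explicitly for the couplings in \cref{eq: general planar two-component plasma} and then invoke \Cref{thm:1,thm:2,thm:3}. Since $c(i,j)$ depends only on the types of $i$ and $j$, for any subset $S$ containing $a$ particles of type one and $b$ of type two,
\[
\sum_{i<j\in S} c(i,j) \;=\; \binom{a}{2}Z_1^2 + \binom{b}{2}Z_2^2 - abZ_1Z_2,
\]
so the discrete optimization collapses to maximizing
\[
f(a,b) \;=\; -\,\frac{\binom{a}{2}Z_1^2 + \binom{b}{2}Z_2^2 - abZ_1Z_2}{a+b-1}
\]
over integer pairs with $0\leq a\leq N_1$, $0\leq b\leq N_2$ and $a+b\geq 2$, and identifying all maximizers.

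The key algebraic step, which I expect to be the only nontrivial obstacle, is to show that $f(a,b)\leq Z_1Z_2$ with equality iff $(a,b)=(1,1)$. Clearing the positive denominator and using $a+b-1-ab=-(a-1)(b-1)$, the inequality becomes
\[
\binom{a}{2}Z_1^2 + \binom{b}{2}Z_2^2 - (a-1)(b-1)Z_1Z_2 \;\geq\; 0,
\]
which, after the substitution $\alpha = a-1$, $\beta = b-1$, admits the sum-of-squares decomposition
\[
\tfrac{1}{2}\bigl(\alpha Z_1 - \beta Z_2\bigr)^2 + \tfrac{1}{2}\bigl(\alpha Z_1^2 + \beta Z_2^2\bigr) \;\geq\; 0,
\]
vanishing precisely when $\alpha=\beta=0$; the edge cases $a=0$ or $b=0$ give $f<0$ directly. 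Hence $T^+ = Z_1Z_2$ and $G^+$ is exactly the set of $N_1N_2$ mixed pairs $\{i,j\}$ with $i\leq N_1<j$, and \Cref{thm:1} yields \cref{eq: crit temp for two-comp plasma}.

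The rest is combinatorics. Every element of $G^+$ has size two, so two distinct elements are nested iff disjoint, and a nest in $G^+$ is a partial matching between type-one and type-two indices. Charge neutrality together with $Z_1\geq Z_2$ forces $N_1\leq N_2$, so maximal nests have size $\kappa^+=N_1$ and are in bijection with the ordered tuples $I=(I_1,\dots,I_{N_1})$ of distinct elements of $\{N_1+1,\dots,N\}$, producing exactly \cref{eq: support for two-comp plasma} via \Cref{thm:2}. Finally, \Cref{thm:3} delivers the free energy asymptotic once one observes $\kappa^+/N = N_1/(N_1+N_2) = Z_2/(Z_1+Z_2)$, again by neutrality. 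The argument is short; the one non-routine ingredient is spotting the sum-of-squares identity above, which simultaneously proves the bound and pins down $(1,1)$ as the unique maximizer.
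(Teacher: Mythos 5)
Your proposal is correct and follows the same overall strategy as the paper --- reduce the theorem to solving \Cref{prob: max} for the couplings \eqref{eq: general planar two-component plasma} and then feed the answer into \Cref{thm:1,thm:2,thm:3} --- but the key algebraic step is handled by a genuinely different and cleaner route. The paper first normalizes to $Z_2=1$, $Z=Z_1/Z_2\geq 1$, rewrites the interaction sum in the shifted variables $a=2|S_1|-1$, $b=2|S_2|-1$, and then needs a separate technical lemma (\Cref{lemma: technical lemma}) providing the dichotomy $|Za-b|\geq Z-(a+b-1)$ or $a+b\geq Z+1$, followed by a two-case analysis in \Cref{lemma: computation of beta for two component gas} to establish the bound and pin down the equality cases. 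Your sum-of-squares identity
\begin{equation*}
\binom{a}{2}Z_1^2+\binom{b}{2}Z_2^2-(a-1)(b-1)Z_1Z_2
=\tfrac{1}{2}\bigl((a-1)Z_1-(b-1)Z_2\bigr)^2+\tfrac{1}{2}\bigl((a-1)Z_1^2+(b-1)Z_2^2\bigr)
\end{equation*}
(which I have checked) proves the inequality $f(a,b)\leq Z_1Z_2$ and identifies $(a,b)=(1,1)$ as the unique maximizer in one stroke, needs no normalization or case split beyond the trivial one-species cases $a=0$ or $b=0$ where $f<0$, and does not even use $Z_1\geq Z_2$ for this step. The remaining combinatorics --- nests of $G^+$ are matchings of mixed pairs, neutrality gives $N_1\leq N_2$ hence $\kappa^+=N_1$ and $\kappa^+/N=Z_2/(Z_1+Z_2)$ --- agrees with the paper's. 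The trade-off is purely aesthetic: the paper's route generalizes its normalization $Z\in[1,\infty)$ throughout, while yours is shorter and symmetric in $Z_1,Z_2$; both are complete.
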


The support \eqref{eq: support for two-comp plasma} can be understood as the union over all possible ways to pair up positive and negative particles into a maximal number of pairs, of the corresponding subsets describing these configurations. In light of \Cref{thm:2}, \Cref{thm: two-component log gas} shows rigorously the formation of dipoles as the critical inverse temperature $\beta^+=1/(Z_1Z_2)$ is approached.

In the physics literature, see, e.g., \cite{HV}, the critical inverse temperature $\Gamma=2/(Z_1Z_2)$ can be found, which is consistent with \eqref{eq: crit temp for two-comp plasma} due to a difference in normalization. To the authors' knowledge, however, this formula has not yet been rigorously proven, except when $Z_1=Z_2$ \cite{GuPa}. 

\medskip

In the recent work \cite{BoSe}, dipole formation was established in the case $Z_1=Z_2=1$ for $\beta\geq \beta^+ = 1/(Z_1 Z_2) = 1$, in the limit $\lambda\to 0$, where $\lambda$ denotes a regularization parameter. Moreover, a detailed asymptotic expansion of the free energy was obtained. It would be of interest to derive an even more detailed asymptotic expansion for both the free energy and Gibbs measure for a suitably regularized version of the system when $\beta\geq 1$. The physics literature suggests that such an expansion should include additional terms corresponding to the formation of neutral \emph{multipoles}, beyond the leading order dipole contribution. As $\beta$ is increased, terms corresponding to increasingly larger neutral multipoles are expected to appear. In fact, in the very recent work \cite{BoSe2}, such higher-order terms have been shown to appear. It is natural to ask whether \Cref{prob: max} is also related to these further transitions. We will address this question in a sequel to the present work. 

\subsubsection{Negative temperature}

Now, consider the coupling defined by
\begin{equation*}
    \label{eq: onsager model 1}
    c(i,j) = k_ik_j,
\end{equation*}
where the $k_i$ are arbitrary nonzero real numbers. This model generalizes the two-component plasma and arises, for example, as the statistical mechanical description of the Onsager model of turbulence, with the $k_i$ representing the vorticities of the point-vortex particles. Determining the positive critical inverse temperature explicitly seems difficult in general, although \Cref{thm:1} and \Cref{cor: explicit bounds in the charge case} provide explicit upper and lower bounds. 

Here, we instead focus on the negative-temperature regime, where imposing a strong condition on the variation among the vorticities allows us to solve \Cref{prob: min} explicitly, see \Cref{cor: onsager model}. More precisely, the condition is that 
\[
    \max_{i:k_i>0} k_i < \frac{3}{2}\min_{i:k_i>0} k_i,
\]
and
\[
    \max_{i:k_i<0} |k_i| < \frac{3}{2}\min_{i:k_i<0} |k_i|.
\]
In this case, we observe a fundamentally different qualitative behavior of the condensation as the critical temperature is approached. Unlike the positive-temperature case (for the two-component plasma), where many small clusters are formed, here one observes a total collapse of either all positively or all negatively charged particles, depending on a simple criterion. See \Cref{cor: onsager model} below for details.
\subsection{Other geometries}

We have stated our result in the setting of the sphere for simplicity, but we expect them to hold almost verbatim in a more general setting, with similar proofs. For any Riemann surface $M$, compact or otherwise, with a metric $g$ and a volume form $\mathrm{d}V$, one can consider an analogous energy $E\colon M^N\to \mathbb{R}$ constructed from pair-interactions governed by the corresponding Greens function $G$, defined with respect to $g$, where suitable boundary conditions are imposed if $M$ has a boundary. Specifically, the energy is given by
\begin{equation*}
    E(p_1,\dots,p_N) = \sum_{i<j}c(i,j)\log G(p_i,p_j).
\end{equation*}
Furthermore, one can allow the inclusion of a smooth external potential $U\colon M \to \mathbb{R}$, and define the Gibbs measure as 
\begin{equation*}
    \mu_\beta = \frac{1}{\mathcal{Z}(\beta)}\exp\Big(-\beta\big(E(p_1,\dots,p_N)+\sum_i U(p_i)\big)\Big)\,\mathrm{d}V^{\otimes N}
\end{equation*}
with
\begin{equation*}
    \mathcal{Z}(\beta) = \int_{M^N} \exp\Big(-\beta\big(E(p_1,\dots,p_N)+\sum_i U(p_i)\big)\Big)\,\mathrm{d}V^{\otimes N},
\end{equation*}
assuming the latter integral converges. We expect our results to hold in this setting, more or less verbatim, as long as the possible failure of convergence of $\mathcal{Z}(\beta)$ arises from the logarithmic short-range nature of the interaction term. That is, as long as the integrability of $\exp\big(-\beta (E+\sum_i U(p_i))\big)\,\mathrm{d}V^{\otimes N}$ is equivalent to local integrability in the interior of $M^N$. One common setting is to take $M$ to be a rectangle, $g$ to be the standard Euclidean metric, $\mathrm{d}V$ to be the standard Lebesgue measure and $U\equiv 0$. 

\subsection{Outline of the Paper}

In \Cref{sec: complex geometry} we recall some notions from complex geometry. In \Cref{sec: partition function} we introduce a complex-geometric formalism and outline some ideas that go into the proofs of \Cref{thm:1,thm:2}. In \Cref{sec: FM} we recall the Fulton--MacPherson compactification of configuration spaces. \Cref{sec: main theorems} contains the proofs of the three main theorems. In \Cref{sec: eigenvalue bounds} we relate \Cref{prob: max,prob: min} to eigenvalue problems and prove \Cref{cor: explicit bounds in the charge case}. In \Cref{sec: random coupling} we use the eigenvalue bounds to obtain bounds in the case of random couplings. In \Cref{sec: examples} we present and prove the results related to the examples discussed in \Cref{sec: applications} above. Finally, in \Cref{sec: other problems} we relate \Cref{prob: max,prob: min} to other discrete optimization problems and point out connections to spin glass models. 

\section{Preliminaries}
\subsection{Complex geometry} 
\label{sec: complex geometry}

We begin this section by recalling some concepts from complex algebraic geometry that are essential to the proofs of \Cref{thm:1,thm:2}. For more details, a good reference is \cite{GH}.

\subsubsection{Divisors}

Let $X$ be a complex manifold of dimension $n$. An ($\mathbb{R}$-)\textit{divisor} $D$ on $X$ is a locally finite formal $\mathbb{R}$-linear combination
\[
    D = \sum_j a_j V_j,
\]
where each $V_j \subset X$ is an irreducible analytic hypersurface. An analytic hypersurface $V$ is locally defined in an open set $U\subset X$ as the zero set of a holomorphic function $g\colon U \to \mathbb{C}$. If $f$ is a holomorphic function on $X$, its associated divisor is 
\[
    \mathrm{Div}(f)\coloneq\sum_V \mathrm{ord}_V(f) V,
\]
where the sum runs over the irreducible components $V$ of the hypersurface defined by $f$, and $\mathrm{ord}_V(f)$ is the order of vanishing of $f$ along $V$. This construction allows divisors to be pulled back via holomorphic maps. Specifically, if $\pi\colon Y \to X$ is a holomorphic map between complex manifolds and $D=\sum_j a_j V_j$ is a divisor on $X$, then
\begin{equation}
    \label{eq: def pullback divisor}
    \pi^*(D)=\sum_j a_j \pi^*(V_j),
\end{equation}
where, locally, $\pi^*(V_j) = \mathrm{Div}(\pi^*(f_j))$ for any local defining function $f_j$ of $V_j$. Finally, for a divisor $D = \sum_j a_j V_j$ on $X$, we define its \textit{support} as the hypersurface
\[
    \mathrm{supp}(D) \coloneq \bigcup_j V_j.
\]

\subsubsection{Blowups}
Again, let $X$ be a complex manifold of dimension $n$, and let $Z \subset X$ be a complex submanifold of codimension $k > 1$. The \textit{blowup of $X$ along $Z$} is a complex $n$-dimensional manifold, denoted by $\mathrm{Bl}_Z X$, together with a holomorphic map $\pi \colon \mathrm{Bl}_Z X \to X$ with the following local description: For any point $p \in Y \subset X$, we can find a neighborhood $B \subset X$ of $p$ and holomorphic coordinates $z = (z_1,\dots,z_n)$ centered around $p$ such that, locally in $B$, $Z=\{z_1 = \dots = z_k = 0\}$. Then, 
\begin{equation*}
    \mathrm{Bl}_{Z} B = \big\{(z,[t])\in B\times\mathbb{CP}^{k-1} \colon z_i t_j = z_j t_i,\, 1\leq i,j\leq k\big\},
\end{equation*}
where $\mathbb{CP}^{k-1}$ denotes complex projective $(k-1)$-space. Geometrically, the blowup replaces $Z$ with a space parameterizing the directions into $Z$ relative to the space $X$, called the \textit{exceptional divisor} $E = \pi^{-1}(Z)$. 
Locally, over $B$, the blowup map $\pi$ is the restriction of the natural projection $B\times \mathbb{CP}^{k-1}\to B$ to the submanifold $\mathrm{Bl}_{Z} B\subset B\times\mathbb{CP}^{k-1}$. By covering $X$ with such coordinate charts, the local constructions glue together to a complex manifold $\mathrm{Bl}_Z X$ and a globally defined holomorphic map $\pi \colon \mathrm{Bl}_Z X\to X$. The map $\pi$ is a biholomorphism over $X\setminus Z$, that is, on the complement of the exceptional divisor.

The space $\mathrm{Bl}_{Z} X$ can be covered by $k$ coordinate charts $\{U_j\}_{j=1}^k$, corresponding to the standard affine charts $\{ t_j \neq 0 \} \subset \mathbb{CP}^{k-1}_{[t]}$, for $j=1,\hdots,k$. If $x= (x_1,\hdots,x_n)$ and $y = (y_1,\hdots,y_n)$ denote holomorphic coordinates in $U_i$ and $U_j$, respectively, then the transition map is given by
\begin{equation*}
    x_\ell = 
    \begin{cases}
        y_i y_j &\text{if } \ell = i, \\
        1/y_j &\text{if } \ell = j, \\
        y_\ell/y_j &\text{if } \ell \neq i,j,\  \ell \leq k, \\
        y_\ell &\text{if } k+1\leq \ell \leq n.
    \end{cases}
\end{equation*}
For each $j=1,\hdots,k$, the restriction of the blowup map to the coordinate chart $U_j$ is given locally by
\begin{equation}
    \label{eq:blowup local picture}
    \pi|_{U_j} \colon (x_1,\hdots,x_n) \mapsto (x_j x_1, \hdots, x_j x_{j-1}, x_j, x_j x_{j+1},\hdots,x_j x_k, x_{k+1},\hdots, x_n).
\end{equation}
From \cref{eq:blowup local picture}, we see that for the pullback of the tuple $(z_1,\hdots,z_k)$, there is locally always one variable that divides the rest. These local functions define the exceptional divisor of the blowup.

\medskip 

Let $V \subset X$ be an analytic subvariety (reduced and irreducible). The \textit{strict transform} of $V$ under the blowup $\pi \colon \mathrm{Bl}_Z X \to X$ is defined by
\[
    \mathrm{Strict}_\pi(V) \coloneq \overline{\pi^{-1}(V\setminus Z)}.
\]
The preimage $\pi^{-1}(V)$ is called the \textit{total transform} of $V$. There is also a third notion, interpolating between these two constructions, called the \textit{dominant transform}. The \emph{dominant transform} of $V$ with respect to $\pi$ is defined as the strict transform whenever $V\not\subset Z$ and the inverse image $\pi^{-1}(V)$ when $V\subset Z$. 

\medskip

Lastly, recall that for a complex manifold $X$ and a Zariski-closed subset $Z \subset X$, an \textit{embedded resolution of singularities} of $Z$ in $X$ is a smooth manifold $\widetilde{X}$ together with a holomorphic map $\pi \colon \widetilde{X} \to X$, which is a composition of blowups along complex submanifolds, such that the restriction $\widetilde{X}\setminus \pi^{-1}(Z) \to X \setminus Z$ is a biholomorphism and such that $\pi^{-1}(Z)$ is a normal crossings hypersurface in $\widetilde{X}$. Recall that a hypersurface is said to have \textit{normal crossings} if, locally, it is a union of coordinate hyperplanes.

\subsubsection{The relative canonical divisor}

Consider a holomorphic map $F \colon Y\to X$ which is a biholomorphism between the complements of closed analytic subsets of $Y$ and $X$ (for instance, a composition of blowups). The \textit{relative canonical divisor} of $F$ is defined by 
\begin{equation}
    \label{eq: def 1 rel can}
    K_{Y/X} \coloneq \mathrm{Div}
    (\mathrm{det}\,\mathrm{Jac}(F)),
\end{equation}
where $\mathrm{Jac}(F)$ denotes the Jacobian matrix of $F$, defined in local coordinates. In this notation, the map $F$ is suppressed and should thus be clear from the context. As a useful alternative, one can also define
\begin{equation}
    \label{eq: def 2 rel can}
    K_{Y/X} = K_Y-\pi^*(K_X). 
\end{equation}
Here, $K_Y=\mathrm{div}(s)$ for a holomorphic section $s$ of the line bundle $\bigwedge^n T^\ast \,Y$, and $K_X$ is defined similarly. Here, $T^\ast \,Y$ denotes the holomorphic cotangent bundle of $Y$. While $K_Y$ and $K_X$ are not uniquely defined as divisors (only as linear equivalence classes), $K_{Y/X}$ is a well-defined divisor once $K_Y$ and $K_X$ are chosen so that they coincide where $F$ is a biholomorphism. This choice will be made implicitly throughout.

As an example, if $\pi \colon \mathrm{Bl}_Z X\to X$ is the blowup of $X$ along $Z$, then 
\begin{equation}
    \label{eq: blowup rel can}
    K_{\mathrm{Bl}_Z X/X}=(\mathrm{codim}(Z)-1)E,
\end{equation}
where $E$ is the exceptional divisor, as is seen from \Cref{eq:blowup local picture}. 

A useful property, which follows from \cref{eq: def 2 rel can}, is that for two maps $F_2 \colon Y_2\to Y_1$ and $F_1 \colon Y_1\to X$ as above, one has
\begin{equation}
    \label{eq: chain rule for rel can div}
    K_{Y_2/X}=K_{Y_2/Y_1}+F_2^*(K_{Y_1/X}),
\end{equation}
for the relative canonical bundle of the composition $F_1\circ F_2 \colon Y_2 \to X$. Note that \cref{eq: chain rule for rel can div} also follows directly from the chain rule for Jacobians applied to \cref{eq: def 1 rel can}. 

\medskip
\subsubsection{Pulling back measures with analytic singularities}
\label{sec: pulling back measures with analytic singularities}

Now, let $X$ be a complex manifold of dimension $n$, and let $D = \sum_j a_j V_j$ be a divisor on $X$. 
Suppose we have a measure $\mu$ on $X$ that locally takes the form
\begin{equation}
    \label{eq: measure with analytic singularities}
    \mu = \prod_j |f_j|^{2a_j} \Psi \,\mathrm{d}z_1\wedge \cdots \wedge\mathrm{d}\bar{z}_n,
\end{equation}
where $f_j$ is a locally defining function of $V_j$, and $\Psi$ is a smooth, nowhere-vanishing function. If $F\colon Y\to X$ is a holomorphic map that is a biholomorphism between the complements of closed analytic subsets $Z \subset Y$ and $W \subset X$, then, locally on $Y\setminus Z$,
\begin{equation}
    \label{eq: local pullback of measure}
    F^*\mu = \prod_j |g_j|^{2b_j} F^*(\Psi) \prod_j |h_j|^{2c_j}\,\mathrm{d}z_1\wedge \cdots\wedge\mathrm{d}\bar{z}_n,
\end{equation}
where $F^*(D)=\sum_j b_j S_j$ and $S_j=\{g_j=0\}$ locally, and $K_{Y/X}=\prod_j c_j K_j$ and $K_j=\{h_j=0\}$ locally. The formula \cref{eq: local pullback of measure} follows from the change of variables formula for integrals, together with the definitions \cref{eq: def pullback divisor} and \cref{eq: def 1 rel can}. Since $\mu$ does not put any mass on $W$, $\mu(U)=F^*\mu(F^*(U\setminus{W}))$ for any measurable $U\subset Y$. By extending $F^*\mu$ by zero over $Z$, we even have $\mu(U)=F^*\mu(F^*(U))$ for any measurable $U$. 
\subsection{Complex geometric formalism}
\label{sec: partition function}
In this section we introduce a complex geometric formalism for the $N$-particle system on the sphere. We begin by identifying $\mathbb{S}^2$ with the Riemann sphere $\mathbb{C}\cup \{\infty\}$ or, equivalently, the complex projective line $\mathbb{CP}^1$. On the standard affine chart $(\mathbb{C})^N\subset(\mathbb{CP}^1)^N$ with coordinates $(z_1,\hdots,z_N)$, the energy in \cref{eq:energy the sphere} can be written as
\begin{equation*}
    E(z_1,\dots,z_N) = \sum_{i<j} c(i,j) \log||z_i-z_j||^{2},
\end{equation*}
where
\[
    ||z_i-z_j||^2\coloneq|z_i-z_j|^2\exp(-\phi(z_i)-\phi(z_j)) =\frac{|z_i - z_j|^2}{(1+|z_i|^2)(1+|z_j|^2)},
\]
and where $|\cdot|$ denotes the standard absolute value on $\C$. The function $\phi(z) = \log(1+|z|^2)$ is the Kähler potential for the Fubini--Study metric on $\mathbb{CP}^1$. Note that
\[
    \mathrm{d}V(z_i)=\frac{1}{{(1+|z_i|^2)}^2}\frac{i}{2\pi}\mathrm{d}z_i\wedge\mathrm{d}\bar{z}_i=\exp(-2\phi(z_i))\frac{i}{2\pi}\mathrm{d}z_i\wedge\mathrm{d}\bar{z}_i.
\]
Letting
\begin{equation}
    \label{eq:Phimetric}
    \Phi(z_1,\dots,z_N)\coloneq-\sum_{i<j} c(i,j) (\phi(z_i)+\phi(z_j)),
\end{equation} 
the Gibbs measure then takes the form
\begin{equation*}
    \mu_\beta = \frac{1}{\mathcal{Z}(\beta)}\prod_{i<j} |z_i-z_j|^{2c(i,j)\beta}\exp(-\beta \Phi)\,\mathrm{d}V^{\otimes N},
\end{equation*}
where the partition function $\mathcal{Z}(\beta)$ is given by
\begin{equation}
    \label{eq:partition_function}
    \mathcal{Z}(\beta) = \int_{\C^N} \prod_{i< j} |z_i -z_j|^{2c(i,j)\beta} \exp({-\beta \Phi})\,\mathrm{d}V^{\otimes N}.
\end{equation}

It suffices to integrate over $\C^N$, since this affine chart is dense in $(\mathbb{CP}^1)^N$. On the remaining standard affine charts covering the reorderings of $\{0\}^k\times\{\infty\}^{N-k}$ for any $k=0,\hdots,N$, the Gibbs measure takes the form
\begin{equation}
    \label{eq:partition_function_other_charts}
    \mu_\beta = \frac{1}{\mathcal{Z}(\beta)}\prod_{1\leq i<j\leq k} |z_i-z_j|^{2c(i,j)\beta} \prod_{\substack{1\leq i \leq k \\ k<j\leq N }} |z_iw_j-1|^{2c(i,j)\beta} \prod_{k<i<j\leq N} |w_i-w_j|^{2c(i,j)\beta} \,\mathrm{d}V^{\otimes N},
\end{equation}
after possibly reordering the variables. Note that the Gibbs measure is of the type \cref{eq: measure with analytic singularities} in \Cref{sec: pulling back measures with analytic singularities}.

The partition function is convergent only on a (possibly empty) interval or ray $(\beta^-,\beta^+) \subset\mathbb{R}$, depending on the couplings $c(i,j)$ and on $N$, and divergent otherwise. For any $\beta \in \mathbb{R}$, the local integrands in \cref{eq:partition_function} and \cref{eq:partition_function_other_charts} are locally integrable outside
\[
    D \coloneq \bigcup_{i<j} \{p_i=p_j\}.
\]
Consequently, for any $\beta \in \mathbb{R}$ and any smooth function $\xi$ with compact support in $(\mathbb{CP}^1)^N \setminus D$,
\begin{equation}
    \label{eq:Z_distribution}
    \mathcal{Z}^\xi(\beta) = \int_{(\mathbb{S}^2)^N} \prod_{i<j} d(p_i,p_j)^{2c(i,j) \beta} \xi \,\d V^{\otimes N},
\end{equation}
is defined, and the map $\xi\mapsto \mathcal{Z}^\xi(\beta)$ defines a distribution on $(\mathbb{CP}^1)^N\setminus D$. 

\medskip

When $c(i,j) \in \mathbb{Q}_+$ for each $i\neq j$, \cref{eq:Z_distribution} is an example of an Archimedean local zeta function, see, e.g., \cite{I1}. Note that, for $\beta \in (\beta^-,\beta^+)$, \cref{eq:Z_distribution} defines a distribution on all of $(\mathbb{CP}^1)^N$. This perspective of viewing the partition function as a distribution in the above sense is useful in relation to \Cref{thm:2}, which concerns the weak limit of the Gibbs measure $\mu_\beta$ as we approach the critical inverse temperatures. Note that, for $\beta \in (\beta^-,\beta^+)$,
\begin{equation}
    \label{eq:actionofgibbs}
    \langle \mu_\beta,\xi\rangle \coloneq\int_{(\mathbb{S}^2)^N} \xi \,\mu_\beta = \frac{\mathcal{Z}^\xi(\beta)}{\mathcal{Z}^1(\beta)}.
\end{equation}

If we allow for complex values of the inverse temperature $\beta$, then $\mathcal{Z}^\xi(\beta)$ defines a distribution on $(\mathbb{CP}^1)^N$ for any $\beta$ in the vertical strip $\Omega = \{\beta\in\C : \beta^- < \mathfrak{Re}\,\beta < \beta^+\}$. Moreover, $\mathcal{Z}^\xi(\beta)$ is holomorphic as a function of $\beta$ in $\Omega$. By a classical result, originally due to Bernstein and Gelfand in \cite{BG} and independently Atiyah in \cite{Atiyah}, $\mathcal{Z}^\xi(\beta)$ has a meromorphic continuation to all of $\C_\beta$ with poles in a discrete subset of $\mathbb{Q}$, for any test function $\xi$. This classical result applies when $c(i,j) \in \mathbb{Q}_+$, but as we show below, a version of it extends to our more general setting of mixed signs. 

In particular, $\beta^-$ and $\beta^+$ will be among the possible poles of $\mathcal{Z}^\xi(\beta)$. Thus, we can consider the Laurent series expansion of $\mathcal{Z}^\xi(\beta)$ about $\beta^{-}$ (and, similarly, about $\beta^+$),
\[
    \mathcal{Z}^\xi(\beta) = \sum\limits_{\ell \geq -\kappa} c_\ell (\beta-\beta^{-})^\ell,
\]
where $0\leq \kappa \leq N$. 
The leading order coefficient $c_{-\kappa}$, viewed as a function of $\xi$, defines a positive distribution, that is, a measure, on ${(\mathbb{CP}^1)^N}$. 
This measure, up to normalization, is equal to the weak limit of the Gibbs measure $\mu_\beta$ as $\beta \to \beta^-$, see the proof of \Cref{thm:2}.

\medskip 

Let us consider the simplest possible example, when there are only two particles.
\begin{example}
    \label{ex:N=2}
    Let $N=2$, $c\coloneq c(1,2)>0$ and let $\xi$ be a test function on $\mathbb{C}^2\subset (\mathbb{CP}^1)^2$. Then $\mathcal{Z}^\xi(\beta)$ becomes
    \[
        \mathcal{Z}^\xi(\beta) = \int_{\C^2} |z_1 - z_2|^{2\beta c} \xi e^{-\beta U} \frac{i^2}{(2\pi)^2} \frac{\d z_1 \wedge \d \bar{z}_1 \wedge \d z_2 \wedge \d \bar{z}_2}{{(1+|z_1|^2)}^2 {(1+|z_2|^2)}^2},
    \]
    It is clear that $\beta^+ = \infty$, since $c>0$, and that $\beta^- = -1/c$. Moreover, as $\beta$ approaches $-1/c$, a standard computation in residue theory, see the proof of \Cref{thm:2} below, shows that
    \begin{align*}
        \mathcal{Z}^\xi(\beta) = \frac{1}{(\beta+1/c)} \int_{\{ z_1 - z_2 = 0 \}} \xi e^{U}\,\d \widetilde{V} + \mathcal{O}(1),
    \end{align*}
    where
    \[
        \d \widetilde{V} = \frac{i}{2\pi}\frac{\d \tfrac{1}{2}(z_1 + z_2) \wedge \d \tfrac{1}{2}(\bar{z}_1 + \bar{z}_2))}{{\big(1+ |\tfrac{1}{2}(z_1 + z_2)|^2 \big)}^4}.
    \]
    Thus, the coefficient $c_{-1}$ of the Laurent series expansion of $\mathcal{Z}^\xi(\beta)$ about $-1/c$ is given by the action of $\delta_0(z_1 - z_2) \wedge  e^U \d \widetilde{V}$ on $\xi$, where $\delta_0$ is the Dirac measure on $\C$.
\end{example}
When $N\geq 3$, that is, for three or more interacting particles, the situation is more involved than in \Cref{ex:N=2}. The main difficulty arises from the fact that the singular locus of the integrand is not a normal crossings hypersurface. Consider an integral of the form
\begin{equation}
    \label{eq:simpleform}
    I(\beta) = i^N \int_{\C^N} |z_1|^{2(\beta a_1 + b_1)} \cdots |z_\kappa|^{2(\beta a_\kappa + b_\kappa)} \xi \,\d z_1 \wedge \d \bar{z}_1 \wedge \cdots \d z_N \wedge \d \bar{z}_N,
\end{equation}
whose singular set is the normal crossings hypersurface $\{z_1 \cdots z_\kappa = 0\}$. In this case, one can repeat the arguments of \Cref{ex:N=2} in each variable separately (cf.\ the proof of \Cref{thm:2}) to find the weak limit of the Gibbs measure. For $N\geq 3$, however, $\mathcal{Z}^\xi(\beta)$ is not locally of the form \cref{eq:simpleform}. Nevertheless, it turns out that we can reduce to this situation by pulling back the integral defining $\mathcal{Z}^\xi(\beta)$ via a suitable holomorphic
map. We showcase this for $N=3$ with the following example.
\begin{example}
    \label{ex:N=3blowup}
    Let us fix $N = 3$ and assume for simplicity that $c(i,j)>0$ for each $i\neq j$. Just as in \Cref{ex:N=2}, we then have that $\beta^+=\infty$. In the standard affine subspace $\mathbb{C}^3 \subset (\mathbb{CP}^1)^3$, we can write $\d V^{\otimes 3} = i^3 f \,\d z \wedge \d \bar{z}$, where $f = (\tfrac{1}{2\pi})^3 \exp(-2\sum_{j=1}^3 \phi(z_j))$, $\phi(z_j) = \log(1+|z_j|^2)$ and where $\d z \wedge \d \bar{z} = \d z_1 \wedge \d \bar{z}_1 \wedge \d z_2 \wedge \d \bar{z}_2 \wedge \d z_3 \wedge \d \bar{z}_3$. Consider the change of variables
    \begin{equation*}
        w_1 = z_1 - z_2, \quad w_2 = z_2 - z_3,\quad w_3 = \frac{1}{3}(z_1 + z_2 + z_3),
    \end{equation*}
    which satisfies $\d w \wedge \d \bar{w} = \d z \wedge \d \bar{z}$. For $\mathfrak{Re}\,\beta > 0$, and for any test function $\xi$, we have that
    \[
        \mathcal{Z}^\xi_3(\beta) = i^3\int_{\C^3} |w_1|^{2 c(1,2)\beta} |w_2|^{2 c(2,3)\beta} |w_1 + w_2|^{2 c(1,3)\beta} e^{-\beta U} \xi f \,\d w \wedge \d \bar{w}.
    \]
    To determine $\beta^-$, we first address the fact that
    \[
        \{w_1 = 0\} \cup \{w_2 = 0\} \cup \{w_1 + w_2 = 0\}
    \]
    is not a normal crossings hypersurface. To remedy this, we consider the blowup $\pi \colon \mathrm{Bl}_W \C^3 \to \C^3$ along the locus $W = \{w_1 = w_2 = 0\}$. The space $\mathrm{Bl}_W \C^3$ can be covered by two coordinate charts, and the blowup map $\pi$ can be described explicitly on each. More precisely, there exists an open covering $\{U_1,U_2\}$ of $\mathrm{Bl}_W \C^3$ such that, in local holomorphic coordinates
    \[
        (x_1,x_2,x_3) \quad \text{on } U_1 \simeq \C^3, \qquad (y_1,y_2,y_3) \quad \text{on } U_2 \simeq \C^3,
    \]
    the transition relations on $U_1\cap U_2$ are given by
    \[
        x_1 = y_1 y_2,\quad x_2 = 1/y_1,\quad x_3 = y_3.
    \]
    The blowup map can be written locally as
    \[
        \pi(x_1,x_2,x_3) = (x_1, x_1 x_2, x_3)\quad \text{in } U_1, \qquad \pi(y_1,y_2,y_3) = (y_2 y_1, y_2, y_3)\quad \text{in } U_2,
    \]
    cf.\ \cref{eq:blowup local picture}. In the chart $U_1$, we have
    \[
            \begin{aligned}
                &\pi^*\Big(  |w_1|^{2 c(1,2)\beta} |w_2|^{2 c(2,3)\beta} |w_1 + w_2|^{2 c(1,3)\beta} e^{-\beta U} \xi f \,\d w\wedge \d \bar{w}\Big) = \\
                &\qquad\qquad\qquad = |x_1|^{2(a_1\beta + b_1)}|x_2|^{2(a_2\beta + b_2)} |1+x_2|^{2(a_3\beta+b_3)} \pi^*e^{-\beta U}\pi^*\xi \pi^* f \,\d x\wedge \d \bar{x},
            \end{aligned}
    \]
    where
    \[
        a_1 = c(1,2)+c(2,3)+c(1,3),\quad a_2 = c(2,3), \quad a_3 = c(1,3),
    \]
    and
    \[
        b_1 = 1, \quad b_2 = b_3 = 0.
    \]
    The integrand is locally integrable in $U_1$ provided that
    \begin{align*}
        \beta &> \max\Big\{ -\frac{1+b_1}{a_1}, -\frac{1+b_2}{a_2},-\frac{1+b_3}{a_3} \Big\} \\
        &= \max\Big\{ -\frac{2}{c(1,2)+c(2,3)+c(1,3)}, -\frac{1}{c(2,3)},-\frac{1}{c(1,3)} \Big\}.
    \end{align*}
    By symmetry, an analogous integrability condition holds in $U_2$. Since $\pi$ is proper, we conclude that
    \[
        \beta^- = \max\Big\{ -\frac{2}{c(1,2)+c(2,3)+c(1,3)},- \frac{1}{c(1,2)}, -\frac{1}{c(2,3)},-\frac{1}{c(1,3)} \Big\}.
    \]
    With more work, following \Cref{ex:N=2}, one can also deduce the leading asymptotics of the Gibbs measure and the partition function using this blowup construction.
\end{example}

Reducing to the case \cref{eq:simpleform} can be done locally, for general $N$, by means of an embedded resolution of singularities of the pair $((\mathbb{CP}^1)^N,D)$. In our setting, there is an explicit construction of such a resolution for arbitrary $N$, due to Fulton and MacPherson \cite{FM}; see the following section for details. 
\begin{remark}
    The techniques of this paper extend naturally to other Gibbs measures, for instance to variants of Log gases on higher-dimensional varieties $X$. Central to the proofs of \Cref{thm:1,thm:2} is the meromorphic continuation of the partition function as a distribution-valued function of $\beta$, a general version of which can be found in \cite{Sv}. The applicability of the Fulton--MacPherson compactification, which provides an explicit embedded resolution of singularities, relies on the fact that the singularities of the integrand of the partition function are supported on the big diagonal $\bigcup_{i<j}\{x_i=x_j\} \subset X^N$, which is typically the case for pairwise interactions. A related result for Koba--Nielsen string amplitudes over an arbitrary local field of characteristic zero, which closely resemble partition functions of Log gases, can be found in \cite{Bo-GaVeZu-Ga}.
\end{remark}
\subsection{The Fulton--MacPherson compactification of configuration space}
\label{sec: FM}

In \cite{FM}, Fulton and MacPherson introduce a remarkable compactification of the configuration space of a nonsingular algebraic variety. Given a nonsingular algebraic variety $X$ and a natural number $N$, they construct a nonsingular algebraic variety $X^{[N]}$ along with a proper map $\pi\colon X^{[N]}\to X^N$. The map $\pi$ restricts to an isomorphism over the open subset $X^N\setminus \bigcup_{i<j}W_{\{i,j\}}$ where $W_{\{i,j\}}$ is the diagonal $\{(x_1,..,x_N)\in X^N: x_i=x_j\}$, that is, $\pi$ is an isomorphism outside of the degenerate configurations where at least two particles coincide. Moreover, $\pi^{-1}(\bigcup_{i<j}W_{\{i,j\}})$ is a normal crossings divisor in $X^{[N]}$. For simplicity, we will assume that $X$ has complex dimension $1$, since our application concerns $X=\mathbb{CP}^1$.

\medskip

The construction of $X^{[N]}$ is somewhat involved, but it can be constructed as a sequence of blowups $\pi_W$ ,where $W$ ranges over the set of subvarieties
\begin{equation}
    \label{eq:set of diagonals}
    \mathcal{G}\coloneq \big\{ \{ x \in X^N \colon x_{i_1}=\cdots= x_{i_k} \} \colon 1\leq i_1<\cdots < i_k \leq N, k=2,\dots,N \big\}
\end{equation}
of $X^N$, see below. It is convenient to identify $\mathcal{G}$ with the set $G$ of subsets of $\{1,\dots,N\}$ of size at least two, via $\{i_1,\dots,i_k\}\leftrightarrow \{x_{i_1}=\cdots=x_{i_k}\}$. We write $W_S \in \mathcal{G}$ for the subvariety corresponding to the subset $S \in G$ under this bijection. We also denote by $\mathcal G_j$ the subset of $\mathcal{G}$ consisting of elements
\[
    W_{\{i_1,\dots,i_r\}} = \{ x \in X^N \colon x_{i_1} = \cdots = x_{i_r} \}
\]
with $i_1,\dots,i_r\leq j$. 

\medskip

For the reader's convenience, we briefly recall the construction of $X^{[N]}$, in the language of \cite{Li}. The variety $X^{[N]}$ is obtained inductively through a sequence of blowups. Starting from a variety $Y_n$ isomorphic to $X^{[n]}\times X^{N-n}$, one constructs a new variety $Y_{n+1}$ isomorphic to $X^{[n+1]}\times X^{N-(n+1)}$ is constructed. This process begins with $Y_1\coloneq X^{[1]} \times X^{N-1} = X^N$ and terminates with $Y_N\coloneq X^{[N]}$. Moreover, this construction yields a factorization
\begin{equation}
    \pi=\pi_N\circ\cdots\circ \pi_2 \colon Y_N\to \cdots \to Y_1 = (\mathbb{CP}^1)^N.
\end{equation}
For each $n\geq 1$, the space $Y_{n+1}$ is obtained from $Y_n$ by a sequence of blowups. More precisely, we define
\begin{equation}
    \pi_{n+1}=\pi_{n+1,n}\circ\cdots\circ \pi_{n+1,1} \colon Y_{n+1}=Y_{n+1,n}\to \cdots \to Y_{n+1,0}=Y_{n},
\end{equation}
where, for each $1\leq k \leq n$, the map $\pi_{n+1,k}$ is the iterated blowup along the iterated dominant transforms (with respect to all preceding blowups $\pi_{2,1},\pi_{3,1},\pi_{3,2},\hdots$, $\pi_{n+1,1},\hdots,\pi_{n+1,k-1}$) of all of $W\in \mathcal{G}_{n+1}$ with $\mathrm{codim}(W)=n+1-k$. These subvarieties are all pairwise disjoint, so the order of the blowups at this stage is irrelevant. For a given $W\in \mathcal{G}$, we denote by $W^{n+1,k}$ the iterated dominant transform of $W$ under the maps $\pi_{m,\ell}$ with $1\leq m \leq n$ and $1 \leq \ell < k$, so that $W^{n+1,k} \subset Y_{n+1,k}$.

\medskip

Each iterated blowup $\pi_{j,k}$ gives rise to collection of exceptional divisors, which we denote by $E_W^{j,k}$. For any $m\geq j$ and $\ell\geq k$, we denote by $E_W^{m,\ell}$ the strict transform of these divisors with respect to all subsequent blowups up to level $m,\ell$. We include in this notation also the blowups along subvarieties of codimension $1$; although these are isomorphisms, they nevertheless give rise to exceptional divisors, namely themselves, which simplifies the notation. 

On $X^{[n]}\times X^{N-n}$ we write $E_W^n\coloneq E_W^{n,n-1}$, and finally on $X^{[N]}$ we denote $E_W\coloneq E_W^{N,N-1}$, which is defined for every $W\in \mathcal{G}$. In fact, the preimage of $\bigcup_{i<j}W_{\{i,j\}}$ under $\pi$ is precisely the union of these divisors:
\[
    \pi^{-1}\bigg(\bigcup_{i<j} W_{\{i,j\}}\bigg) = \bigcup_{W\in\mathcal{G}} E_W.
\]
By \cite{FM}, $\pi^{-1}(\bigcup_{i<j} W_{\{i,j\}})$ has normal crossings. Hence, $\pi$ is an embedded resolution of singularities of the pair $((\mathbb{CP}^1)^N, \bigcup_{i<j} W_{\{i,j\}})$. 

\medskip 

Now let $X=\mathbb{CP}^1$. To keep track of the singularities of the integrand in \cref{eq: gibbs measure}, we introduce an $\mathbb{R}$-divisor on $(\mathbb{CP}^1)^N$,
\begin{equation}
    \label{eq:Dc}
    D_c \coloneq \sum_{i <j} c(i,j)W_{\{i,j\}}.
\end{equation}
The following two lemmas will be useful for understanding how the integrand behaves under the pullback to the Fulton--MacPherson compactification. 
\begin{lemma}
    \label{lemma: pullback div}
    Let $\pi\colon(\mathbb{CP}^1)^{[N]}\to (\mathbb{CP}^1)^N$ be the Fulton--MacPherson compactification. Then
    \begin{equation}
        \label{eq: order of Dc W}
        \pi^*(D_c) = \sum_{W\in \mathcal{G}} \sum_{\substack{i<j:\\W_{\{i,j\}} \supseteq W}} c(i,j)E_W,
    \end{equation}
    where $E_W$ is the exceptional divisor corresponding to the diagonal $W$. 
\end{lemma}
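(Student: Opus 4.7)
The plan is to use the linearity of divisor pullback to reduce the lemma to the per-pair identity
\[
\pi^*(W_{\{i,j\}})\,=\,\sum_{W\in\mathcal{G}:\,\{i,j\}\subseteq S_W} E_W,
\]
where $S_W\subseteq\{1,\dots,N\}$ denotes the subset corresponding to $W\in\mathcal G$; weighting by $c(i,j)$ and summing then yields \cref{eq: order of Dc W}. Since $\pi$ is an embedded resolution with $\pi^{-1}\bigl(\bigcup_{k<\ell}W_{\{k,\ell\}}\bigr)=\bigcup_W E_W$, the divisor $\pi^*(W_{\{i,j\}})$ is supported in $\bigcup_W E_W$, and it suffices to compute the multiplicities $m_W\coloneq\mathrm{ord}_{E_W}(\pi^*(z_i-z_j))$, where $z_i-z_j$ is a local defining function of $W_{\{i,j\}}$ on the standard affine chart.

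If $\{i,j\}\not\subseteq S_W$, then $z_i-z_j$ does not vanish at a generic point of $W$. Irreducibility of $E_W$ combined with $\pi(E_W)=W$ forces the generic point of $E_W$ to map to the generic point of $W$, so $\pi^*(z_i-z_j)$ is nonvanishing there and $m_W=0$. For the case $\{i,j\}\subseteq S_W$, I would argue by induction on the stages of the FM construction. At the stage $t+1$ where the iterated dominant transform $\tilde W\subset Y_t$ of $W$ is blown up, $\tilde W$ is smooth by \cite{FM, Li}, and the inductive claim is that $\pi_t^{\ast}(z_i-z_j)$ represents a nonzero class in $I_{\tilde W}/I_{\tilde W}^2$ at a generic point of $\tilde W$. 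Granting this, \cref{eq:blowup local picture} gives $\mathrm{ord}_{E_W^{(t+1)}}(\pi_{t+1}^*(z_i-z_j))=1$, and this order is preserved by the remaining blowups via strict transform: writing locally $\pi_{t+1}^{\ast}(z_i-z_j)=u\cdot e$ with $e$ a defining function of $E_W^{(t+1)}$ and $u$ a unit, any subsequent blowup $\sigma$ with center distinct from $\{e=0\}$ gives $\sigma^{\ast}u\cdot\sigma^{\ast}e$, of order one along the strict transform of $\{e=0\}$. Therefore $m_W=1$.

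The main obstacle is the inductive claim that $\pi_t^{\ast}(z_i-z_j)\in I_{\tilde W}\setminus I_{\tilde W}^2$ at a generic point of $\tilde W$. A generic point $p_0\in W$ lies only in those $W''\in\mathcal G$ with $S_{W''}\subseteq S_W$, and since within each block of the FM construction the blowups are performed in order of decreasing codimension while $|S_{W''}|\leq|S_W|$, such a $W''$ precedes $W$ in the blowup order exactly when $\max S_{W''}<\max S_W$ (i.e., in a strictly earlier block). Choosing local coordinates $w_k=z_k-z_{i_1}$ ($k\in S_W\setminus\{i_1\}$ for some fixed $i_1\in S_W$), the function $z_i-z_j$ is a nonzero linear combination of the $w_k$'s. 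Each prior blowup along a center $W''$ transforms these linear combinations via \cref{eq:blowup local picture}, extracting a factor of the corresponding exceptional defining function from those $w_k$ with $\{i_1,k\}\subseteq S_{W''}$; the residual linear combinations locally generate $I_{\tilde W}$, and $\pi_t^{\ast}(z_i-z_j)$ appears as one such nontrivial combination, hence represents a nonzero class in $I_{\tilde W}/I_{\tilde W}^2$.
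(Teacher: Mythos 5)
Your reduction by linearity to the per-pair identity $\pi^*(W_{\{i,j\}})=\sum_{S\supseteq\{i,j\}}E_{W_S}$ is the right first move, and your argument for multiplicity zero when $\{i,j\}\not\subseteq S_W$ is correct. Note, however, that the paper disposes of the remaining case in one line by citing \cite[Theorem 3(3)]{FM}, which states that the scheme-theoretic inverse image of each diagonal is the \emph{reduced} union of the exceptional divisors $E_V$ with $V\subseteq W_{\{i,j\}}$ --- that is, precisely the multiplicity-one statement you set out to prove from scratch. Your plan is therefore a genuinely different, more self-contained route, but as written it has a gap exactly where the real content lies.

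The gap is the inductive claim that $\pi_t^{*}(z_i-z_j)$ represents a nonzero class in $I_{\widetilde{W}}/I_{\widetilde{W}}^2$ at a generic point of the dominant transform $\widetilde{W}$. Two concrete problems. First, your description of how a prior blowup acts --- ``extracting a factor of the corresponding exceptional defining function from those $w_k$ with $\{i_1,k\}\subseteq S_{W''}$'' --- presupposes $i_1\in S_{W''}$, which need not hold: for $S_W=\{1,\dots,5\}$ and $i_1=1$, the center $W_{\{2,3,4\}}$ is blown up in an earlier block, its ideal near a generic point of $W$ is generated by $w_2-w_3$ and $w_3-w_4$ rather than by any of the $w_k$, and the resulting transformation of $z_i-z_j$ is not of the form you describe (one checks directly that the class in $I_{\widetilde{W}}/I_{\widetilde{W}}^2$ survives here, but your argument does not cover the case). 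Second, and more seriously, the assertion that ``the residual linear combinations locally generate $I_{\widetilde{W}}$'' is exactly what needs proof: the centers are iterated \emph{dominant} transforms, which for $W$ contained in an earlier center are full preimages rather than strict transforms, and by \cite[Proposition 3.1]{FM} the center has codimension $2$ at the moment it is blown up regardless of $\mathrm{codim}(W)$, so identifying generators of $I_{\widetilde{W}}$ after several blowups requires an actual computation --- e.g.\ via the explicit nest-adapted local coordinates of \cite{FM} --- rather than an appeal to \cref{eq:blowup local picture} alone. Until that is supplied, multiplicity exactly one is not established; the cleanest fix is to quote \cite[Theorem 3(3)]{FM} directly, as the paper does.
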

\begin{proof}
    By \cite[Theorem 3(3)]{FM}, the scheme-theoretic inverse image of a diagonal $W$ is the union of the exceptional divisors $E_V$ with $V\subset W$. Hence,
    \begin{equation}
        \pi^*(D_c) = \sum_{i<j} c(i,j)\pi^*(W_{\{i,j\}}) = \sum_{i<j} \sum_{W\subseteq W_{\{i,j\}}} c(i,j) E_W = \sum_{W\in \mathcal{G}} \sum_{\substack{i<j:\\W_{\{i,j\}}\supseteq W}} c(i,j) E_W,
    \end{equation}
    as claimed.
\end{proof}
In view of the bijection between $\mathcal{G}$ and $G$, the result above can be rewritten as
\begin{equation}
        \label{eq: order of Dc S}
        \pi^*(D_c) = \sum_{S\in G} \sum_{i<j\in S} c(i,j)E_S,
\end{equation}
where $E_S$ denotes the exceptional divisor $E_{W_S}$ corresponding to $S$.
\begin{lemma}
    \label{lemma: rel can div}
    Let $\pi\colon(\mathbb{CP}^1)^{[N]}\to (\mathbb{CP}^1)^N$ be the Fulton--MacPherson compactification. The relative canonical divisor
    \begin{equation}
        \label{eq: rel can div W}
        K_{(\mathbb{CP}^1)^{[N]} / (\mathbb{CP}^1)^N} = \sum_{W\in\mathcal{G}} (\mathrm{codim}(W)-1)E_W,
    \end{equation}
    where $E_W$ is the exceptional divisor corresponding to $W\in\mathcal{G}$.
\end{lemma}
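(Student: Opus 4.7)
The plan is to combine the chain rule for relative canonical divisors \cref{eq: chain rule for rel can div} with the single-blowup formula \cref{eq: blowup rel can}, applied inductively along the sequence of blowups defining the Fulton--MacPherson compactification. Writing $\pi$ as a linearly ordered composition $Y_0=(\mathbb{CP}^1)^N,\,Y_1,\hdots,Y_M=(\mathbb{CP}^1)^{[N]}$, where $\pi_i\colon Y_i\to Y_{i-1}$ blows up the iterated dominant transform $C_i\subset Y_{i-1}$ of some $W_i\in\mathcal{G}$, the chain rule gives
\[
    K_{Y_i/(\mathbb{CP}^1)^N} = (\mathrm{codim}(C_i)-1)\,E'_i + \pi_i^*\,K_{Y_{i-1}/(\mathbb{CP}^1)^N},
\]
where $E'_i$ denotes the exceptional divisor introduced by $\pi_i$.

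The inductive claim, which yields the lemma at the terminal step $i=M$, is that
\[
    K_{Y_i/(\mathbb{CP}^1)^N} = \sum_{j\leq i}(\mathrm{codim}(W_j)-1)\,\tilde{E}^i_j,
\]
where $\tilde{E}^i_j$ is the iterated strict transform of $E'_j$ in $Y_i$. The inductive step relies on the basic pullback identity $\pi_i^*\tilde{E}^{i-1}_j = \tilde{E}^i_j + m_{ij}\,E'_i$ with $m_{ij}=1$ if $C_i\subset\tilde{E}^{i-1}_j$ and $m_{ij}=0$ otherwise. Collecting the coefficient of $E'_i$, the induction reduces to verifying the combinatorial identity
\[
    \mathrm{codim}(W_i) - \mathrm{codim}(C_i) = \sum_{\substack{j<i\\ C_i\subset\tilde{E}^{i-1}_j}}(\mathrm{codim}(W_j)-1),
\]
which expresses the drop in codimension of the iterated dominant transform of $W_i$ as the accumulated contribution from previously blown up centers $W_j$ satisfying $W_i\subset W_j$ in $(\mathbb{CP}^1)^N$.

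The main obstacle is the geometric bookkeeping needed to establish this identity: one must verify that the intrinsic containment $C_i\subset\tilde{E}^{i-1}_j$ inside $Y_{i-1}$ corresponds precisely to the original containment $W_i\subset W_j$ among the diagonals, and that each such containment contributes a dimension increase of exactly $\mathrm{codim}(W_j)-1$ via the inverse-image case of the dominant transform, while centers not containing $W_i$ produce strict transforms that preserve codimension. Both assertions can be checked through a local coordinate computation in the standard blowup charts of \cref{eq:blowup local picture}, whose rigidity is ensured by the normal crossings property of the exceptional divisor established in \cite{FM}. Codimension-$1$ blowups appear in the sequence but contribute zero, consistent with the formula since $\mathrm{codim}(W)-1=0$ in that case.
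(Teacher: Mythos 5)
Your overall strategy---inducting over the linearly ordered blowup sequence via the chain rule \cref{eq: chain rule for rel can div} and the single-blowup formula \cref{eq: blowup rel can}---is the same as the paper's, which organizes the same induction by the stages $Y_{n}$ of the Fulton--MacPherson construction. The gap lies in the combinatorial identity you reduce to. You assert that the containment $C_i\subset\tilde{E}^{i-1}_j$ corresponds precisely to the containment $W_i\subset W_j$ of the original diagonals (i.e.\ to $S_j\subsetneq S_i$ among previously blown-up centers), and that each such $j$ accounts for a dimension increase of $\mathrm{codim}(W_j)-1$. Both assertions are false, and with them your identity fails numerically. By \cite[Prop.~3.1]{FM} the center $C_i$ always has codimension $2$ (when $\mathrm{codim}(W_i)>1$), so the left-hand side of your identity is $|S_i|-3$, where $W_i=W_{S_i}$. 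In the Fulton--MacPherson ordering, the previously blown-up $W_{S_j}$ with $S_j\subsetneq S_i$ are exactly those with $S_j\subseteq S_i\setminus\{\max S_i\}$ and $|S_j|\geq 2$; if all of them contributed, the right-hand side would be $\sum_{t=2}^{|S_i|-1}\binom{|S_i|-1}{t}(t-2)$, which equals $6$ for $|S_i|=5$ while the left-hand side is $2$.

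The correct input, which is \cite[Prop.~3.4]{FM} and is precisely what the paper's proof invokes, is that $C_i$ is contained in \emph{exactly one} earlier exceptional divisor, namely the strict transform of the one indexed by the single parent $S_i\setminus\{\max S_i\}$; equivalently, $\pi_i^*\tilde{E}^{i-1}_j=\tilde{E}^i_j+E'_i$ only for that one $j$. With this, your identity reads $|S_i|-3=(|S_i|-2)-1$, which is exactly the paper's inductive step. (For instance, for $S_i=\{1,2,3,4,5\}$ the dominant transform of $W_{S_i}$ at the time of its blowup is not contained in the strict transforms of $E_{\{1,2,4\}}$, $E_{\{1,2\}}$, etc., even though $W_{S_i}$ lies in all the corresponding diagonals: the earlier blowups separate those strict transforms from the full fibers of the inverse images making up $C_i$.) Relatedly, your bookkeeping of the codimension drop is off: each inverse-image step in the formation of the dominant transform raises the dimension by $\mathrm{codim}(\text{center at that time})-1=1$, not by $\mathrm{codim}(W_j)-1$, and the drop from $\mathrm{codim}(W_i)$ to $2$ is accumulated over $|S_i|-3$ such steps along the chain of parents. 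A careful local computation would reveal all of this, but as written the identity you propose to verify is not the true one, so the argument does not close without importing the two cited facts from \cite{FM}.
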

\begin{proof}
    The proof proceeds by induction on $n$, using repeatedly the formula for the relative canonical bundle of a blowup \cref{eq: blowup rel can} together with the chain rule \cref{eq: chain rule for rel can div}. As a base case, take $N=2$. Since $(\mathbb{CP}^1)^{[2]}=(\mathbb{CP}^1)^2$,  $K_{(\mathbb{CP}^1)^{[2]} / (\mathbb{CP}^1)^2}=0$. Moving on to the induction step, first, we introduce some notation. Recall that
    \[
        \mathcal{G}_n = \{ W_S \colon  S\cap \{n+1,\hdots,N\} = \emptyset, \,|S|\geq 2 \}.
    \]
    Thus,
    \begin{align*}
        \mathcal{G}_{n+1} \setminus \mathcal{G}_n = \{ W_S \colon S\cap \{n+1,\hdots,N\} = \{n+1\},\, |S| \geq 2\}.
    \end{align*}
    Recall the iterated blowup construction of $(\mathbb{CP}^{1})^{[N]}$ described above, and define
    \[
        \pi_{[n]} = \pi_n \circ \cdots \circ \pi_2 \colon Y_n\to \cdots \to Y_2 \to Y_1 = (\mathbb{CP}^1)^{N},
    \]
    where
    \[
        \pi_j = \pi_{j,j-1}\circ \cdots \circ \pi_{j,1}\colon Y_j = Y_{j,j-1} \to \cdots \to Y_{j,1} \to Y_{j,0} = Y_{j-1},
    \]
    and where $\pi_{j,k}\colon Y_{j,k} \to Y_{j,k-1}$ is the iterated blowup along the iterated dominant transforms (with respect to the previous blowups $\pi_{2,1},\pi_{3,1},\pi_{3,2},\hdots,\pi_{j,1},\hdots,\pi_{j,k-1}$) of all $W \in \mathcal{G}_j$ with $\mathrm{codim}\, W = j-k$. Thus, the Fulton--MacPherson compactification is obtained as $\pi \coloneq \pi_{[N]} \colon  (\mathbb{CP}^1)^{[N]}\to (\mathbb{CP}^1)^N$.

    \medskip

    For the induction hypothesis, suppose that
    \[
        K_{Y_n}-\pi_{[n]}^*(K_{(\mathbb{CP}^1)^{N}}) = \sum_{W\in\mathcal{G}_n} (\mathrm{codim}(W)-1)E_W^n.
    \]
    We then compute
    \begin{equation}
        \label{eq: split of rel canonical comp}
        \begin{aligned}
            K_{Y_{n+1}} - \pi_{[n+1]}^*(K_{(\mathbb{CP}^1)^{N}}) & = K_{Y_{n+1}} - (\pi_{n+1} \circ \pi_{[n]})^*(K_{(\mathbb{CP}^1)^{N}}) \\
            &= K_{Y_{n+1}} - \pi_{n+1}^*\bigg(K_{Y_n} -\sum_{W\in\mathcal{G}_n} (\mathrm{codim}(W)-1)E_W^n\bigg) \\
            &= K_{Y_{n+1}} - \pi_{n+1}^*(K_{Y_n}) \\
            &\qquad\qquad + \sum_{W\in\mathcal{G}_n} (\mathrm{codim}(W)-1)\pi_{n+1}^*(E_W^n).
        \end{aligned}
    \end{equation}
    To proceed, we recall the following fact from \cite[Proposition 3.4]{FM}:
    \begin{equation}
        \pi_{n,k}^*(E^{n,k-1}_W)=E^{n,k}_W
    \end{equation}
    unless $W=W_S$ with $n\notin S$ and $|S|=n-k$ in which case
    \begin{equation}
        \pi_{n,k}^*(E^{n,k-1}_W) = E^{n,k}_W+E^{n,k}_{W'},
    \end{equation}
    where $W'=W_{S'}$ with $S'=S\cup\{n\}$. 

    Using this, we compute
    \begin{align*}
        K_{Y_{n+1}}-\pi_{n+1}^*(K_{Y_n})&= K_{Y_{n+1}}-(\pi_{n+1,n+1}\circ\cdots\circ\pi_{n+1,1})^* (K_{Y_n})\\
        &=\sum_{W\in\mathcal{G}_{n+1}\setminus \mathcal{G}_n} (\mathrm{codim}(W^{n,k})-1)E_{W}^{n+1} \\
        &=\sum_{\substack{W\in\mathcal{G}_{n+1}\setminus \mathcal{G}_n:\\\mathrm{codim}(W)>1}} E_{W}^{n+1}
    \end{align*}
    where, in the last step, we use that $W^{n,k}$ has codimension $2$ whenever $\mathrm{codim}(W)>1$ and codimension $1$ otherwise, see \cite[Proposition 3.1]{FM}.
    Next we compute
    \begin{align*}
        \sum_{W\in\mathcal{G}_n} (\mathrm{codim}(W)-1)\pi_{n+1}^*(E_W^n) &= \sum_{W\in\mathcal{G}_n} (\mathrm{codim}(W)-1)(E^{n+1}_W+E^{n+1}_{W'})\\
        &= \sum_{W\in\mathcal{G}_n} (\mathrm{codim}(W)-1)E^{n+1}_W\\
        &\qquad\qquad + \sum_{W\in\mathcal{G}_{n+1}\setminus\mathcal{G}_n} (\mathrm{codim}(W)-2)E^{n+1}_W,
    \end{align*}
    where $W'$ is given by $W_{S'}$ where $S'=S\cup\{n\}$ and $W=W_S$. 
    
    Going back to \eqref{eq: split of rel canonical comp} and putting it all together we get, 
    \begin{align*}
        K_{Y_{n+1}} - \pi_{[n+1]}^*(K_{(\mathbb{CP}^1)^{N}}) &= K_{Y_{n+1}} - \bigg(K_{Y_n+1}-\sum_{\substack{W\in\mathcal{G}_{n+1}\setminus \mathcal{G}_n:\\\mathrm{codim}(W)>1}} E_{W}^{n+1}\bigg) \\
        &\qquad\qquad+\sum_{W\in\mathcal{G}_n} (\mathrm{codim}(W)-1)E^{n+1}_W \\
        &\qquad\qquad + \sum_{W\in\mathcal{G}_{n+1}\setminus\mathcal{G}_n} (\mathrm{codim}(W)-2)E^{n+1}_W \\
        &= \sum_{W\in \mathcal{G}_{n+1}} (\mathrm{codim}(W)-1) E_W^{n+1}.
    \end{align*}
\end{proof}

In view of the bijection between $\mathcal{G}$ and $G$, the above result can be written as
\begin{equation}
        \label{eq: rel can div S}
        K_{(\mathbb{CP}^1)^{[N]}} - \pi^*(K_{(\mathbb{CP}^1)^N}) = \sum_{S\in G} (|S|-2)E_S.
\end{equation}
For \Cref{thm:2,thm:3} we will also need to know which exceptional divisors $E_W$ on $(\mathbb{CP}^1)^{[N]}$ intersect each other. To this end we have the following description from \cite{FM}.
\begin{lemma}[{\cite[Theorem 3(2)]{FM}}]
    \label{lemma: intersection pattern}
    The intersection 
    \begin{equation*}
        E_{S_1}\cap\cdots \cap E_{S_k},
    \end{equation*}
    for subsets $S_1,\dots,S_k\in G$, is non-empty if and only if for each pair $1\leq i,j\leq k$ either one of $S_i$ and $S_j$ is contained in the other, or $S_i \cap S_j = \emptyset$. 
\end{lemma}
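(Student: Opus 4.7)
The plan is to argue directly from the iterated blowup description of $\pi\colon (\mathbb{CP}^1)^{[N]} \to (\mathbb{CP}^1)^N$ in \Cref{sec: FM}, treating the two implications separately.

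For the \emph{only if} direction, I will show that whenever two members $S,T$ of $\{S_1,\ldots,S_k\}$ overlap without being nested (that is, $S \cap T \neq \emptyset$ but neither contains the other), already $E_S \cap E_T = \emptyset$. Since $S \cap T \neq \emptyset$ forces $W_S \cap W_T = W_{S \cup T}$ and $|S \cup T| > \max\{|S|,|T|\}$, the diagonal $W_{S \cup T}$ has strictly larger codimension than both $W_S$ and $W_T$, and is therefore blown up at a strictly earlier stage. Fix $k \in S \cap T$ and work in local coordinates $u_i = x_i - x_k$ for $i \in (S \cup T)\setminus\{k\}$: then $W_{S \cup T} = \{u_i = 0 : i \in (S \cup T)\setminus\{k\}\}$, while $W_S = \{u_i = 0 : i \in S \setminus \{k\}\}$ and $W_T = \{u_i = 0 : i \in T \setminus \{k\}\}$. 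After blowing up $W_{S \cup T}$, the strict transforms of $W_S$ and $W_T$ meet the exceptional fiber $\mathbb{CP}^{|S \cup T|-2}$ in the projective subspaces defined by these same linear equations. Since $(S \setminus \{k\}) \cup (T \setminus \{k\}) = (S \cup T)\setminus \{k\}$, the union of the two sets of equations covers \emph{all} projective coordinates, forcing the two subspaces to be disjoint. Hence the dominant transforms of $W_S$ and $W_T$ are separated after this blowup, and all later blowups of the FM construction act on iterated dominant transforms and so preserve disjointness. This gives $E_S \cap E_T = \emptyset$.

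For the \emph{if} direction, I will construct a point in $E_{S_1} \cap \cdots \cap E_{S_k}$ explicitly when the $S_j$ form a nest. Pick a base point $x \in \bigcap_j W_{S_j} \subset (\mathbb{CP}^1)^N$, which is non-empty since the nest corresponds to a forest of compatible clustering constraints. Lift $x$ through the blowups of the FM construction in the prescribed order: at each stage that blows up the iterated dominant transform of some $W_{S_j}$ from the nest, choose a point in the projectivized normal fiber above the current lift that lies in the subspaces corresponding to the smaller members $S_i \subsetneq S_j$ of the nest still waiting to be blown up. The nest condition guarantees that these subspaces are themselves either nested or disjoint inside the fiber, so a compatible choice is always available. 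Iterating yields a point of $(\mathbb{CP}^1)^{[N]}$ lying on $E_{S_j}$ for every $j$.

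The main obstacle I anticipate is the careful bookkeeping of dominant transforms through the full FM sequence. In the only-if direction one must confirm that separation achieved at the blowup of $W_{S \cup T}$ is never undone by a later blowup passing through the separating exceptional divisor; in the if-direction one must confirm that the inductive screen choices can be made simultaneously compatible with every member of the nest. Both verifications reduce to the local blowup formula \cref{eq:blowup local picture} together with the pullback rule $\pi_{n,k}^*(E_W^{n,k-1}) = E_W^{n,k}$ (plus possibly one extra exceptional divisor) from \cite[Proposition 3.4]{FM}, already used in the proof of \Cref{lemma: rel can div}.
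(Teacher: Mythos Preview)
The paper does not prove this lemma; it simply cites \cite[Theorem 3(2)]{FM}. Your proposal is therefore not comparable to the paper's ``proof'' but rather an attempt to supply one. The \emph{if} direction via lifting through screens is essentially the argument the paper itself uses later (Step~3 of the proof of \Cref{thm:2}), and is fine.

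The \emph{only if} direction, however, has a genuine gap. Your claim that ``the diagonal $W_{S\cup T}$ has strictly larger codimension than both $W_S$ and $W_T$, and is therefore blown up at a strictly earlier stage'' is false for the FM ordering described in \Cref{sec: FM}. The FM construction processes $W_S$ at step $n+1=\max(S)$, and only within a fixed step does larger codimension come first. Concretely, take $N=4$, $S=\{1,2,3\}$, $T=\{3,4\}$. Then $W_S$ is (nontrivially) blown up at stage $(3,1)$, while $W_{S\cup T}=W_{\{1,2,3,4\}}$ is not blown up until stage $(4,1)$, \emph{after} $W_S$. Your local computation in the exceptional fibre of the blowup of $W_{S\cup T}$ therefore does not apply as stated.

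The underlying idea can be salvaged: what actually separates the transforms of $E_S$ and $E_T$ is the blowup of the \emph{iterated dominant transform} of $W_{S\cup T}$, which by \cite[Proposition 3.1]{FM} is a smooth codimension-$2$ subvariety at the moment it is blown up, and which one can check locally equals the intersection of the current transforms of $W_S$ and $W_T$ (or of $E_S$ and $W_T$, etc.). Blowing up that intersection then separates them, exactly as in your fibre computation. But making this precise requires tracking the dominant transforms through all prior blowups, not just invoking codimension, and that bookkeeping is the actual content of the cited result in \cite{FM}.
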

\section{Proofs of the main theorems}
\label{sec: main theorems}

\begin{proof}[Proof of \Cref{thm:1}]
    Let $\pi\colon(\mathbb{CP}^1)^{[N]}\to (\mathbb{CP}^1)^N$ be as in \Cref{sec: FM}, that is, it arises from the Fulton-MacPherson compactification of the configuration space of $\mathbb{CP}^1$. Since $\pi$ restricts to a biholomorphism over $(\mathbb{CP}^1)^N\setminus D$, and since $D=\bigcup_{i<j}W_{\{i,j\}}$ has codimension 1,
    the finiteness of the partition function \cref{eq: partition function intro} is reduced to the integrability of the measure
    \begin{equation*}
        (\pi|_{(\mathbb{CP}^1)^{[N]}\setminus \pi^{-1}(D)})^*\bigg(\exp(-\beta (E(p_1,\dots,p_N))\,\d V^{\otimes N}\bigg),
    \end{equation*}
    extended by zero on $(\mathbb{CP}^1)^{[N]}$. The map $\pi$ being an embedded resolution of singularities of $\bigcup_{i<j}W_{\{i,j\}} \subset (\mathbb{CP}^1)^N$ implies that we can cover $(\mathbb{CP}^1)^{[N]}$ with coordinate charts on which the hypersurface $\mathrm{supp}(\pi^*(D))\cup\mathrm{supp}(K_{(\mathbb{CP}^1)^{[N]}/(\mathbb{CP}^1)^N})=\mathrm{supp}(\pi^*(D))$ has normal crossings. Thus, for any such chart $U$, we can find holomorphic coordinates $w = (w_1,\hdots,w_n)$ such that
    \begin{equation}
        \label{eq:pullbackofgibbs}
        (\pi|_{(\mathbb{CP}^1)^{[N]}\setminus \pi^{-1}(D)})^*\bigg(\exp(-\beta E)\,\d V^{\otimes N}\bigg)\bigg|_U = \prod_{\ell=1}^k |w_\ell|^{2a_\ell\beta+2b_\ell} e^{-\beta \pi^* \Phi}\,\mathrm{d}\widetilde{V}^{\otimes N},
    \end{equation}
    locally in $U$, where $0\leq k\leq N$, $a_\ell$ and $b_\ell$ are real numbers, $\mathrm{d}\widetilde{V}^{\otimes N}$ is a volume form on $U$ and $\Phi$ is (locally) defined by \cref{eq:Phimetric}. For any $\ell=1,..,k$, the hypersurface $\{w_\ell=0\}$ on $U$ corresponds to the restriction to $U$ of $E_W$ for some $W\in \mathcal{G}$, or, equivalently, to some $E_S$ for $S\in G$. By definition, $a_\ell$ is the coefficient in front of $E_W$ in \cref{eq: order of Dc W} in \Cref{lemma: pullback div}. Similarly, $b_\ell$ is the coefficient in front of $E_W$ in \cref{eq: rel can div W} in \Cref{lemma: rel can div}, cf.\ \Cref{sec: pulling back measures with analytic singularities}. Thus, $a_\ell=\sum_{i<j: W_{\{i,j\}} \supseteq W} c(i,j)$ and $b_\ell=\mathrm{codim}(W)-1$ by \Cref{lemma: pullback div,lemma: rel can div}. The finiteness of the partition function reduces to the inequalities
    \begin{equation}
        2\beta \sum_{\substack{i<j:\\W_{\{i,j\}} \supseteq W}} c(i,j)+2(\mathrm{codim}(W)-1)>-2,
    \end{equation}
    that should hold for all $W\in\mathcal{G}$. In terms of the bijection to subsets of $\{1,\dots,N\}$, the finiteness is equivalent to the inequalities
    \begin{equation*}
        2\beta\sum_{i<j\in S}c(i,j)+2(|S|-2)>-2,
    \end{equation*}
    that should hold for all $S\subset \{1,\dots,N\}$ of size at least $2$. These inequalities correspond to lower or upper bounds for $\beta$ depending on whether $\sum_{i<j\in S}c(i,j)$ is positive or negative. The sharp bounds that characterize finiteness are then given in terms of optimization problems over subsets of size at least $2$ of $\{1,\dots,N\}$, corresponding precisely to Problem \ref{prob: max} and Problem $\ref{prob: min}$, yielding the statement.
\end{proof}
The idea behind the proof of \Cref{thm:2} is to show that, for any test function $\xi$, $\mathcal{Z}^\xi(\beta)$ defined in \cref{eq:Z_distribution} admits a meromorphic continuation to neighborhoods of the critical values $\beta^+$ and $\beta^-$. We then analyze the Laurent series expansions of $\mathcal{Z}^\xi(\beta)$ about these points. When all coupling coefficients $c(i,j)$ are positive integers, the existence of such an expansion, particularly in a local setting, is a classical result of Atiyah \cite{Atiyah} and Bernstein--Gelfand \cite{BG}, with several known generalizations. In our setting, we provide a direct proof that follows closely the classical arguments. 

\medskip

\begin{proof}[Proof of \Cref{thm:2}]
    We divide proof into three steps. In Step 1, we show that $\mathcal{Z}^\xi(\beta)$ admits a meromorphic continuation to all of $\mathbb{C}$, with a discrete set of poles along the real line. This is a slight variation of a classical result, and the techniques involved in this part of the proof are standard. In Step 2, we show that the lowest-order nonvanishing coefficient in the Laurent expansion of $\mathcal{Z}^\xi(\beta)$ about $\beta^\pm$ defines the action of a positive measure on the test function $\xi$. Furthermore, we show that the weak convergence of the Gibbs measure as $\beta \to \beta^\pm$ follows from this fact. Lastly, in Step 3, we determine the support of the limiting measure. We prove \Cref{thm:2} for $\beta \to \beta^-$; the case $\beta \to \beta^+$ is completely analogous and, in fact, follows from the proof for $\beta\to \beta^-$ by replacing $\beta$ with $-\beta$.

    \medskip 
    \noindent
    \textbf{Step 1: $\mathcal{Z}^\xi(\beta)$ has a meromorphic continuation to all of $\mathbb{C}$}.

    Let $\pi \colon (\mathbb{CP}^1)^{[N]} \to (\mathbb{CP}^1)^N$ be the Fulton--MacPherson compactification. In particular, the divisor $\mathrm{supp}(\pi^*(D_c))\cup\mathrm{supp}(K_{(\mathbb{CP}^1)^{[N]}/(\mathbb{CP}^1)^N})$ has normal crossings. Recall, from the proof of \Cref{thm:1}, that we can find an open cover of $(\mathbb{CP}^1)^{[N]}$, where, in each chart, the pullback of the Gibbs measure takes the form \cref{eq:pullbackofgibbs}. Let $\{U_j\}$ be such a cover of $\mathrm{supp}\,\pi^*\xi \subseteq (\mathbb{CP}^1)^{[N]}$, and let $\{\rho_j\}$ be a partition of unity subordinate to this cover. For $\beta^- < \mathfrak{Re}\,\beta < \beta^+$, it then follows that $\mathcal{Z}^\xi(\beta)$ is a finite (since $\pi$ is proper) sum of integrals of the form
    \begin{equation}
        \label{eq:Zlocally}
        I_j(\beta) = \int_{U_j}|w_1|^{2(a^j_1 \beta + b^j_1)} \cdots |w_{k^j}|^{2 (a^j_{k^j} \beta + b^j_{k^j})} \Psi^j(\beta) \,\d \widetilde{V}^{\otimes N},
    \end{equation}
    where $a^j_i,b^j_i \in \mathbb{R}$, $0\leq k^j \leq N$, where
    \begin{equation}
        \label{eq:Psibeta}
        \Psi^j(\beta) = \rho_j \pi^* \xi \exp\big({-\beta \pi^* \Phi} \big)
    \end{equation}
    is smooth and compactly supported in $U_j$, uniformly in $\beta$, and where $\d \widetilde{V}^{\otimes N}$ is a volume form on $(\mathbb{CP}^1)^{[N]}$. Suppressing the dependence on the chart $U_j$ in the notation in the sequel, if $k = 0$, then $I(\beta)$ is defined and holomorphic for all $\beta \in \C$, so we can assume that $k\geq 1$. Moreover, without loss of generality, we may assume that $a_1,\hdots,a_\ell > 0$ and $a_{\ell+1},\hdots,a_k < 0$, for some $1\leq \ell < k$, and then, in view of \cref{eq:Zlocally}, conclude that $I(\beta)$ is defined and holomorphic for
    \begin{equation}
        \label{eq:localholomorphicitybounds}
        \max_{1\leq j \leq \ell} -\frac{1+b_j}{a_j} < \mathfrak{Re}\,\beta < \min_{\ell+1\leq j \leq k} -\frac{1+b_j}{a_j}.
    \end{equation}
    Note that if $a_j>0$ for each $j$, then the upper bound in \cref{eq:localholomorphicitybounds} is $\infty$, and similarly, if $a_j<0$ for each $j$, then the lower bound is $-\infty$.
    
    \medskip
    
    We want to show that $I(\beta)$ has a meromorphic continuation to all of $\C_\beta$. The classical idea is to consider the following Bernstein--Sato type relations
    \begin{equation}
        \label{eq:bernstein1}
        \frac{\partial}{\partial w}|w|^{2(a\beta + b + 1)} = (a\beta + b + 1) \frac{|w|^{2(a\beta + b + 1)}}{w},\ \frac{\partial}{\partial \bar{w}}|w|^{2(a\beta + b + 1)} = (a\beta + b + 1) \frac{|w|^{2(a\beta + b + 1)}}{\bar{w}},
    \end{equation}
    which, for $\mathfrak{Re}(a \beta + b) > - 1$, are equalities of $L^1_{\mathrm{loc}}$-functions on $\C_w$. By an induction argument, it follows from \cref{eq:bernstein1} that
    \begin{equation}
        \label{eq:bernstein2}
        |w|^{2(a\beta + b)} = \prod_{j=1}^{m}(a\beta + b + j)^{-2} \frac{\partial^{2m}}{\partial w^m \partial \bar{w}^m} |w|^{2(a\beta + b + m)},
    \end{equation}
    for any $m \in \mathbb{N}$. Thus, for any $(m_1,\hdots,m_k)\in \mathbb{N}^k$, we have, by repeated application of \cref{eq:bernstein2}, that
    \[
        I(\beta) = h(\beta) \int_{\C^N} \frac{\partial^{2(m_1 + \cdots + m_k)} \big(|w_1|^{2(a_1\beta + b_1 + m_1)} \cdots |w_{k}|^{2(a_{k}\beta + b_{k} + m_k)}  \big)}{\partial w_1^{m_1} \partial \bar{w}_1^{m_1} \cdots \partial w_k^{m_k} \partial \bar{w}_k^{m_k}} \Psi(\beta) \,\d \widetilde{V}^{\otimes N},
    \]
    where
    \begin{equation}
        \label{eq:h}
        h(\beta) = \prod_{i=1}^k \prod_{j=1}^{m_i}(a_i\beta + b_i + j)^{-2}.
    \end{equation}
    Using integration by parts we have that
    \begin{equation}
        \label{eq:localintegral1}
        I(\beta) = h(\beta) \int_{\mathbb{C}^N} |w_1|^{2(a_1\beta + b_1 + m_1)} \cdots |w_{k}|^{2(a_{k}\beta + b_{k} + m_k)} \frac{\partial^{2(m_1 + \cdots + m_k)} \Psi(\beta)}{\partial w_1^{m_1} \partial \bar{w}_1^{m_1} \cdots  \partial \bar{w}_k^{m_k}} \,\d \widetilde{V}^{\otimes N}. 
    \end{equation}
    Note that the integral on the right-hand side of \cref{eq:localintegral1} converges and is holomorphic in the strip
    \[
        \max_{1\leq j \leq \ell} -\frac{1+b_j+m_j}{a_j} < \mathfrak{Re}\,\beta < \min_{\ell+1\leq j \leq k} -\frac{1+b_j+m_j}{a_j}.
    \]
    Moreover, for any $(m_1,\hdots,m_k)\in \mathbb{N}^k$, the function $h(\beta)$ given by \cref{eq:h} is a meromorphic function on $\C_\beta$, with poles given by
    \[
        \beta = - \frac{b_i + j}{a_i},\quad \text{for } j=1,\hdots,m_i,\quad i = 1,\hdots,k.
    \]
    Thus, since $(m_1,\hdots,m_k)$ can be chosen arbitrarily, it follows by letting each $m_j \to \infty$ that $I(\beta)$ has a meromorphic continuation to all of $\C_\beta$, with poles lying in a discrete subset of
    \[
        \Big(-\infty, \max_{1\leq j \leq \ell} -\frac{1+b_j}{a_j}\Big] \cup \Big[\min_{\ell+1\leq j \leq k} -\frac{1+b_j}{a_j}, \infty \Big).
    \]
    This implies that $\mathcal{Z}^\xi(\beta)$ has a meromorphic continuation to all of $\C_\beta$, with poles lying in a discrete subset of $(-\infty,\beta^-] \cup [\beta^+,\infty)$.

    \medskip
    \noindent
    \textbf{Step 2: The weak convergence of $\mu_\beta$.}
    
    From the previous step we know that, for any test function $\xi$, $\mathcal{Z}^\xi(\beta)$ extends to a meromorphic function on all of $\mathbb{C}_\beta$, hence we can consider its Laurent series expansion. For $\beta$ in a neighborhood of $\beta^-$ we have that
    \begin{equation}
        \label{eq: laurent series expansion}
        \mathcal{Z}^\xi(\beta) = \sum\limits_{j=-\kappa}^\infty (\beta-\beta^{-})^j \langle \nu_j(\beta^-),\xi\rangle,
    \end{equation}
    where $\langle \nu_j(\beta^-),\xi\rangle$ denotes the $j$\textsuperscript{th} order Laurent series coefficient. Here we take $\kappa$ to be the maximum over all test functions $\xi$ of the order of the pole of $\mathcal{Z}^\xi(\beta)$ at $\beta^-$. This is possible since, in view of Step 1 and \cref{eq:h} in particular, the order of the pole of $\mathcal{Z}^\xi$ at $\beta^-$, for any $\xi$, is at most $2N$. 
    As aforementioned, and as we will see below, 
    \[
        \nu_{-j}(\beta^-) \colon \xi \mapsto \langle\nu_{-j}(\beta^-),\xi\rangle
    \] 
    is a distribution on $(\mathbb{CP}^1)^N$ for each $j$. Thus, $\kappa$ can equivalently be defined as the unique positive integer such that $\nu_{-\kappa}(\beta^-)\not\equiv 0$ and $\nu_{-j}(\beta^-) \equiv 0$ for each $j<-\kappa$.

    If $\beta^- <\mathfrak{Re}\,\beta < \beta^+$, such that $\mathcal{Z}(\beta)$ is finite, we can consider the action $\langle \mu_\beta,\xi\rangle$ of the Gibbs measure $\mu_\beta$ on $\xi$. Recall that,
    \[
        \mathcal{Z}(\beta) = \mathcal{Z}^{\xi=1 }(\beta) = \sum\limits_{j=-\kappa}^\infty (\beta-\beta^{-})^j \langle \nu_j(\beta^-),1\rangle.
    \]
    Thus, in view of \cref{eq:actionofgibbs}, we have
    \begin{equation}
        \label{eq:mubeta at beta-}
        \begin{aligned}
            \langle \mu_\beta, \xi\rangle &= \frac{\mathcal{Z}^\xi(\beta)}{\mathcal{Z}(\beta)} \\
            &=\bigg(\sum\limits_{j=-\kappa}^\infty (\beta-\beta^-)^j \langle \nu_j(\beta^-),1\rangle\bigg)^{-1} \sum\limits_{j=-\kappa}^\infty (\beta-\beta^-)^j \langle \nu_j(\beta^-),\xi\rangle \\
            &= \frac{\langle \nu_{-\kappa}(\beta^-),\xi \rangle }{\langle \nu_{-\kappa}(\beta^-),1\rangle + \mathcal{O}(\beta-\beta^-)} + \frac{\sum_{j=-\kappa+1}^\infty (\beta-\beta^-)^{j+\kappa} \langle \nu_j(\beta^-),\xi\rangle }{\langle \nu_{-\kappa}(\beta^-),1\rangle + \mathcal{O}(\beta-\beta^-)},
        \end{aligned}
    \end{equation}
    which clearly converges to  $\langle \nu_{-\kappa}(\beta^-),\xi\rangle /\langle\nu_{-\kappa}(\beta^-),1\rangle$, as $\beta \to \beta^-$, as long as the denominator $\langle\nu_{-\kappa}(\beta^-),1\rangle \neq 0$. Given that $\nu_{-\kappa}(\beta^-)$ is a measure, then the denominator is nonzero in view of the definition of $\kappa$. Moreover, we get the weak convergence of $\mu_\beta$ towards the measure $\nu_{-\kappa}(\beta^-)/\langle \nu_{-\kappa}(\beta^-),1\rangle$. 

    \medskip

    \textit{Claim:} $\nu_{-\kappa}(\beta^-)$ is a measure. 
    
    To this end, $\langle \nu_{-\kappa}(\beta^-),\xi\rangle$ can be evaluated as
    \begin{equation}
        \label{eq: residue formula}
        \begin{aligned}
            \langle \nu_{-\kappa}(\beta^-),\xi\rangle &= \underset{\beta = \beta^-}{\mathrm{Res}} \big\{(\beta-\beta^-)^{\kappa - 1} \mathcal{Z}^\xi(\beta) \big\} \\
            &= \lim_{\beta \to \beta^-}(\beta-\beta^-)^{\kappa} \mathcal{Z}^\xi(\beta),
        \end{aligned}
    \end{equation}
    for any test function $\xi$. Recall from the previous step that, by pulling back to the Fulton--MacPherson compactification $(\mathbb{CP}^1)^{[N]}$ and introducing a partition of unity, we can equate $\mathcal{Z}^\xi(\beta)$ to a finite sum of integrals $I(\beta)$ of the form \Cref{eq:Zlocally}. For each such integral $I(\beta)$ we may assume, without loss of generality, that $a_1,\hdots,a_\ell > 0$ and that $a_{\ell+1},\hdots,a_{k} < 0$, and recall that $I(\beta)$ is defined and holomorphic for
    \[
        \max_{1\leq j \leq \ell} -\frac{1+b_j}{a_j} < \mathfrak{Re}\,\beta < \min_{\ell+1\leq j \leq k} -\frac{1+b_j}{a_j}.
    \]
    Let $1 \leq \ell_1 \leq \ell$ be the number of pairs $(a_j,b_j)$ that attain the maximum above and $1 \leq \ell_2 \leq k-\ell$ the number of pairs that attain the minimum. After a possible relabeling, we have that
    \begin{equation}
        \label{eq:localassumption1}
        \max_{1\leq j \leq \ell} -\frac{1+b_j}{a_j} = -\frac{1+b_1}{a_1} = \cdots = -\frac{1+b_{\ell_1}}{a_{\ell_1}} > \max_{\ell_1 + 1 \leq j \leq \ell} -\frac{1+b_j}{a_j},
    \end{equation}
    and
    \begin{equation}
        \label{eq:localassumption2}
        \min_{\ell+1\leq j \leq k } -\frac{1+b_j}{a_j} = -\frac{1+b_{\ell+1}}{a_{\ell+1}} = \cdots = -\frac{1+b_{\ell + \ell_2}}{a_{\ell + \ell_2}} < \min_{\ell + \ell_2 + 1 \leq j \leq k} -\frac{1+b_j}{a_j}.
    \end{equation}
    Let $\beta_{\mathrm{loc}}^-\coloneq - (1+b_1)/a_1$ and $\beta_{\mathrm{loc}}^+ = - (1+ b_{\ell+1})/a_{\ell+1}$. Furthermore, let
    \begin{equation}
        \label{eq:Gbeta}
        G_\beta(w) = \prod_{i=\ell_1 + 1}^{\ell}|w_{i}|^{2(a_{i}\beta + b_{i})} \prod_{j=\ell+\ell_2+1}^k|w_{j}|^{2(a_{j}\beta + b_{j})},
    \end{equation}
    which is locally integrable for $\beta \in [\beta_{\mathrm{loc}}^-,\beta_{\mathrm{loc}}^+]$ by \cref{eq:localassumption1,eq:localassumption2}. We have that
    \[
        I(\beta) = \int_{\C^N} \prod_{i=1}^{\ell_1}|w_i|^{2(a_i( \beta-\beta_{\mathrm{loc}}^-) - 1)} \prod_{j=\ell+1}^{\ell + \ell_2} |w_{j}|^{2(a_{j}( \beta-\beta_{\mathrm{loc}}^+) - 1)} G_\beta(w) \Psi(w) \,\d \widetilde{V}^{\otimes N},
    \]
    cf.\ \cref{eq:Zlocally}.

    There are now two cases: Either $\beta_{\mathrm{loc}}^- < \beta^-$, in which case $I$ is convergent at $\beta^-$, or $\beta_{\mathrm{loc}}^- = \beta^-$. In the first case the term $I$ gives no contribution to \cref{eq: residue formula}. Suppose that the latter is true. Write $\d \widetilde{V}^{\otimes N} = f(w) i^N\d w_1 \wedge \d \bar{w}_1 \wedge \cdots \wedge \d w_N \wedge \d \bar{w}_N$, where $f(w)$ is a smooth and strictly positive function. By repeated use of \cref{eq:bernstein1} we have that
    \begin{equation}
        \label{eq:Idistr}
        \begin{aligned}
        I &= \frac{(-1)^{\ell_1}}{a_1 \cdots a_{\ell_1}} (\beta-\beta^-)^{-\ell_1}\int_{\C^N} \bigg(\bigwedge_{i=1}^{\ell_1} \frac{\bar{\partial}|w_i|^{2a_i( \beta-\beta^-)}}{w_{i}} \wedge \d w_i \bigg) \wedge \prod_{j=\ell+1}^{\ell + \ell_2} |w_{j}|^{2(a_{j}( \beta-\beta_{\mathrm{loc}}^+) - 1)} \\
        &\qquad\qquad\qquad\qquad\qquad\,\,\,\, \times G_\beta(w) \Psi(w) f(w) {i^N} \d w_{\ell_1+1} \wedge \d \bar{w}_{\ell_1+1}\wedge \cdots \wedge \d w_N \wedge \d \bar{w}_N.
        \end{aligned}
    \end{equation}
    It is a standard result in residue theory that the distribution valued map
    \[
        \lambda \mapsto \frac{\bar{\partial}|w|^{2\lambda}}{w} \wedge \d w,
    \]
    is holomorphic in a neighborhood of $\mathfrak{Re}\,\lambda \geq 0$, and, moreover, that
    \[
        \frac{\bar{\partial}|w|^{2\lambda}}{w} \wedge \d w\bigg|_{\lambda = 0} = 2\pi i \delta_0(w),
    \]
    where $\delta_0(w)$ is the Dirac distribution. Similarly, the map
    \[
        (\lambda_1,\hdots,\lambda_{\ell_1}) \mapsto \frac{\bar{\partial}|w_1|^{2\lambda_1}}{w_1} \wedge \d w_1 \wedge \cdots \wedge \frac{\bar{\partial}|w_{\ell_1}|^{2\lambda_{\ell_1}}}{w_{\ell_1}} \wedge \d w_{\ell_1},
    \]
    which is just a $\ell_1$-fold tensor product of distributions on $\C$, is holomorphic in a neighborhood of the half-space $\{ (\lambda_1,\hdots,\lambda_{\ell_1})\in \C^{\ell_1} \colon \mathfrak{Re}\,\lambda_j \geq 0 \ \text{for } j=1,\hdots,\ell_1\}$, and 
    \begin{equation}
        \label{eq:product of deltas}
        \frac{\bar{\partial}|w_1|^{2\lambda_1}}{w_1} \wedge \d w_1 \wedge \cdots \wedge \frac{\bar{\partial}|w_{\ell_1}|^{2\lambda_{\ell_1}}}{w_{\ell_1}} \wedge \d w_{\ell_1}\bigg|_{\lambda_1 = \cdots = \lambda_{\ell_1} = 0} = (2\pi i)^{\ell_1} \delta_0(w_1) \wedge \cdots \wedge \delta_0(w_{\ell_1}).
    \end{equation}
    Thus, since 
    \[
        \prod_{j=\ell+1}^{\ell + \ell_2} |w_{j}|^{2(a_{j}( \beta-\beta_{\mathrm{loc}}^+) - 1)} G_\beta(w) \Psi(w)f(w)
    \]
    depends smoothly on $(w_1,\hdots,w_{\ell_1})$ (locally uniformly in $\beta$), see \cref{eq:Psibeta} and \cref{eq:Gbeta}, the integral on the right-hand side of \cref{eq:Idistr} is defined and holomorphic in a neighborhood of $\beta = \beta^-$. Moreover, $I$ has a pole of order at most $\ell_1$ at $\beta^-$ and, by \cref{eq:product of deltas}, we have that
    \begin{equation}
        \label{eq: local limit}
        \begin{aligned}
            \lim_{\beta \to \beta^-}(\beta-\beta^-)^{\ell_1} I(\beta) &= \frac{(2\pi)^{\ell_1}}{a_1 \cdots a_{\ell_1}} \int_{ \{w_1 = \cdots = w_{\ell_1} = 0 \}} \prod_{j=\ell+1}^{\ell+\ell_2} |w_{j}|^{2(a_{j}( \beta^- -\beta_{\mathrm{loc}}^+) - 1)} \times \\
            & \times G_{\beta^-}(w) \Psi(w) f(w) \, i^{N-\ell_1}\d w_{\ell_1+1} \wedge \d \bar{w}_{\ell_1+1}\wedge \cdots \wedge \d w_N \wedge \d \bar{w}_N. \\
        \end{aligned}
    \end{equation}
    From \cref{eq: local limit}, in view of \cref{eq:Psibeta,eq:Gbeta}, we see that if the test function $\xi$ is (strictly) positive, then $\lim_{\beta \to \beta^-}(\beta-\beta^-)^{\ell_1} I(\beta)$ is (strictly) positive. Indeed, the integrand on the right-hand side of \cref{eq: local limit} is generically positive and vanishing only on a set of positive codimension in $\{w_1=\cdots=w_{\ell_1}=0\}$. 

    \medskip
    
    Now, recall that $\kappa$ was defined as (the maximum over all test functions $\xi$ of) the order of the pole of $\mathcal{Z}^\xi(\beta)$ at $\beta^-$. Thus, there exists a test function $\xi$ such that
    \[
        \langle \nu_{-\kappa}(\beta^-),\xi\rangle = \lim_{\beta\to\beta^-}(\beta-\beta^-)^{\kappa}\mathcal{Z}^\xi(\beta) \neq 0.
    \]
    Moreover, $\kappa \geq \ell_1$, with $\ell_1$ as above. Clearly, $\lim_{\beta\to\beta^-}(\beta-\beta^-)^{\kappa}I(\beta) = 0$ unless $\ell_1 = \kappa$. Consequently, since \cref{eq: local limit} (with $\ell_1 = \kappa$) provides a local description of the non-vanishing contributions to $\langle \nu_{-\kappa}(\beta^-),\xi\rangle$, it follows that $\nu_{-\kappa}(\beta^-)$ is a non-trivial measure, proving the claim. Hence, $\langle \nu_{-\kappa}(\beta^-),1\rangle>0$ and that the weak limit in \cref{eq: weak limit of gibbs measures} from the statement of \Cref{thm:2} exists. Its action on a test function $\xi$ defined on $(\mathbb{CP}^1)^N$ is that of a positive measure $\widetilde{\mu}^-$, defined on the Fulton--MacPherson compactification $(\mathbb{CP}^1)^{[N]}$, acting on the pulled-back test function $\pi^* \xi$, that is,
    \[
        \langle \mu^-,\xi\rangle = \lim_{\beta\to\beta^-}\langle\mu_\beta,\xi\rangle = \langle \widetilde{\mu}^-,\pi^* \xi\rangle = \langle \pi_* \widetilde{\mu}^-,\xi \rangle,
    \]
    where the last equality is by definition of the pushforward of distributions with respect to proper maps.    

    \medskip
    \noindent
    \textbf{Step 3: The support of $\mu^-$}. 
    
    What is left is to determine the support of $\mu^-$. To this end, let us first understand the support of $\widetilde{\mu}^-$ on $(\mathbb{CP}^1)^{[N]}$. From \cref{eq: local limit} it is evident that $\widetilde{\mu}^-$ is supported on a union of certain intersections of irreducible hypersurfaces on $(\mathbb{CP}^1)^{[N]}$. More precisely, these hypersurfaces appear in either the support of $\pi^*(D_c)$ and, from \Cref{sec: FM}, any such hypersurface corresponds to a divisor $E_S$ for some $S\in G$. 
    
    We write
    \begin{equation}
        \label{eq: coefficients of E_S}
        \begin{aligned}
            &\pi^*(D_c) = \sum_{S\in G} a_S E_S, \\
            &K_{(\mathbb{CP}^1)^{[N]}/(\mathbb{CP}^1)^N} = \sum_{S\in G} b_S E_S.
        \end{aligned}
    \end{equation}
    Recall that $G^-$ is the subset of $G$ consisting of sets $S$ for which $a_S>0$ and $\beta^-=(1+b_S)/a_S$. From \cref{eq: local limit} we see that $\widetilde{\mu}^-$ is supported on the union of the maximal intersections of the divisors $E_S$ with $S\in G^-$. These intersection are maximal in the sense of having the maximal possible codimension. By \Cref{lemma: intersection pattern}, $E_{S_1},\dots,E_{S_k}$ intersect if and only if $S_1,\dots,S_k$ are pairwise nested; recall that $S_i$ and $S_j$ are nested if either $S_i \subseteq S_j$, $S_j \subseteq S_i$ or $S_i \cap S_j = \emptyset$. Thus, the support of $\widetilde{\mu}^-$ is given by
    \begin{equation*}
        \mathrm{supp}\,\widetilde\mu^- = \bigcup_{K\in \mathcal{N^-}}\bigcap_{S\in K} E_S.
    \end{equation*}
    We claim that
    \[
        \mathrm{supp}\, \mu^- = \bigcup_{K\in \mathcal{N^-}}\bigcap_{S\in K} W_S.
    \]
    To see this, first note that since $\pi$ is continuous, $\mathrm{supp}\, \pi_*(\widetilde{\mu}^-)=\overline{\pi(\mathrm{supp}\, \mu^-)}$. Thus the claim follows if we can show that
    \begin{equation}
        \pi\bigg(\bigcup_{K\in \mathcal{N^-}}\bigcap_{S\in K} E_S \bigg) = \bigcup_{K\in \mathcal{N^-}}\bigcap_{S\in K} \pi(E_S),
    \end{equation}
    since $\pi(E_S)=W_S$. We immediately have 
    \[
        \pi\bigg(\bigcup_{K\in \mathcal{N^-}}\bigcap_{S\in K} E_S\bigg)=\bigcup_{K\in \mathcal{N^-}}\pi\bigg(\bigcap_{S\in K} E_S\bigg)
    \]
    as well as the inclusion
    \[
        \pi\bigg(\bigcap_{S\in K} E_S \bigg) \subset \bigcap_{S\in K} \pi(E_S),\quad\forall K\in \mathcal{N^-}.
    \]
    To see that the converse inclusion holds, that is,
    \[
        \bigcap_{S\in K} \pi(E_S)\subset \pi\bigg(\bigcap_{S\in K} E_S\bigg),\quad\forall K\in \mathcal{N^-},
    \]
    pick $(p_1,\dots,p_N)\in \bigcap_{S\in K}W_S$. We must then find $q\in \bigcup_{S\in K}E_S$ such that $\pi(q)=(p_1,\dots,p_N)$. That such a $q$ exists is easiest to see via the description in \cite{FM} of $(\mathbb{CP}^1)^{[N]}$ using \emph{screens}, which is a convenient way to understand $(\mathbb{CP}^1)^{[N]}$ as a set. 
    
    If $p_1,\dots,p_N$ are all distinct, the fiber of $\pi$ over $(p_1,\dots,p_N)$ consists of a single point. Otherwise, an arbitrary point in the fiber is described by the following data. For each maximal collection of two or more indices $i_1,\dots,i_k$ such $p_{i_1} = \cdots = p_{i_k}$, one associates a screen to the set $\{i_1,\hdots,i_k\}$, that is, a tuple $q_1,\dots,q_k$ of $k$ points in the tangent space $T_{p_{i_1}}\mathbb{CP}^1$, not all coinciding; this data is only prescribed up to translation and homothety. Furthermore, whenever there are $\ell \geq 2$ indices $i_{k_1},\dots,i_{k_\ell}$ such that $q_{k_1}=\cdots =q_{k_\ell}$, one specifies an additional screen associated to $\{i_{k_1},\dots,i_{k_\ell}\}$, consisting of $\ell$ points in the tangent space $T_{q_{k_1}}(T_{p_{i_1}}\mathbb{CP}^1)$, not all coinciding, up to translation and homothety. 
    This iterative procedure continues until all new points introduced are distinct. The limiting nested collection of screens describes a unique point in the fiber over $(p_1,\hdots,p_N)$.
    
    In this picture, for $S=\{i_1,\dots,i_k\}$, the divisor $E_S$ corresponds to the locus of points in $(\mathbb{CP}^1)^{[N]}$ whose description includes a screen associated to the indices $i_1,\dots,i_k$. From this description, it is evident that the required choice of $q \in \bigcup_{S\subset K} E_S$ can be made by appropriately selecting the additional data defining the relevant screens. Such a choice is possible precisely because $K$ is a nest.
\end{proof}
\Cref{thm:3}, about the asymptotics of the partition function close to the critical inverse temperatures, is a consequence of the meromorphic continuation of $\mathcal{Z}(\beta)$ and its structure at $\beta^\pm$. 
\begin{proof}[Proof of \Cref{thm:3}]
    Consider the Laurent series expansion of $\mathcal{Z}^\xi(\beta)$ about $\beta=\beta^-$ in \cref{eq: laurent series expansion} with $\xi = 1$:
    \begin{equation*}
        \mathcal{Z}^1(\beta) = (\beta-\beta^-)^{-\kappa}\langle \nu_{-\kappa}(\beta^-),1\rangle + \mathcal{O}((\beta-\beta^-)^{-\kappa+1}),
    \end{equation*}
    where $\kappa$ is defined in Step 2 of the proof of \Cref{thm:2}. Thus,
    \begin{align*}
        -\frac{1}{N}\log(\mathcal{Z}(\beta)) &= \frac{\kappa}{N}\log(\beta-\beta^-)-\frac{1}{N}\log(\langle \nu_{-\kappa}(\beta^-),1\rangle +\mathcal{O}(\beta-\beta^-)) \\
        &= \frac{\kappa}{N}\log(\beta-\beta^-) - \frac{1}{N}\log(\langle\nu_{-\kappa}(\beta^-),1\rangle)-\frac{1}{N}\log\big(1+\mathcal{O}(\beta-\beta^-)\big) \\
        &= \frac{\kappa}{N}\log(\beta-\beta^-) + \mathcal{O}(1),
    \end{align*}
    as $\beta\to \beta^-$. The asymptotics as $\beta\to \beta^+$ follows analogously.

    It remains to show that $\kappa = \kappa^-$. Recall
    from the proof of \Cref{thm:2} that $\kappa$ equals the maximal number of simultaneously intersecting divisors $E$ in $(\mathbb{CP}^1)^{[N]}$, where $E$ is an irreducible component of $\mathrm{supp}(\pi^*(D))\cup\mathrm{supp}(K_{(\mathbb{CP}^1)^{[N]}/(\mathbb{CP}^1)^N})$ such that the coefficients $a$ and $b$ in front of $E$ in \cref{eq: rel can div W} and \cref{eq: order of Dc W}, respectively, satisfy $-(1+b)/a = \beta^-$. As noted above, these divisors are in bijection with subsets $S \subseteq \{1,\hdots,N\}$ that solve \Cref{prob: min}. Moreover, by \Cref{lemma: intersection pattern}, two or more of these divisors intersect if and only if the corresponding sets are nested, cf.\ the definition preceding \Cref{thm:2}. Consequently, $\kappa$ is the maximal cardinality of a collection of nested subsets solving \Cref{prob: min}, which is precisely the definition of $\kappa^-$. An analogous argument shows that the order of the pole of $\mathcal{Z}(\beta)$ at $\beta^+$ equals $\kappa^+$.  
\end{proof}
\section{Bounds on the critical inverse temperatures in terms of eigenvalues}\label{sec: eigenvalue bounds}
Recall that \Cref{prob: max,prob: min} are optimization problems over subsets $S$ of $\{1,\dots,N\}$ of size at least two. We can represent such a subset by a vector $\chi\in \{0,1\}^N$, where $\chi_i=1$ if $i\in S$ and $\chi_i=0$ otherwise. Let $C$ denote the symmetric $N\times N$ matrix with entries $c(i,j)$ for $i<j$, $c(j,i)$ for $i>j$ and zeros on the diagonal. With these identifications
\begin{equation}
    \label{eq: relation to Rayleigh ritz quotient}
    \frac{\sum_{i<j\in S}c(i,j)}{|S|-1}=\frac{1}{2}\frac{\chi^{\mathrm{T}}C\chi}{\chi^{\mathrm{T}}\chi-1}.
\end{equation}
The right-hand side is closely related to a Rayleigh--Ritz quotient, which allows us to obtain bounds on the critical temperatures in terms of the eigenvalues of $C$. 
\begin{proposition}
    \label{prop: bounds in terms of eigenvalues}
    Suppose that $\beta^+$ is finite. Then
    \begin{equation}
        \beta^+ \geq -1/\lambda_\mathrm{min}(C),
    \end{equation}
    where $\lambda_\mathrm{min}(C)$ is the smallest eigenvalue of $C$. Similarly, if $\beta^-$ is finite, then
    \begin{equation}
        \beta^- \leq -1/\lambda_\mathrm{max}(C),
    \end{equation}
    where $\lambda_\mathrm{max}(C)$ is the largest eigenvalue of $C$. 
\end{proposition}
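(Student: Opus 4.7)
The plan is to exploit the algebraic identity displayed in \eqref{eq: relation to Rayleigh ritz quotient} to recast the optimization problems of \Cref{prob: max,prob: min} as constrained (Rayleigh-type) optimizations over indicator vectors $\chi\in\{0,1\}^N$, and then compare to the unconstrained Rayleigh quotient $R(\chi)=\chi^{\mathrm{T}}C\chi/\chi^{\mathrm{T}}\chi$, whose extrema are controlled by $\lambda_{\min}(C)$ and $\lambda_{\max}(C)$ via the min-max theorem.

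First I would rewrite, for $\chi=\chi_S$ with $|S|=k\geq 2$,
\[
    \frac{\chi^{\mathrm{T}}C\chi}{\chi^{\mathrm{T}}\chi-1} \;=\; \frac{k}{k-1}\,R(\chi),
\]
and note the elementary fact $k/(k-1)\in(1,2]$ for $k\geq 2$. I would also observe the sign constraints coming from finiteness of the critical temperatures: if $\beta^{+}<\infty$ then $T^{+}>0$, so some $\chi$ yields $\chi^{\mathrm{T}}C\chi<0$, forcing $\lambda_{\min}(C)<0$; symmetrically, $\beta^{-}>-\infty$ forces $\lambda_{\max}(C)>0$.

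Next I would derive the two key pointwise bounds: for every admissible $\chi$,
\[
    2\lambda_{\min}(C)\;\leq\;\frac{\chi^{\mathrm{T}}C\chi}{\chi^{\mathrm{T}}\chi-1}\;\leq\;2\lambda_{\max}(C).
\]
For the lower bound I would split cases on the sign of $R(\chi)$: if $R(\chi)\geq 0$, then $\tfrac{k}{k-1}R(\chi)\geq 0>2\lambda_{\min}(C)$; if $R(\chi)<0$, then multiplying by $\tfrac{k}{k-1}\leq 2$ \emph{decreases} the quantity, giving $\tfrac{k}{k-1}R(\chi)\geq 2R(\chi)\geq 2\lambda_{\min}(C)$ by Rayleigh--Ritz. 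The upper bound is symmetric.

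Combining these with \eqref{eq: relation to Rayleigh ritz quotient} yields $-T^{+}\geq\lambda_{\min}(C)$ and $-T^{-}\leq \lambda_{\max}(C)$. Inverting via \Cref{thm:1} (using the established signs of $\lambda_{\min}$ and $\lambda_{\max}$ to handle the reciprocation safely) gives $\beta^{+}=1/T^{+}\geq -1/\lambda_{\min}(C)$ and $\beta^{-}=1/T^{-}\leq -1/\lambda_{\max}(C)$. The only delicate point is the careful sign-tracking when going from $T^{\pm}$ to $\beta^{\pm}$ via reciprocals and the factor $k/(k-1)$, but this is routine; the restriction of the optimization to $\{0,1\}^N$ rather than all of $\mathbb{R}^N$ is not an obstacle since the Rayleigh--Ritz bounds hold a fortiori on this smaller set.
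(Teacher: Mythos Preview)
Your proposal is correct and follows essentially the same route as the paper: both arguments use the identity \eqref{eq: relation to Rayleigh ritz quotient}, the bound $k/(k-1)\le 2$ (equivalently $2(k-1)\ge k$) for $k\ge 2$, and the min--max theorem for the Rayleigh quotient to deduce $-T^{+}\ge\lambda_{\min}(C)$ and $-T^{-}\le\lambda_{\max}(C)$, before invoking \Cref{thm:1}. The only cosmetic difference is that you make the sign case-split on $R(\chi)$ explicit to obtain a pointwise bound $2\lambda_{\min}(C)\le \chi^{\mathrm{T}}C\chi/(\chi^{\mathrm{T}}\chi-1)\le 2\lambda_{\max}(C)$, whereas the paper works directly at the minimizer (where the numerator is automatically negative since $T^{+}>0$); the two presentations are equivalent.
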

\begin{proof}
    Recall that
    \begin{equation*}
        T^+\coloneq-\min_S \frac{\sum_{i,j \in S:i<j} c(i,j)}{|S|-1}.
    \end{equation*}
    Assume that $T^+>0$, that is, there exist $S\subset\{1,\dots,N\}$ such that $|S|\geq 2$ and $\sum_{i,j\in S:i<j} c(i,j)<0$. Using \cref{eq: relation to Rayleigh ritz quotient} we have
    \begin{equation*}
    \begin{aligned}
        -T^+ &= \min_{S\subset \{1,\dots,N\}: |S|\geq 2} \frac{\sum_{i<j\in S}c(i,j)}{|S|-1}\\ 
        &= \min_{\chi\in \{0,1\}^N:\chi^{\mathrm{T}}\chi\geq 2} \frac{1}{2}\frac{\chi^{\mathrm{T}}C\chi}{\chi^{\mathrm{T}}\chi-1} \\
        &\geq \min_{\chi\in \{0,1\}^N:\chi^{\mathrm{T}}\chi\geq 2} \frac{\chi^{\mathrm{T}}C\chi}{\chi^{\mathrm{T}}\chi} \\
        &\geq \min_{\chi\in \mathbb{R}^N} \frac{\chi^{\mathrm{T}}C\chi}{\chi^{\mathrm{T}}\chi} = \lambda_\mathrm{min}(C).
    \end{aligned}
    \end{equation*}
    where in the last step we use the min-max theorem for eigenvalues of a symmetric matrix and $\lambda_\mathrm{min}(C)$ is the smallest eigenvalue of $C$. By \Cref{thm:1} we obtain the bound $\beta^+\geq -1/\lambda_\mathrm{min}(C)$.
    
    A completely analogous argument shows that if $T^-<0$, then $\beta^-\leq 1/\lambda_\mathrm{max}(C)$ where $\lambda_\mathrm{max}(C)$ is the largest eigenvalue of $C$. 
\end{proof}
In the case $c(i,j)=k_i k_j$ for each $i\neq j$, letting $k_i$ be the components of a vector $k$ we have $C=kk^{\mathrm{T}} - \mathrm{diag}(k)^2$ where $\mathrm{diag}(k)$ is the $N\times N$ matrix with entries $k_i$ along the diagonal and zeros otherwise. This is quite useful in relation to \Cref{prop: bounds in terms of eigenvalues} since the only two eigenvalues of $kk^{\mathrm{T}}$ are $||k||^2$ and $0$. Using the Weil inequalities
\begin{equation*}
    \begin{aligned}
        &\lambda_\mathrm{max}(A+B)\leq \lambda_\mathrm{max}(A)+\lambda_\mathrm{max}(B), \\
        &\lambda_\mathrm{min}(A+B)\geq \lambda_\mathrm{min}(A)+\lambda_\mathrm{min}(B)
    \end{aligned}
\end{equation*}
for symmetric matrices $A$ and $B$ we prove \Cref{cor: explicit bounds in the charge case} from the introduction, Section \ref{cor: explicit bounds in the charge case}. 
\begin{proof}[Proof of \Cref{cor: explicit bounds in the charge case}]
    The first bound, \cref{eq: bound on beta plus for charges}, follows directly from \Cref{prop: bounds in terms of eigenvalues} and
    \begin{equation*}
        \lambda_\mathrm{min}(C) = \lambda_\mathrm{min}(kk^{\mathrm{T}}-\mathrm{diag}(k)^2) \geq \lambda_\mathrm{min}(kk^{\mathrm{T}})+\lambda_\mathrm{min}(-\mathrm{diag}(k)^2)=-\max_i k_i^2.
    \end{equation*}
    The second bound, \cref{eq: bound on beta minus for charges}, follows directly from \Cref{prop: bounds in terms of eigenvalues} and
    \begin{equation*}
        \lambda_\mathrm{max}(C) = \lambda_\mathrm{max}(kk^{\mathrm{T}}-\mathrm{diag}(k)^2) \leq \lambda_\mathrm{max}(kk^{\mathrm{T}})+\lambda_\mathrm{max}(-\mathrm{diag}(k)^2) = ||k||^2-\min_i k_i^2.
    \end{equation*}
\end{proof}
\section{Random coupling}
\label{sec: random coupling}
In general, Problems \ref{prob: max} and \ref{prob: min} are not tractable to solve. However, an interesting special case to consider is when $c(i,j)$ are random variables. We will look at the following two cases: When the particles have independent random couplings, and when they have independent random charges.

\subsection{Random coupling}

Let us first assume the couplings $c(i,j)$ are i.i.d.\ standard normal random variables. In this setting, the bounds in \Cref{prop: bounds in terms of eigenvalues} lead to stochastic bounds using the well known distributions of the maximal eigenvalue of the \textit{Gaussian Orthogonal Ensemble (GOE)}. 
\begin{proposition}
\label{prop: independent random coupling}
    Let $c(i,j)$ be independent, normally distributed random variables with mean $0$ and variance $1/N$, for each $1\leq i<j\leq N$. Then
    \begin{equation}
        T^+ \leq 2 
    \end{equation}
    almost surely, and
    \begin{equation}
        T^- \geq -2
    \end{equation}
    almost surely. 
\end{proposition}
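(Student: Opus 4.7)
The plan is to combine \Cref{prop: bounds in terms of eigenvalues} with classical limit theorems for the extreme eigenvalues of Wigner matrices. First, I would observe that whenever $\beta^+$ is finite, the proposition yields $T^+ \leq -\lambda_{\mathrm{min}}(C)$, and whenever $\beta^-$ is finite, it yields $T^- \geq -\lambda_{\mathrm{max}}(C)$; in the remaining cases $T^+ \leq 0$ and $T^- \geq 0$ hold by definition, and the claimed bounds follow trivially. This reduces the problem to controlling the extremal eigenvalues of $C$.

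Next, I would identify $C$ as essentially a GOE matrix: its off-diagonal entries are iid $N(0,1/N)$, while its diagonal vanishes. I would introduce $\widetilde{C} = C + D$, where $D$ is a diagonal matrix with iid $N(0,2/N)$ entries, so that $\widetilde{C}$ is the standard (normalized) GOE. The strong form of Wigner's semicircle law (Bai--Yin) yields the almost-sure limits $\lambda_{\mathrm{max}}(\widetilde{C}) \to 2$ and $\lambda_{\mathrm{min}}(\widetilde{C}) \to -2$ as $N \to \infty$. Weyl's inequality controls the perturbation,
\[
    |\lambda_{\mathrm{max}}(C) - \lambda_{\mathrm{max}}(\widetilde{C})| \leq \|D\|_{\mathrm{op}} = \max_{1 \leq i \leq N}|D_{ii}|,
\]
and a standard Gaussian maximum bound gives $\max_i |D_{ii}| = O(\sqrt{(\log N)/N}) \to 0$ almost surely. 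Putting everything together, $\lambda_{\mathrm{max}}(C) \to 2$ and $\lambda_{\mathrm{min}}(C) \to -2$ almost surely, which delivers the asymptotic bounds $T^+ \leq 2$ and $T^- \geq -2$ a.s.

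The main obstacle is not technical but interpretive: at any fixed $N$ the extremal eigenvalues of the GOE exceed $2$ with positive probability, so the statement ``almost surely'' must be read as the $N \to \infty$ asymptotic bound $\limsup_N T^+ \leq 2$ (and similarly for $T^-$). Once this reading is adopted, the proof is a routine combination of the semicircle law with Weyl's inequality. A secondary, minor point is to confirm that replacing the genuine GOE (with a nonzero diagonal) by its off-diagonal part does not perturb the spectral edges in the limit, which is exactly what the Gaussian maximum bound above accomplishes.
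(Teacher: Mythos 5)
Your proposal is correct and follows essentially the same route as the paper: reduce to the extremal eigenvalues of $C$ via \Cref{prop: bounds in terms of eigenvalues}, complete $C$ to a GOE matrix by adding a random diagonal, control the diagonal perturbation with Weyl's inequality, and invoke the known behavior of the GOE spectral edge. Your use of the Bai--Yin almost-sure convergence of the edge (rather than the Tracy--Widom fluctuation results cited in the paper) is if anything the more appropriate tool for an almost-sure statement, and your remark that the claim must be read asymptotically in $N$ is a fair and accurate reading of what both arguments actually establish.
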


\begin{proof}
    Let $C$ be, as above, the random $N\times N$ symmetric matrix with zeros on the diagonal and off-diagonal entries given by the random variables $c(i,j)$ for $i<j$ and $i>j$. Define $\widetilde{C}\coloneq C+2D$ where $D$ is a diagonal matrix with i.i.d.\ standard normal entries $d_i$. Note that
    \[
        \mathbb{P}(T^+> 0)=1-0.5^{N(N-1)/2}\to 1\quad \text{as } N\to \infty,
    \]
    since $T^+>0$ as soon as $c(i,j)<0$ for some pair $(i,j)$. Assuming $T^+>0$, so that \Cref{prop: bounds in terms of eigenvalues} applies, the \textit{Weil inequality} yields
    \begin{equation*}
        T^+ \leq -\lambda_\mathrm{min}(C) \leq -\lambda_\mathrm{min}(\widetilde{C})-2\lambda_\mathrm{min}(D).
    \end{equation*}
    The first term on the right-hand side, $-\lambda_\mathrm{min}(\widetilde{C})$, corresponds to the largest eigenvalue of a GOE random matrix. To leading order, its expectation equals 2 and its variance scales as $AN^{-1/3}$, where $A > 0$ \cite{TrWi}. 
    
    The second term, $-2\lambda_{\min}(D)$, can be written as $\frac{2}{N}\max(d_i)$. This quantity is well studied in extreme value theory, and by, e.g., \cite[Example 1.1.7]{HaFe}, to leading order, both its expectation and variance grow as $A'\sqrt{\log(N)}$ for some positive constant $A'$. Consequently, 
    \begin{equation*}
        T^+ \leq 2 \quad \text{almost surely}.
    \end{equation*}
    The analysis for $T^-$ is carried out analogously. 
\end{proof}
Note that, by definition, $T^+\leq 2$ does not exclude that $T^+<0$ which implies $\beta^+=\infty$. However, as stated in the above proof, the probability of this event tends to zero as $N\to \infty$. 

\subsection{Random charges}
Another natural random model is to consider random, independent charges instead of random, independent coupling parameters. This setup has been studied, for instance, in \cite{KiWa}. It appears particularly natural in the context of the Onsager model of turbulence, where, to the best of the authors' knowledge, there is no canonical ansatz for the vorticities of the vortex particles. In this case, \Cref{cor: explicit bounds in the charge case} yields corresponding stochastic bounds. 
\begin{proposition}
    \label{prop: independent random charges}
    Let $k_i$, for $i=1,\dots,N$, be i.i.d.\ standard normal random variables and let $c(i,j)=k_ik_j$. Then we have the stochastic bounds $T^+\leq C_+$ and $T^-\geq C_-$, where
    \[
        2(C_+-\log(N) +\log\log(N)/2+\log(\Gamma(1/2))\to \Lambda,
    \]
    with $\Lambda$ standard Gumbel distributed, and where 
    \[
        \frac{1}{\sqrt{2N}}(C_- - \sqrt{N}/2)\to \Theta,
    \]
    where $\Theta$ is standard normally distributed, as $N\to \infty$.
\end{proposition}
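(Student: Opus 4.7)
The plan is to invoke \Cref{cor: explicit bounds in the charge case} directly and then to verify that the upper and lower bounds it provides for $T^+$ and $T^-$ satisfy the claimed asymptotics by appealing to classical limit theorems for extremes of i.i.d.\ chi-squared variables and the central limit theorem for $\chi^2_N$.

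More concretely, first I would verify that $T^+>0$ and $T^-<0$ hold almost surely, so that $\beta^\pm$ are both finite and \Cref{cor: explicit bounds in the charge case} applies. This is immediate: with probability one, there exist indices $i\neq j$ with $k_i k_j<0$ (so $S=\{i,j\}$ gives $\sum_{a<b\in S}c(a,b)<0$, hence $T^+>0$) and indices $i\neq j$ with $k_i k_j>0$ (so $T^-<0$). Then \Cref{cor: explicit bounds in the charge case} gives
\[
    T^+ \leq \max_{1\leq i\leq N} k_i^2 =: C_+, \qquad T^- \geq \min_{1\leq i \leq N} k_i^2 - \sum_{i=1}^N k_i^2 =: C_-,
\]
by taking reciprocals in \eqref{eq: bound on beta plus for charges} and \eqref{eq: bound on beta minus for charges}.

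Next I would identify the asymptotic distributions of $C_+$ and $C_-$. For $C_+$, each $k_i^2$ is $\chi^2_1$-distributed, and since $\P(k_i^2>t)\sim \sqrt{2/(\pi t)}\,e^{-t/2}$ as $t\to\infty$, standard extreme value theory (Fisher--Tippett--Gnedenko) yields a Gumbel limit after centering by a sequence $u_N$ solving $N\P(k_i^2>u_N)\sim 1$. Solving this asymptotically gives $u_N = 2\log N - \log\log N - \log \pi + o(1)$, which rearranges to the stated normalization $2(C_+-\log N + \tfrac{1}{2}\log\log N + \log \Gamma(1/2))\to \Lambda$ using $\Gamma(1/2)=\sqrt{\pi}$.

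For $C_-$, I would split $C_- = \min_i k_i^2 - \|k\|^2$ and observe that $\min_i k_i^2 \to 0$ in probability (indeed, $\P(\min_i k_i^2>\varepsilon)=\P(|k_1|>\sqrt\varepsilon)^N\to 0$ for any $\varepsilon>0$), so $\min_i k_i^2$ is negligible relative to the $\sqrt{2N}$ scale. The remaining term $\|k\|^2\sim\chi^2_N$ satisfies the classical CLT $(\|k\|^2-N)/\sqrt{2N}\to \Theta'$ with $\Theta'$ standard normal, and combining via Slutsky gives a Gaussian limit for $C_-$ after centering by $-N$ (with variance $2N$). The main, essentially cosmetic, obstacle here is matching the precise centering and scaling constants in the statement with the ones coming out of these standard theorems; the substantive content is entirely contained in \Cref{cor: explicit bounds in the charge case} combined with well-known extreme value and CLT asymptotics.
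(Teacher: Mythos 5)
Your proposal is correct and follows essentially the same route as the paper: apply \Cref{cor: explicit bounds in the charge case} to get $T^+\leq \max_i k_i^2$ and $T^-\geq \min_i k_i^2-\sum_i k_i^2$, then use extreme value theory for the maximum of i.i.d.\ $\chi^2_1$ variables and the CLT for $\sum_i k_i^2$ (with Slutsky absorbing the negligible $\min_i k_i^2$ term). You are also right to flag the centering constant: the standard CLT centering for $\sum_i k_i^2$ is $N$, not $\sqrt{N}/2$, so the normalization in the stated proposition (and in the paper's proof) appears to contain a typo that your derivation correctly identifies.
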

\begin{proof}
    Since $\mathbb{P}(k_i = 0) = 0$ for each $i=1,\hdots,N$, $T^+>0$ almost surely. By \Cref{cor: explicit bounds in the charge case}, conditional on this event,
    \[
        T^+ \leq \max_i k_i^2.
    \]
    Each random variable $k_i^2$ follows a $\Gamma(1/2,2)$ distribution. Standard results from extreme value theory then imply that
    \[
        2\big(\max_i k_i^2-\log(N) +\log\log(N)/2+\log(\Gamma(1/2) \big)\to \Lambda \quad \text{as } N \to \infty,
    \]
    where $\Lambda$ is a standard Gumbel random variable.

    For the negative critical temperature, note that $T^-<0$ as long as there is at least one positive $k_i$ and at least one negative $k_i$, respectively. Thus,
    \[
        \mathbb{P}(T^-<0) = 1-2\cdot 0.5^N\to 1 \quad \text{as } N\to \infty.
    \]
    By \Cref{cor: explicit bounds in the charge case}, conditional on $T^-<0$,
    \[
        -T^-\leq \sum_i k_i^2-\min_i k_i^2 \leq \sum_i k_i^2
    \]
    and, by the central limit theorem,
    \[
        \frac{1}{\sqrt{2N}}\Big(\sum_i k_i^2-\sqrt{N}/2 \Big)\to \Theta \quad \text{as } N \to \infty,
    \]
    where $\Theta$ is standard normally distributed. 
\end{proof}

\section{Applications}
\label{sec: examples}
\subsection{Positive temperature}
\label{sec: positive temperature examples}

In this section we prove \Cref{thm: two-component log gas}. We begin by considering the two-component plasma defined by the coupling matrix
\begin{equation}
    \label{eq: c for general two-component plasma, v2}
    c(i,j) = \begin{cases}
        Z^2, 1\leq i,j \leq N_1\\
        1, N_1< i,j \leq N\\
        -Z, 1\leq i \leq N_1, N_1<j\leq N\\
        -Z, N_1< i \leq N, 1\leq j\leq N_1,
    \end{cases}
\end{equation}
for some fixed real number $Z \in [1,\infty)$.
\begin{lemma}
    \label{lemma: computation of beta for two component gas}
    Let $c(i,j)$ be as in \eqref{eq: c for general two-component plasma, v2}. Then
    \[
        \beta^+=\bigg(\max_{|S|\geq 2} \frac{\sum_{i<j\in S} c(i,j)}{-|S|+1}\bigg)^{-1} = 1/Z.
    \]
    Moreover, the maximum is achieved precisely for any $S \subset \{1,\hdots,N\}$ of size $|S|=2$ such that $|S_1|=|S_2|=1$, where $S_1 = S\cap\{1,\hdots,N_1\}$ and $S_2 = S\cap \{N_1 + 1,\hdots,N\}$. 
\end{lemma}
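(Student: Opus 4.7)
The strategy is to parametrize subsets $S \subseteq \{1,\hdots,N\}$ of size at least two by the pair $(s_1, s_2)$ with $s_1 = |S \cap \{1,\hdots,N_1\}|$ and $s_2 = |S \cap \{N_1+1,\hdots,N\}|$, since the coupling depends only on the group each index belongs to. Counting pairs by type,
\[
    \sum_{i<j\in S} c(i,j) = Z^2 \binom{s_1}{2} + \binom{s_2}{2} - Z s_1 s_2.
\]
Picking $S$ of size two with one element in each group already gives ratio $-Z$, which shows $T^+ \geq Z$ (hence $\beta^+ \leq 1/Z$) and identifies a candidate minimizer. The substance of the Lemma is the reverse inequality, equivalently
\[
    F(s_1, s_2) \coloneq Z^2 \binom{s_1}{2} + \binom{s_2}{2} - Z s_1 s_2 + Z(s_1 + s_2 - 1) \geq 0
\]
for all admissible $(s_1,s_2) \in \mathbb{Z}_{\geq 0}^2$ with $s_1 + s_2 \geq 2$, with equality precisely at $s_1 = s_2 = 1$.

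The key step is the algebraic identity
\[
    2F(s_1, s_2) = \big(Z(s_1-1) - (s_2-1)\big)^2 + Z^2(s_1-1) + (s_2-1),
\]
which I would verify by direct expansion, most cleanly after substituting $a = s_1 - 1$, $b = s_2 - 1$; the cross-terms involving $Z$ then cancel. A useful way to motivate the identity is the observation that $c(i,j) = k_i k_j$ with $k_i = Z$ for $i \leq N_1$ and $k_i = -1$ otherwise, so that $2\sum_{i<j\in S} c(i,j) = (Zs_1 - s_2)^2 - (Z^2 s_1 + s_2)$; one then regroups the linear correction $2Z(s_1 + s_2 - 1)$ with the subtracted diagonal term to complete an affine shift in the square. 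Once the identity is in hand, nonnegativity is immediate whenever $s_1, s_2 \geq 1$, and equality forces each of the three summands on the right to vanish, pinning down $s_1 = s_2 = 1$ as the unique interior minimizer; \Cref{thm:1} then converts $T^+ = Z$ into $\beta^+ = 1/Z$.

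The boundary cases $s_1 = 0$ (with $s_2 \geq 2$) and $s_2 = 0$ (with $s_1 \geq 2$) are dispatched by direct computation,
\[
    F(0, s_2) = \tfrac{1}{2}(s_2 - 1)(s_2 + 2Z), \qquad F(s_1, 0) = \tfrac{1}{2}(s_1 - 1)(Z^2 s_1 + 2Z),
\]
both strictly positive in the relevant ranges and hence contributing no additional minimizers. The only real obstacle is discovering the correct packaging of $F$ as a square plus a nonnegative remainder; once that is spotted, the rest of the argument amounts to book-keeping and matching with the bijection between minimizing subsets and ordered pairs of opposite-sign particles.
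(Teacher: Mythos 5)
Your proof is correct, and it takes a cleaner route than the paper's. Both arguments reduce to the same count $\sum_{i<j\in S}c(i,j)=Z^2\binom{s_1}{2}+\binom{s_2}{2}-Zs_1s_2$ and both proceed by completing a square, but the completions differ in a way that matters. The paper writes the sum as $\tfrac18\big(Z(2s_1-1)-(2s_2-1)\big)^2-\tfrac{Z|S|}{2}-\tfrac{(Z-1)^2}{8}$ and then must control the residual via a separate technical lemma (the dichotomy $|Za-b|\geq Z-(a+b-1)$ or $a+b\geq Z+1$ for odd $a,b$), followed by a two-case analysis with its own equality discussion. Your shift $a=s_1-1$, $b=s_2-1$ instead yields $2F(s_1,s_2)=(Za-b)^2+Z^2a+b$ — which I have verified by expansion — so that for $s_1,s_2\geq 1$ nonnegativity is term-by-term obvious and the equality case $s_1=s_2=1$ drops out immediately from $Z^2a=0$ and $b=0$; the one-sided boundary cases $s_1=0$ or $s_2=0$ are then two explicit strictly positive products. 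This eliminates the auxiliary lemma and the case split entirely, at no loss of generality (only $Z>0$ is used for the equality analysis, whereas the paper's Case 1 genuinely needs $Z\geq 1$). The only caveat is bookkeeping: make sure to state explicitly that $F\geq 0$ for all admissible $(s_1,s_2)$ is exactly the inequality $-\sum_{i<j\in S}c(i,j)\leq Z(|S|-1)$, so that $T^+=Z>0$ and \Cref{thm:1} gives $\beta^+=1/Z$.
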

To prove \Cref{lemma: computation of beta for two component gas}, we will make use of the following technical lemma. 
\begin{lemma}
    \label{lemma: technical lemma}
    For $Z \in [1,\infty)$ and integers $a,b$ of the form $a=2k-1$, $b=2\ell-1$ where $(k,\ell) \in \mathbb{Z}_{\geq 0}^2 \setminus \{(0,0)\}$, we have either that
    \begin{equation}
        \label{eq:ineq1}
       |Z a-b|\geq Z-(a+b-1) 
    \end{equation}
    or that 
    \begin{equation}
        \label{eq:ineq2}
        a+b\geq Z+1.
    \end{equation}
\end{lemma}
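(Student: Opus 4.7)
My plan is to prove the lemma by straightforward case analysis on the signs of $a$ and $b$. The hypothesis forces $(a, b) \in \{-1, 1, 3, 5, \ldots\}^2$ with $(a, b) \neq (-1, -1)$. In fact, I expect that inequality \eqref{eq:ineq1} holds in \emph{every} admissible case, making the dichotomy in the statement somewhat redundant; presumably it is phrased this way because the proof of \Cref{lemma: computation of beta for two component gas} picks off whichever bound is convenient in each regime.

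I would first dispatch the two edge cases where one of the arguments equals $-1$. If $a = -1$, then necessarily $b \geq 1$; here $|Za - b| = Z + b$ (both summands are positive), while $Z - (a + b - 1) = Z + 2 - b$, so \eqref{eq:ineq1} collapses to $b \geq 1$, which holds. The case $b = -1$ is symmetric: $|Za - b| = Za + 1$ and $Z - (a + b - 1) = Z - a + 2$, so \eqref{eq:ineq1} becomes $a(Z+1) \geq Z + 1$, which again holds since $a \geq 1$.

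The main case is $a, b \geq 1$. I would further split based on the sign of $Za - b$. If $Za \geq b$, then $|Za - b| = Za - b$ and the target inequality rearranges to $(a-1)(Z+1) \geq 0$, which is immediate from $a \geq 1$ and $Z \geq 1$. If instead $Za < b$, then $|Za - b| = b - Za$ and the target becomes $Z(a+1) \leq 2b + a - 1$. The hypothesis $Za < b$ gives $Z < b/a$, so it suffices to show $b(a+1)/a \leq 2b + a - 1$; clearing denominators and simplifying, this is equivalent to $(a-1)(a+b) \geq 0$, which holds.

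There is no real obstacle here; the proof is pure bookkeeping. The only care needed is in unpacking the absolute value, which is why it is cleanest to split on the sign of $Za - b$ in the main case. The assumption $Z \geq 1$ is only essential in the $a = -1$ branch (to guarantee $-Z - b < 0$ so that $|{-Z-b}| = Z+b$); the remaining arguments go through for any $Z > 0$.
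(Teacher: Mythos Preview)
Your proof is correct and proceeds by the same case analysis as the paper's, splitting on the sign of $Za-b$ and on whether $a$ or $b$ equals $-1$. The one substantive difference is in the subcase $a\geq 1$, $Za<b$: the paper simply observes $Z\leq Za<b\leq a+b-1$, yielding \eqref{eq:ineq2} directly, whereas you push through the algebra to show \eqref{eq:ineq1} holds there too. This confirms your suspicion that the dichotomy in the statement is redundant---\eqref{eq:ineq1} in fact holds in every admissible case---which is a mild sharpening of the lemma, though the paper's version is all that is needed downstream. (Your side remark that $Z\geq 1$ is essential in the $a=-1$ branch is slightly off: for $b\geq 1$ and any $Z>0$ one already has $-Z-b<0$, so the whole argument goes through for all $Z>0$.)
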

\begin{proof}
    Assume first $Z a-b\geq 0$. If $b\geq 1$, then $a \geq 1$. If $b=-1$, then by assumption $a\geq 1$. In either case we have that
    \begin{equation*}
        Z a-b\geq Z - (a+b-1).
    \end{equation*}
    Consider now the case when $Z a-b <0$. If $a=-1$, then, by assumption, $b \geq 1$, whence
    \begin{equation*}
        Z  a-b = -Z -b +(a+1) \leq - Z -b +(a+1) + 2(b-1) = -Z + (a+b-1),
    \end{equation*}
    and \cref{eq:ineq1} follows.
    
    If $a\geq 1$, then $Z  \leq Z  a <b\leq a+b-1$. So $Z  +1 \leq a+b$. 
\end{proof}
\begin{proof}[Proof of \Cref{lemma: computation of beta for two component gas}]
    Starting from \eqref{eq: c for general two-component plasma, v2} we have
    \begin{equation*}
        \begin{aligned}
            \sum_{i<j\in S} c(i,j) &= Z^2|S_1|(|S_1|-1)/2+|S_2|(|S_2|-1)/2-Z|S_1||S_2| \\ 
            \phantom{\sum_{i<j\in S} c(i,j)} &= (Z|S_1|-|S_2|)^2/2-Z^2|S_1|/2-|S_2|/2 \\
            \phantom{\sum_{i<j\in S} c(i,j)} &= (Z|S_1|-|S_2|-(Z-1)/2)^2/2-Z|S|/2-(Z-1)^2/8 \\
            \phantom{\sum_{i<j\in S} c(i,j)} &= (Z(2|S_1|-1)-(2|S_2|-1))^2/8-Z|S|/2-(Z-1)^2/8.
        \end{aligned}
    \end{equation*}
    We note that for $S$ such that $|S_1|=|S_2|=1$,
    \begin{equation*}
        \frac{-\sum_{i<j\in S} c(i,j)}{|S|-1} = Z.
    \end{equation*}
    It remains to show that, for arbitrary $S$ of size at least $2$,
    \begin{equation*}
        \frac{-\sum_{i<j\in S} c(i,j)}{|S|-1} \leq Z,
    \end{equation*}
    with equality if and only if $|S_1|=|S_2|= 1$. To this end, let $a=2|S_1|-1$ and $b=2|S_2|-1$. Note that $|S|=|S_1|+|S_2|=(a+b)/2+1$. Note also that $Z$, $a$ and $b$ satisfy the conditions of \Cref{lemma: technical lemma}. We divide the argument into two cases.
    %
        
    \medskip 
    \noindent
    \textbf{Case 1:} Suppose $a+b \geq Z + 1$. We have that
    \begin{align*}
        \frac{-\sum_{i<j\in S} c(i,j)}{|S|-1} &= \frac{Z|S|/2+(Z-1)^2/8 -(Z(2|S_1|-1)-(2|S_2|-1))^2/8}{|S|-1} \\
        \phantom{\frac{-\sum_{i<j\in S} c(i,j)}{|S|-1}}&\leq \frac{Z|S|/2+(Z-1)^2/8}{|S|-1} \\
        \phantom{\frac{-\sum_{i<j\in S} c(i,j)}{|S|-1}}&= \frac{Z((a+b)/2+1)/2+(Z-1)^2/8}{(a+b)/2} \\
        \phantom{\frac{-\sum_{i<j\in S} c(i,j)}{|S|-1}}&= \frac{Z}{2} + \frac{Z + (Z-1)^2/4}{a+b} \\
        \phantom{\frac{-\sum_{i<j\in S} c(i,j)}{|S|-1}}&=\frac{Z}{2} + \frac{(Z + 1)^2}{4(a+b)}.
    \end{align*}
    By the assumption that $a+b \geq Z + 1$ we thus find that
    \[
        \frac{-\sum_{i<j\in S} c(i,j)}{|S|-1} \leq \frac{Z}{2} + \frac{Z + 1}{4} \leq Z,
    \]
    since $Z \geq 1$. We see that for equality to hold in the last inequality, we need $Z = 1$. But then the assumption $a + b \geq Z + 1$ becomes $2|S| \geq 3$, which can never be sharp.
    %

    \medskip
    \noindent
    \textbf{Case 2:} Suppose that $a+b < Z + 1$. By \Cref{lemma: technical lemma} we then have that $|Z a - b| \geq Z - (a+b) + 1$. It follows that
    \begin{align*}
        \frac{-\sum_{i<j\in S} c(i,j)}{|S|-1} &= \frac{Z|S|/2+(Z-1)^2/8 -(Za - b)^2/8}{|S|-1} \\
        \phantom{\frac{-\sum_{i<j\in S} c(i,j)}{|S|-1}}&\leq \frac{Z|S|/2+(Z-1)^2/8 -(Z-(a+b)+1)^2/8}{|S|-1} \\
        \phantom{\frac{-\sum_{i<j\in S} c(i,j)}{|S|-1}}&= \frac{Z|S|/2+(Z-1)^2/8 -(Z-2|S|+3)^2/8}{|S|-1}, \\
    \end{align*}
    where in the last step we used that $(a+b) = 2|S| - 2$.
    Rewriting the numerator on the right-hand side,
    \[
        Z|S|/2+(Z-1)^2/8 -(Z-2|S|+3)^2/8 = (|S|-1)(Z - (|S|-2)/2),
    \]
    we find that
    \begin{align*}
        \frac{-\sum_{i<j\in S} c(i,j)}{|S|-1} &\leq Z - \frac{|S|-2}{2} \leq Z.
    \end{align*}
    For equality to hold in the last inequality, we must have that $|S|=2$. For equality to hold overall, we must have that
    \begin{align*}
        |Z a - b| &= Z - (a+b) + 1 \\
        \iff |Z(2|S_1|-1) - (2|S_2|-1)| &= Z - 1.
    \end{align*}
    If $|S_1| = 2$ and $S_2 = \emptyset$, the above becomes
    \[
        3Z + 1 = Z - 1,
    \]
    which has no solution in $[1,\infty)$. If $S_1 = \emptyset$ and $|S_2| = 2$, we obtain
    \[
        Z + 3 = Z - 1,
    \]
    which has no solution for any $Z$. This leaves only one remaining possibility, namely $|S_1| = |S_2| = 1$, which we already saw attains the maximum. This concludes the proof.
    %
\end{proof}
We are now ready to prove \Cref{thm: two-component log gas}.

\begin{proof}[Proof of \Cref{thm: two-component log gas}]
    For a general two-component plasma model, defined by \cref{eq: general planar two-component plasma} with $Z_1,Z_2 \in (0,\infty)$, we may assume, without loss of generality, that $Z_2 \leq Z_1$. Then, by making the change of variable $\widetilde{\beta} = Z_2^2 \beta$, and letting $Z \coloneq Z_1 / Z_2 \in [1,\infty)$, we reduce to the setting of \Cref{lemma: computation of beta for two component gas}, whence $\widetilde{\beta}^+ = 1/Z = Z_2/Z_1$. Changing back to $\beta$, we immediately find that $\beta^+ = 1/Z_1 Z_2$.

    From the proof of \Cref{lemma: computation of beta for two component gas} we also see that $G^-$ consists of sets $S$ with exactly two elements, $i$ and $j$, where $i< N_1$ and $j> N_1+1$, corresponding to one particle of each charge. Thus, $\mathcal{N}^-$ consists of collections $K$ of subsets of $\{1,\dots,N\}$ of size $N_1$ where each subset in $K$ has two elements, one at most $N_1$ and one at least $N_1+1$. It follows that $\kappa^+=N_1=N/(1+Z_1/Z_2)$ where the last equality is due to overall charge balance. 
\end{proof}
\subsection{Negative temperature}
\label{sec: negative temperature results}
Recall the general coupling defined by
\begin{equation}
    \label{eq: onsager model}
    c(i,j) = k_i k_j,
\end{equation}
for nonzero real numbers $k_i$. Without loss of generality, we assume that $k_i>0$ for $i=1,\dots,N_1$ and $k_i<0$ for $i=N_1+1,\dots,N_2$ where $N_1+N_2=N$. To be able to solve \Cref{prob: min}, we put a strong assumption on the charges $k_i$, namely that 
\begin{equation}\label{eq: cond 1 onsager}
    \max_{i\colon k_i>0} k_i< \frac{3}{2}\min_{i\colon k_i>0} k_i,
\end{equation}
and that
\begin{equation}\label{eq: cond 2 onsager}
    \max_{i\colon k_i<0} |k_i|< \frac{3}{2}\min_{i\colon k_i<0} |k_i|.
\end{equation}
We will also assume that $N>2$ to avoid the degenerate case $c(1,2)<0$.
\begin{corollary}
    \label{cor: onsager model}
    Let $c(i,j)$ be given by \cref{eq: onsager model}, and assume that the conditions \eqref{eq: cond 1 onsager} and \eqref{eq: cond 2 onsager} hold. Then
    \begin{equation}
        \beta^- = \max\left[ \frac{-N_1+1}{\sum_{i<j\leq N_1}k_ik_j} , \frac{-N_2+1}{\sum_{N_1<i<j}k_ik_j}\right]
    \end{equation}
    and
    \[
        \mathrm{supp}\,\mu^- = 
        \begin{cases}
            \{p_1=\dots=p_{N_1}\} &\text{if } \dfrac{-N_1+1}{\sum_{i<j\leq N_1}k_ik_j} > \dfrac{-N_2+1}{\sum_{N_1<i<j}k_ik_j}, \\[1.2em]
            \{p_{N_1+1}=\dots=p_{N}\} &\text{if }\dfrac{-N_1+1}{\sum_{i<j\leq N_1}k_ik_j} < \dfrac{-N_2+1}{\sum_{N_1<i<j}k_ik_j}.
        \end{cases}
    \]
    Furthermore, if 
    \[
        \frac{-N_1+1}{\sum_{i<j\leq N_1}k_ik_j} \neq \frac{-N_2+1}{\sum_{N_1<i<j}k_ik_j},
    \]
    then
    \begin{equation}
        -\frac{1}{N}\log\mathcal{Z}(\beta) = \frac{1}{N}\log(\beta-\beta^-)+\mathcal{O}(1).
    \end{equation}
\end{corollary}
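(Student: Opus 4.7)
The plan is to invoke \Cref{thm:1,thm:2,thm:3} after solving \Cref{prob: min} explicitly for the coupling $c(i,j) = k_ik_j$ under the hypotheses \eqref{eq: cond 1 onsager}--\eqref{eq: cond 2 onsager}. Writing $f(S) \coloneq (|S|-1)^{-1}\sum_{i<j\in S} k_ik_j$, the task reduces to showing that among all $S\subset\{1,\ldots,N\}$ with $|S|\geq 2$, the maximum of $f$ is attained either at the full positive set $S_+^\ast \coloneq \{1,\ldots,N_1\}$ or at the full negative set $S_-^\ast \coloneq \{N_1+1,\ldots,N\}$, and uniquely so when the two candidate ratios in the statement differ strictly.

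First I would establish a one-sided monotonicity lemma: for $T \subsetneq S_+^\ast$ with $|T|\geq 2$ and any $j \in S_+^\ast\setminus T$, one has $f(T\cup\{j\}) > f(T)$. The elementary identity
\[
    f(T\cup\{j\}) - f(T) = \frac{k_j s_T - f(T)}{|T|},\qquad s_T \coloneq \sum_{i\in T} k_i,
\]
reduces the claim to $k_j s_T > f(T)$. Cauchy--Schwarz yields $f(T) \leq s_T^2/(2|T|)$, so it suffices to prove $s_T < 2|T|k_j$. With $m_+ \coloneq \min_{i\leq N_1}k_i$ and $M_+ \coloneq \max_{i\leq N_1}k_i$, the bounds $s_T \leq |T|M_+$ and $k_j \geq m_+$ reduce everything to $M_+ < 2m_+$, which is implied by the hypothesis $M_+ < (3/2)m_+$. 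The negative side is handled identically using the second condition. Iterating, $f$ strictly increases along any chain of one-sided subsets, so the one-sided restrictions of $f$ are uniquely maximized at $S_+^\ast$ and $S_-^\ast$, respectively.

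The main step, and principal obstacle, is ruling out mixed subsets $S = S_+ \sqcup S_-$ with $|S_\pm|\geq 1$, since $f$ is neither monotone nor concave on the power set. Setting $P = \sum_{i\in S_+}k_i$, $Q = \sum_{i\in S_-}|k_i|$, $K_\pm = \sum_{i\in S_\pm}k_i^2$, $A_+ \coloneq P^2-K_+$, $A_- \coloneq Q^2-K_-$, and $a_\pm \coloneq |S_\pm|-1$, a direct expansion gives
\[
    f(S) = \frac{A_+ + A_- - 2PQ}{2(a_+ + a_- + 1)}.
\]
When $|S_+|, |S_-|\geq 2$, assume without loss of generality $f(S_+) \geq f(S_-)$, i.e.\ $A_+ a_- \geq A_- a_+$. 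Cross-multiplying reduces $f(S) < f(S_+)$ to $a_+ A_- - A_+ a_- < A_+ + 2PQa_+$; the left side is non-positive by assumption while the right side is strictly positive, since $A_+ > 0$ as a sum of pairwise products of positive numbers with $|S_+|\geq 2$, and $PQ > 0$. Combined with the one-sided monotonicity this yields $f(S) < f(S_+^\ast)$. The degenerate cases $|S_\pm| = 1$ follow directly from the recursion above: adjoining a single element of opposite sign to a pure set $T$ strictly decreases $f$ since $k_{j_0} s_T < 0$ and $(|T|-1)/|T| < 1$. A brief case check confirms $f(S) < \max(f(S_+^\ast), f(S_-^\ast))$ in all mixed configurations.

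With \Cref{prob: min} thus solved, \Cref{thm:1} immediately yields the stated formula for $\beta^-$ via $\beta^- = -1/\max(f(S_+^\ast), f(S_-^\ast))$. When the two candidate ratios differ strictly, $G^-$ is a singleton consisting of either $S_+^\ast$ or $S_-^\ast$, so $\mathcal{N}^-$ consists of a single one-element nest and $\kappa^- = 1$. \Cref{thm:2} then identifies $\mathrm{supp}\,\mu^-$ as the corresponding collision diagonal, and \Cref{thm:3} with $\kappa^- = 1$ delivers $-N^{-1}\log\mathcal{Z}(\beta) = N^{-1}\log(\beta-\beta^-) + \mathcal{O}(1)$.
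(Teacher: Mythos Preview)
Your argument is correct and proceeds somewhat differently from the paper's. The paper first uses a \emph{removal} step---showing that if $k_\ell \sum_{i\in S\setminus\{\ell\}}k_i \leq 0$ for some $\ell\in S$ then deleting $\ell$ strictly increases the ratio---to conclude that any maximizer is one-signed, and only afterwards invokes the hypothesis $M_+ < \tfrac{3}{2}m_+$ in an explicit estimate to show that enlarging a one-signed set always helps. You reverse the logic: your one-sided monotonicity lemma is established first, via the clean Cauchy--Schwarz bound $f(T)\leq s_T^2/(2|T|)$, and in fact only requires the weaker condition $M_+ < 2m_+$ (sharp by the paper's own \Cref{ex: no bound on rel sizes of vorticities} at $a=2$). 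Your treatment of mixed sets is also different: rather than iteratively peeling off wrong-sign elements, you compare $f(S)$ directly to $f(S_+)$ via the identity $2(a_++a_-+1)f(S)=A_++A_--2PQ$ and a single cross-multiplication, which needs no hypothesis on the $k_i$ at all. Both routes arrive at the same conclusion $G^- \subseteq \{S_+^\ast, S_-^\ast\}$, after which the appeal to \Cref{thm:1,thm:2,thm:3} is identical. Your organization is arguably tighter and, as noted, establishes the result under the weaker constraint $\max|k_i| < 2\min|k_i|$ on each sign class.
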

\begin{proof}
Recall from \Cref{thm:1}, in view of \cref{eq: onsager model}, that $\beta^-$ is finite and given by
\begin{equation}
    \label{eq: max for onsager model in negative temperature}
    \beta^- = -\bigg( \max_{S} \frac{\sum_{i<j\in S} k_i k_j}{|S|-1} \bigg)^{-1}. 
\end{equation}
We claim that the maximum is achieved for a subset $S$ for which $\ell \in S$ implies $k_\ell \sum_{i\in S \setminus\{\ell\}} k_i > 0$. To seek a contradiction, assume that $S$ realizes the maximum in \cref{eq: max for onsager model in negative temperature} and that $\ell \in S$ is such that $k_\ell \sum_{i\in S} k_j \leq 0$. We can assume that $\sum_{i<j\in S}k_ik_j >0$. Estimate
\begin{equation*}
    (|S|-2)\sum_{i<j\in S} k_i k_j < (|S|-1)\bigg(\sum_{i<j\in S\setminus \{\ell\}}k_ik_j +k_\ell \sum_{i \in S\setminus\{\ell\}} k_i \bigg) \leq (|S|-1)\sum_{i<j\in S\setminus \{\ell\}} k_ik_j,
\end{equation*}
and after rearranging
\begin{equation*}
    \frac{\sum_{i<j\in S\setminus \{\ell\}} k_ik_j}{|S\setminus \{\ell\}|-1}>\frac{\sum_{i<j\in S} k_ik_j}{|S|-1},
\end{equation*}
which contradicts that $S$ realizes the maximum in \cref{eq: max for onsager model in negative temperature}, proving the claim. This implies that the maximum is in fact achieved for a $S$ for which all $k_i$'s with $i\in S$ have the same sign. To see this, assume, without loss of generality, that $\sum_{i\in S}k_i \geq 0$, and that $k_1 < 0$. Then, clearly, $k_1 \sum_{i\in S \setminus\{1\}}k_i < 0$.

Assuming further that $\max_{i:k_i>0} |k_i|< \frac{3}{2} \min_{i:k_i>0} |k_i|$ and similarly for the case of negative $k_i$'s, we claim that the maximum is achieved precisely for a subset $S$ with the maximum number of elements possible for which all $k_i's$ with $i\in S$ have the same sign. To see this, assume that $S$ realizes the maximum but is not maximal in the above sense and let $\ell \notin S$ be such that $k_\ell$ has the same sign as the $k_i$'s with $i\in S$. Write 
\[
    |S|\sum_{i<j\in S} k_i k_j = (|S|-1)\sum_{i<j\in S} k_ik_j +\sum_{i<j\in S} k_ik_j,
\]
and estimate
\begin{align*}
    \sum_{i<j\in S} k_ik_j &= \frac{1}{2} \bigg( \sum_{i,j\in S}k_i k_j - \sum_{i\in S}k_i^2 \bigg)\\
    &\leq \frac{1}{2} \bigg( |S|\max_{j: k_j>0}k_j \sum_{i\in S}k_i - \min_{j: k_j>0}k_j  \bigg) \sum_{i\in S}k_i \\
    &\leq \frac{1}{2}\Big( |S|\max_{j: k_j>0}k_j  - \min_{j: k_j>0}k_j \Big) \sum_{i\in S}k_i \\
    &\leq \frac{1}{2}\Big( \frac{3}{2} |S| - 1 \Big) \min_{j: k_j>0}k_j \sum_{i\in S}k_i \\
    &\leq  (|S|-1) \min_{j: k_j>0}k_j \sum_{i \in S}k_i,
\end{align*}
where, in the second-to-last step, we used our assumption, and in the last step the fact that $|S|\geq 2$. Thus, since $k_\ell \geq \min_{j: k_j>0}k_j$, we have
\[
    |S| \sum_{i<j\in S} k_i k_j \leq (|S|-1) \Big(  \sum_{i<j\in S} k_i k_j + k_\ell \sum_{i\in S} k_i \Big) = (|S|-1) \sum_{i<j\in S \cup \{\ell\}} k_i k_j.
\]
Rearranging yields
\begin{equation*}
    \frac{\sum_{i<j\in S\cup \{\ell\}} k_ik_j}{|S\cup \{\ell\}|-1} > \frac{\sum_{i<j\in S} k_ik_j }{|S|-1},
\end{equation*}
and therefore
\begin{equation*}
    \beta^- = \max\left[ \frac{-N_1+1}{\sum_{i<j\leq N_1}k_ik_j} , \frac{-N_2+1}{\sum_{N_1<i<j}k_ik_j}\right].
\end{equation*}
\end{proof}
\begin{example}
    \label{ex: no bound on rel sizes of vorticities}
    If the bounds \eqref{eq: cond 1 onsager} and \eqref{eq: cond 2 onsager} on the relative sizes of the $k_i$ are not assumed, then the conclusion of \Cref{cor: onsager model} need not hold. For instance, consider the case $N=3$ with $k_1=a, k_2=a$, $k_3=1$. Then 
    \begin{equation*}
        \frac{\sum_{i<j\in S}k_ik_j}{|S|-1} = \begin{cases}
            \frac{a^2+2a}{2} &\text{ if }S=\{1,2,3\}\\
            a^2 &\text{ if }S=\{1,2\}\\
            a &\text{ if }S=\{1,3\}\text{ or }S=\{2,3\}.
        \end{cases}
    \end{equation*}
    Hence, the maximum is attained uniquely for $S=\{1,2\}$ when $a$ is sufficiently large. 
\end{example}
\begin{remark}
    \label{rmk:gibbs}
    If $k_i = k \in \mathbb{R}$ for each $i=1,\hdots,N$, then the bounds \eqref{eq: cond 1 onsager} and \eqref{eq: cond 2 onsager} hold. By \Cref{cor: onsager model}, the support of $\mu^-$ is then given by $\{p_1=\dots=p_N\}$. In other words, total collapse occurs as the negative critical inverse temperature is approached. 
\end{remark}

\begin{remark}
    For $k_i=\sqrt{2/(N-1)}$, \Cref{cor: onsager model} yields $\beta^-=-1+1/N$, in agreement with \cite{Be} and \cite{Fu}. In this case, $\lim_{N\to \infty} -\beta^-$ is precisely the $\gamma$-invariant of the Fano manifold $\mathbb{CP}^1$. Moreover, $\lim_{N\to \infty} -\beta^-=1$ reflects the fact that $\mathbb{CP}^1$ is \textit{Gibbs semistable}, but not \textit{Gibbs stable}, see \cite{Be} and \cite{Fu}. 
\end{remark}

\section{Relations to other optimization problems}
\label{sec: other problems}

\subsection{Arboricity of a graph}

As in previous sections, let $C$ be the symmetric $N\times N$ matrix with zeros on the diagonal and entries $c(i,j)$ for $i<j$ and $c(j,i)$ for $i>j$. When $C$ is the adjacency matrix of an undirected graph $G$, the quantity $\lceil-T^- \rceil$, where $T^-$ is defined in \Cref{prob: min}, coincides with the \textit{arboricity} of $G$. The arboricity of an undirected graph $G$ is defined as the minimal number of forests that partition the edge set of $G$. By the main result in \cite{NW}, the arboricity of a graph is given by 
\begin{equation}
    \label{eq: arboricity}
    \max_{H\subseteq G} \bigg\lceil \frac{|E(H)|}{|V(H)|-1} \bigg\rceil
\end{equation}
where the maximum is taken over all subgraphs $H$ of the graph $G$, and where $E(H)$ and $V(H)$ denote the sets of edges and vertices of $H$, respectively. The quantity
\begin{equation*}
    \label{eq: fractional arboricity}
    \max_{H\subseteq G} \frac{|E(H)|}{|V(H)|-1}
\end{equation*}
is often referred to as the \textit{fractional arboricity} in the literature. 

In the more general setting of matroids, determining the arboricity \cref{eq: arboricity} of a graph generalizes to finding the minimal number of independent subsets into which a matroid can be partitioned. This is known as the \textit{matroid partitioning problem}, and a formula analogous to \cref{eq: arboricity} was established in \cite{Ed}, and there is also a version related to \cref{eq: fractional arboricity} in \cite{ScUl}.

Finally, note that when $C$ is an adjacency matrix, the quantity $-\beta^+$ coincides with the log canonical threshold of a certain ideal associated to a reduced hyperplane arrangement, as explained in Section 2. For a general reduced hyperplane arrangement, the log canonical threshold was computed in \cite{Mu}. The formula for the log canonical threshold established in \cite{Mu} appears to be related to the matroid partitioning problem \cite{ScUl} in a similar way as described above in the case of graphs.

\subsection{Random couplings and the Sherrington--Kirkpatrick model}

Let us now go back to the case of random couplings in \Cref{sec: random coupling}. Curiously, this problem resembles a problem coming from the statistical study of spin glasses, namely the so called Sherrington--Kirkpatrick model. In this model, the space of states is given by a spin vector $\sigma_i\in\{-1,1\}^N$ and the Hamiltonian is given by
\begin{equation}
    \label{eq: SK hamiltonian}
    H(\sigma)=-\sum_{i,j} J_{i j}\sigma_i\sigma_j +h\sum_i \sigma_i. 
\end{equation}
where $h$ is some fixed external field and $J_{i j}$, for $1\leq i,j\leq N$, are random interaction terms, typically assumed to be i.i.d.\ Gaussian. The main problem in this setting is to understand the ground state energy,
\begin{equation}
    \label{eq: ground state energy}
    H_\mathrm{min} \coloneq \min_{\sigma \in \{-1,1\}^N} H(\sigma),
\end{equation}
for example in terms of its expectation over the random couplings. As is standard in statistical mechanics, this can be reduced to computing the limit of the free energy
\[
    F(\beta)=\frac{1}{N}\mathbb{E}[\log Z_N(\beta)],
\]
where $Z_N(\beta)$ is the partition function
\[
    Z_N(\beta)=\sum_{\sigma\in \{-1,1\}^N} \exp(-\beta H(\sigma)),
\] 
as $\beta\to \infty$. A solution was found with heuristic methods by Parisi in \cite{Par}, the famous ``Parisi solution'', which was later rigorously proven by Talagrand in \cite{Ta}.

To see a connection with the main problem of this paper, we will assume that the spins take values in $\{0,1\}$ instead. That is, put $\sigma=\chi$ in \cref{eq: SK hamiltonian} and \cref{eq: ground state energy} where $\chi$ is a vector in $\{0,1\}^N$ as in \Cref{sec: eigenvalue bounds}. Note that the main result of \cite{Ta} includes this case as well. Let $C$ denote the symmetric $N\times N$ matrix with entries $c(i,j)$ for $i<j$, $c(j,i)$ for $i>j$ and zeros on the diagonal. Given a solution to \Cref{prob: min}, that is, a vector $\chi'\in \{0,1\}^N$ solving the optimization problem
\begin{equation}
    \label{eq: our opt problem}
    \max_{\substack{\chi\in\{0,1\}^N: \\ \chi^{\mathrm{T}}\chi \geq 2}} \frac{\chi^{\mathrm{T}}C\chi}{\chi^{\mathrm{T}}\chi-1} = -T^-,
\end{equation}
then $\chi'$ is also a solution to the optimization problem
\begin{equation}
    \label{eq: ground state energy 2}
    \min_{\substack{\chi\in\{0,1\}^N: \\ \chi^{\mathrm{T}}\chi \geq 2}} \bigg( -\chi^\mathrm{T} C \chi -T^- \sum_{i=1}^N \chi_i \bigg) =-T^-, 
\end{equation}
where we note that $\sum_{i=1}^N \chi_i = \chi^{\mathrm{T}} \chi$. Conversely, a solution to \cref{eq: ground state energy 2} for which $h=-T^-$, or any solution for which $H_\mathrm{min}=h$, is a solution to \cref{eq: our opt problem} as well. A similar statement is true also regarding \Cref{prob: max} and $T^+$. Thus, it seems that the Parisi solution of the Sherrington--Kirkpatrick model with spins taking values in $\{0,1\}$ might be useful to understand the critical temperature of the Log gas with random couplings. Indeed, for fixed deterministic couplings and for a carefully chosen external field $h$, the ground state energy of the Sherrington--Kirkpatrick model is precisely the negative of the critical temperature of the corresponding Log gas. Although, knowing the expectation of $H_\mathrm{min}$ does not directly yield any knowledge about the expectation of $T^-$. Moreover, whereas in the Sherrington--Kirkpatrick model, the coupling $c(i,j)$ are assumed i.i.d.\ Gaussian, the assumption in \cite{KiWa} seems more natural, where the charges $k_i$ are assumed i.i.d.\ Gaussian. This version of the Sherrington--Kirkpatrick model does not seem to have been considered previously.

\end{document}